\newcommand{\red}[1] {\textcolor{red}{#1}}
\newcommand{\blue}[1] {\textcolor{blue}{#1}}
\newcommand{\green}[1] {\textcolor{green}{#1}}
\newcommand{\node}{\mathsf{v}}
\newcommand{\VerticesD}{V_D}
\newcommand{\EdgesD}{E_D}
\newcommand{\TreeD}{\mathscr{T}_{D}}
\newcommand{\Path}{\mathscr{P}}
\newcommand{\PathsD}{\Path}
\newcommand{\EPathsD}{\Path}
\newcommand{\ppath}{\mathsf{p}}
\newcommand{\Def}{\overset{\text{def}}{=}}
\newcommand{\Order}{\mathscr{O}}
\newcommand{\Flip}{\mathscr{F}}
\newcommand{\lexprec}{\prec_{\text{lex}}}
\newcommand{\lexsucc}{\succ_{\text{lex}}}
\newcommand{\lexsuccpm}{\succ_{\text{lex}\,\pm}}
\newcommand{\lexprecpm}{\prec_{\text{lex}\,\pm}}
\setlist[itemize]{parsep=0pt, partopsep=0pt, topsep=5pt, itemsep=3pt}
\newcommand{\fitlabel}[1]{%
  \resizebox{1cm}{!}{\centering #1}%
}
\newcommand{\lex}{\mathrel{\prec_{\mathrm{lex}}}}
\newcommand{\FUNCTION}[2]{\STATE \textbf{function} \textsc{#1}(#2)}
\newcommand{\ENDFUNCTION}{\STATE \textbf{end function}}
\newcommand{\CALL}[2]{\textsc{#1}(#2)}
\crefname{hypothesis}{Hypothesis}{Hypotheses}
\title{On the Enumeration of All Unique Paths of Recombining Trinomial Trees\thanks{Submitted to the editors Oct.\ 3.}}
\author{
Ethan Torres\thanks{Department of Industrial Engineering, University of Illinois, Urbana-Champaign, IL (\email{ethanjt2@illinois.edu}).}
\and Ramavarapu Sreenivas\thanks{Department of Industrial Engineering, University of Illinois, Urbana-Champaign, IL (\email{rsree@illinois.edu}).}
\and Richard Sowers\thanks{Department of Industrial Engineering; Department of Mathematics, University of Illinois, Urbana-Champaign, IL (\email{r-sowers@illinois.edu}).}
}
\newcommand*{\addFileDependency}[1]{
  \typeout{(#1)}
  \@addtofilelist{#1}
  \IfFileExists{#1}{}{\typeout{No file #1.}}
}
\newcommand{\bk}{\mathbf{k}}
\newcommand{\bOne}{\mathbf{1}}
\newcommand{\lb}{\left\{}
\newcommand{\rb}{\right\}}
\newcommand{\vvalue}{\mathfrak{v}}
\newcommand{\Z}{\mathbb{Z}}
\newcommand{\DOWNTO}{\textbf{downto}}
\begin{document}

\maketitle
\begin{abstract}
    Recombining trinomial trees are a workhorse for modeling discrete-event systems in option pricing, logistics, and feedback control. Because each node stores a state-dependent quantity, a depth-$D$ tree naïvely yields $\mathscr{O}(3^{D})$ trajectories, making exhaustive enumeration infeasible. Under time-homogeneous dynamics, however, the graph exhibits two exploitable symmetries: (i) translational invariance of nodes and (ii) a canonical bijection between admissible paths and ordered tuples encoding weak compositions. Leveraging these, we introduce a \textbf{mass-shifting enumeration algorithm} that slides integer ``masses’’ through a cardinality tuple to generate exactly one representative per path-equivalence class while implicitly counting the associated weak compositions. This trims the search space by an exponential factor, enabling markedly deeper trees—and therefore tighter numerical approximations of the underlying evolution—to be processed in practice. We further derive an upper bound on the combinatorial counting expression that induces a theoretical lower bound on the algorithmic cost of $\sim \mathscr{O}\!\bigl(D^{1/2}\,1.612^{D}\bigr)$. This correspondence permits direct benchmarking while empirical tests, whose pseudo-code we provide, corroborate the bound, showing only a small constant overhead and substantial speedups over classical breadth-first traversal. Finally, we highlight structural links between our algorithmic/combinatorial framework and Motzkin paths with Narayana-type refinements, suggesting refined enumerative formulas and new potential analytic tools for path-dependent functionals.
\end{abstract}

\begin{keywords}
  Discrete Mathematics, Combinatorial Trees, Graph Algorithms, Algorithmic Complexity, Enumeration, Option Pricing
\end{keywords}

\begin{AMS}
  05C05, 05C85, 05A15, 68Q25, 68R10, 68Q17
\end{AMS}

\section{Introduction}
\label{introduction}
 Recombining trinomial trees themselves exist as an approximation of a branching process. This is apparent since it is not always true that a process will return to a previous value after evolving to one in the future. However, these structures are well researched as approximations in the context of discrete event control problems or option pricing where approximations such as these are more strongly held \cite{clifford2008trinomial,crack2024trinomial,dai2000asian}. With that aside, all trees, whether of a recombining nature or not, have huge exponential blow-ups in time and spatial complexity as they evolve. In particular, as the recombining trinomial tree evolves, it quickly generates trillions of distinct paths, whose explosion can only be controlled by algorithms and clever use of data structures. This problem has been thoroughly studied, most famously in Knuth's series on combinatorial algorithms \cite{knuth2011taocp4a,knuth2022taocp4b}.

In this paper, we aim to exploit the graphical structure of the tree, in order to avoid the combinatorial explosion typically associated with full path enumeration. This is made possible by using structural symmetries. More specifically, we use the property of translational equivalence among nodes, where identical values persist in depth shifts, preserving accumulated quantities along different paths. Consider the example shown in \Cref{fig:invariance}. The recombining tree consists of 25 nodes, each identified by a pair of integers: \emph{depth} (ranging from 0 at the root to 4 at the terminal level) and \emph{position} (indicating horizontal placement). Values are assigned purely by position, so all nodes at the same position share the same value. For instance, the root at position 0 has value 20, while its children at positions -1, 0, and 1 take values 18, 20, and 22. Paths then accumulate these position-based values from root to terminal nodes. In effect this is just a histogram representation—the node values themselves are illustrative and not essential.

Let's take a look at \Cref{fig:invariance} and consider the terminal node ending at position 0.  The blue, green, and red paths in \Cref{fig:invariance} each sum to $102$.  Each path visits the nodes with value 22 once, and visits the node with values 20 4 times; the sum is 
\begin{equation}\label{E:LebesgueSum} 4\times 20 + 1\times 22=102. \end{equation}
Generally, our setup allows for multiple paths to have the same accumulated value.

Trees are often used to discretize path integrals. In our case, we want to average the path--sum of values (e.g., terms like \(102\)) over all paths terminating at position~0. A na\"{\i}ve method would enumerate every such path, compute its path--sum, and then average---quickly becoming infeasible as the tree depth grows. Instead, one can enumerate all \emph{possible} path--sum values and count how many paths realize each. For example, in \Cref{fig:invariance} there are three colored terms corresponding to position~1 once and position~0, so \(N_{102}=3\). The average is then obtained by summing \(p\,N_p\) (e.g.\ \(3\times 102\)) over all path--sum values \(p\) and normalizing by \(\sum N_p\):
\[
\text{Average} \;=\; \frac{\sum_{p} p\,N_p}{\sum_{p} N_p}\,.
\]
This amounts to a Lebesgue--style integral---summing ``values \(\times\) measures''---rather than a Riemann--style sum over base points. This shift in viewpoint dramatically reduces computation by replacing an enumeration over exponentially many paths with an aggregation over the typically far smaller set of distinct path--sum values.

\begin{figure}[H]
\centering
\begin{minipage}{0.7\textwidth}
\centering
\begin{tikzpicture}[
    scale=0.4, transform shape, >=Stealth,
    every node/.style={
        circle, draw,
        minimum size=1.5cm,
        inner sep=0pt,
        align=center,
        font=\large        
    },
    node distance = 2cm and 1.2cm
    ]
    \node (n0) at (0,0) {\fitlabel{20}};
    
    \node (n1u) at (3,  2) {\fitlabel{22}};
    \node (n1m) at (3,  0) {\fitlabel{20}};
    \node (n1d) at (3, -2) {\fitlabel{18}};
    
    \node (n2uu) at (6,  4) {\fitlabel{24}};
    \node (n2um) at (6,  2) {\fitlabel{22}};
    \node (n2mm) at (6,  0) {\fitlabel{20}};
    \node (n2dm) at (6, -2) {\fitlabel{18}};
    \node (n2dd) at (6, -4) {\fitlabel{16}};
    
    \node (n3uuu) at (9,  6) {\fitlabel{26}};
    \node (n3uum) at (9,  4) {\fitlabel{24}};
    \node (n3umm) at (9,  2) {\fitlabel{22}};
    \node (n3mmm) at (9,  0) {\fitlabel{20}};
    \node (n3dmm) at (9, -2) {\fitlabel{18}};
    \node (n3ddm) at (9, -4) {\fitlabel{16}};
    \node (n3ddd) at (9, -6) {\fitlabel{14}};
    
    \node (n4uuuu) at (12,  8) {\fitlabel{28}};
    \node (n4uuum) at (12,  6) {\fitlabel{26}};
    \node (n4uumm) at (12,  4) {\fitlabel{24}};
    \node (n4ummm) at (12,  2) {\fitlabel{22}};
    \node (n4mmmm) at (12,  0) {\fitlabel{20}};
    \node (n4dmmm) at (12, -2) {\fitlabel{18}};
    \node (n4ddmm) at (12, -4) {\fitlabel{16}};
    \node (n4dddm) at (12, -6) {\fitlabel{14}};
    \node (n4dddd) at (12, -8) {\fitlabel{12}};
    
    \draw[red, thick] (n0) -- (n1u);
    \draw[cyan, thick] (n0) -- ($(n0)!0.5!(n1m)$);
    \draw[green, thick] ($(n0)!0.5!(n1m)$) -- (n1m);
    \draw (n0) -- (n1d);
    
    \draw (n1u) -- (n2uu);
    \draw (n1u) -- (n2um);
    \draw[red, thick] (n1u) -- (n2mm);
    
    \draw[green, thick] (n1m) -- (n2um);
    \draw[cyan, thick] (n1m) -- (n2mm);
    \draw (n1m) -- (n2dm);
    
    \draw (n1d) -- (n2mm);
    \draw (n1d) -- (n2dm);
    \draw (n1d) -- (n2dd);
    
    \draw (n2uu) -- (n3uuu);
    \draw (n2uu) -- (n3uum);
    \draw (n2uu) -- (n3umm);
    
    \draw (n2um) -- (n3uum);
    \draw (n2um) -- (n3umm);
    \draw[green, thick] (n2um) -- (n3mmm);
    
    \draw[cyan, thick] (n2mm) -- (n3umm);
    \draw[red, thick] (n2mm) -- (n3mmm);
    \draw (n2mm) -- (n3dmm);
    
    \draw (n2dm) -- (n3mmm);
    \draw (n2dm) -- (n3dmm);
    \draw (n2dm) -- (n3ddm);
    
    \draw (n2dd) -- (n3dmm);
    \draw (n2dd) -- (n3ddm);
    \draw (n2dd) -- (n3ddd);
    
    \draw (n3uuu) -- (n4uuuu);
    \draw (n3uuu) -- (n4uuum);
    \draw (n3uuu) -- (n4uumm);
    
    \draw (n3uum) -- (n4uuum);
    \draw (n3uum) -- (n4uumm);
    \draw (n3uum) -- (n4ummm);
    
    \draw (n3umm) -- (n4uumm);
    \draw (n3umm) -- (n4ummm);
    \draw[cyan, thick] (n3umm) -- (n4mmmm);
    
    \draw (n3mmm) -- (n4ummm);
    \draw[red,   thick] (n3mmm) -- ($(n3mmm)!0.5!(n4mmmm)$);   
    \draw[green, thick] ($(n3mmm)!0.5!(n4mmmm)$) -- (n4mmmm);
    \draw (n3mmm) -- (n4dmmm);
    
    \draw (n3dmm) -- (n4mmmm);
    \draw (n3dmm) -- (n4dmmm);
    \draw (n3dmm) -- (n4ddmm);
    
    \draw (n3ddm) -- (n4dmmm);
    \draw (n3ddm) -- (n4ddmm);
    \draw (n3ddm) -- (n4dddm);
    
    \draw (n3ddd) -- (n4ddmm);
    \draw (n3ddd) -- (n4dddm);
    \draw (n3ddd) -- (n4dddd);

    \draw[<-, thick] (-2.5, -9.5) -- (14, -9.5); 
    \draw[->, thick] (14, -9.5) -- (14, 9);      

    \pgftext[left, at={\pgfpoint{6cm}{-10.2cm}}]{\large Depth}
    
    \pgftext[bottom, at={\pgfpoint{14.7cm}{0cm}}, rotate=90]{\large Position}
    
\end{tikzpicture}
\caption{Recombining Tree with Values}
\label{fig:invariance}
\end{minipage}
\end{figure}
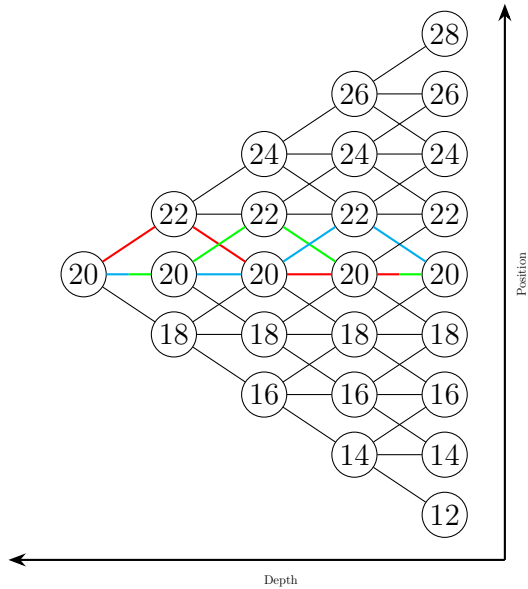
Note that in this figure we explicitly display the value stored at each position. In every other figure, we omit these values as they are implicitly present, and—under the labeling scheme introduced shortly—each node retains the same value throughout. Suppressing them in later diagrams keeps the visual presentation clear and uncluttered. We now introduce our notation for a tree of maximum depth $D$, $\TreeD$, and informally define the set of all paths it generates as $\PathsD$. We also define an informal reconstruction function $\EuScript{C}(\PathsD)$, where $\EuScript{C}(\PathsD)$ denotes all the possible combinations of paths, which constructs the tree by taking the union of all paths---i.e., $\TreeD = \EuScript{C}(\PathsD)$. This is formally addressed later in \Cref{setup}.

This relation highlights that the full tree is simply the collection of all possible path combinations. Due to the recombining nature of the tree, many of these paths are structurally redundant. That is, multiple paths contain the same vertices in different orders without altering the final path-dependent quantity. We refer to such redundancies as permutations of paths, denoted $\EuScript{P}(\PathsD)$. These permutations arise when vertex positions differ across depths but represent the same cumulative state. Hence, the set of unique paths in the tree can be informally defined as:
\begin{align}
\label{unique_formula}
\EuScript{U}(\PathsD) = \EuScript{C}(\PathsD) - \EuScript{P}(\PathsD).
\end{align}
Although trivially computing all paths and then subtracting duplicates could be done, such an approach fails to leverage the underlying invariance in \Cref{fig:invariance} and still incurs the full computational cost of path enumeration. In contrast, our algorithm offers a novel method for directly computing $\EuScript{U}(\PathsD)$, ensuring that we never generate permutations of previously generated paths. We achieve this by \textit{pre-pruning} permutations during the enumeration process, dramatically reducing the computational overhead. 

Although the argument here is informal, it illustrates the core contribution of our method: a principled way to generate only the distinct path structures in a recombining tree, without redundancy. This offers substantial advantages in discrete event control problems and other applications involving path-dependent dynamics where recombining trees are used to model system evolution. Our approach draws on decades of work aimed at accelerating path enumeration, including the algorithms of Eades and McKay \cite{eades1984algorithm}, Ehrlich’s loop‑less generation technique \cite{ehrlich1973loopless}, and the Steinhaus–Johnson–Trotter algorithm originally presented by Johnson \cite{Johnson1963}. We also refer to work by Ruskey and Williams \cite{ruskey2005prefix} and Cheng \cite{cheng2007generating} for more recent work in this particular field.

The question then arises, why specifically use a trinomial tree? A recombining trinomial lattice improves on the classic binomial grid by adding a "stay-put" or "no-change" branch, delivering countless benefits. With two free probabilities per step, it can match both the drift and the variance of the underlying process, giving second-order weak accuracy instead of the binomial's first-order, so far fewer time steps - and hence far fewer total nodes - are needed to hit a given error tolerance \cite{MaZhu2015}. The trinomial structure is also the discrete analog of a central difference scheme, which is unconditionally stable for many payoffs, in deference to the application in option pricing. This stability lets users of any algorithm that utilizes this underlying structure to stretch time increments two-to-four-fold without inducing oscillations \cite{AhnSong2007}. In control problems, the three branches also line up perfectly with the canonical actions "increase, hold, decrease" eliminating the dummy states a binomial grid must invent. In summary, a third branch provides just enough freedom to hit both drift and volatility, yielding higher accuracy, greater numerical stability, cleaner boundary handling, and better scalability - all while keeping the lattice recombining. 

\section{Setup}
\label{setup}
We can now formally develop our definitions of a recombining rooted trinomial tree that we will use for the remainder of our analysis. To simplify later calculations, we represent this tree as a directed graph embedded in the lattice $\mathbb{Z} \times \mathbb{Z}_{+}$. Given a nonnegative integer $D$, which represents the maximum depth of the tree, we define the set of \textbf{vertices} as:
\begin{equation}
\label{V:VertexDef}
    \VerticesD = \{(k,d)\in \Z\times \Z_+ : 0 \le d \le  D,\; |k| \le d\}
\end{equation}
and the set of \text{directed edges} as:
\begin{equation}
\label{E:EdgeDef}
    \EdgesD = \lb \left((k,d-1),(k+s,d)\right)\in V_D\times V_D:  s\in \{-1,0,1\}\rb 
\end{equation}
For a generic node $(k,d)$  in $V_D$, we will think of $d$ as the \emph{depth} coordinate and $k$ as the \emph{position} coordinate. In addition, $\EdgesD$ encodes the rule that from any node at position $k$ and depth $d-1$, we may move to depth $d$ by stepping to $k-1, k, k+1$. The resulting graph is denoted $\TreeD = (\VerticesD, \EdgesD)$, a formalization of the tree described in \Cref{introduction}, and is symmetric under reflection across the position axis $k = 0$, a property that will be important in later sections. We show a visualization of this graph in \Cref{fig:nodelabels} to build the graphical intuition for the reader:
\begin{figure}[H]
\centering
\begin{tikzpicture}[node distance=2cm, every node/.style={circle, draw, minimum size=0.8cm}, >=Stealth]
    \node (0) at (0, 4) {0,0};   
    \node (1) at (-1.5, 2.5) {-1,1}; 
    \node (2) at (0, 2.5) {0,1};  
    \node (3) at (1.5, 2.5) {1,1};  
    \node (4) at (-3, 1) {-2,2}; 
    \node (5) at (-1.5, 1) {-1,2}; 
    \node (6) at (0, 1) {0,2};  
    \node (7) at (1.5, 1) {1,2};  
    \node (8) at (3, 1) {2,2};  

    \draw [->] (0) -- (1); 
    \draw [->] (0) -- (2); 
    \draw [->] (0) -- (3); 

    \draw [->] (1) -- (4); 
    \draw [->] (1) -- (5); 
    \draw [->] (1) -- (6); 

    \draw [->] (2) -- (5); 
    \draw [->] (2) -- (6);
    \draw [->] (2) -- (7);

    \draw [->] (3) -- (6); 
    \draw [->] (3) -- (7);
    \draw [->] (3) -- (8);
\end{tikzpicture}
\caption{$\mathscr{T}_{D}$ with Node-Labelings for $V_{D}$}
\label{fig:nodelabels} 
\end{figure}
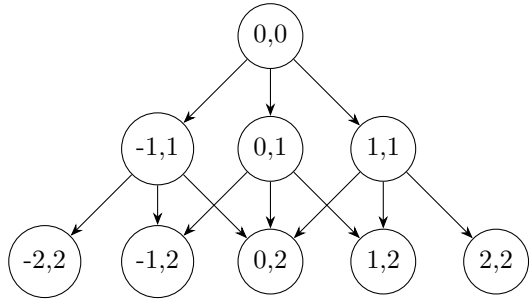
We are interested in accumulating values along the \emph{paths} in the tree.
A path in $\TreeD$ is a sequence $(\node_0,\node_1,\dots,\node_D)$ of vertices such that $\node_0=(0,0)$, each $\node_d\in\VerticesD$, and $(\node_{d-1},\node_d)\in\EdgesD$ for $d\in\{1,2,\dots,D\}$. We provide an example of such a path in \Cref{fig:recombining_tree}:
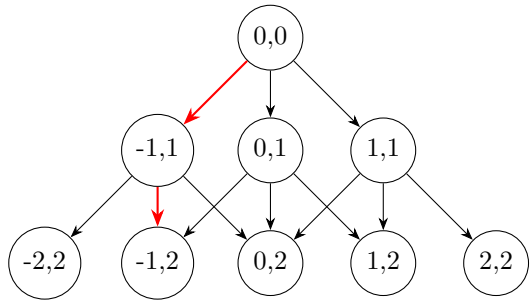
\begin{figure}[H]
\centering
\begin{tikzpicture}[node distance=2cm, every node/.style={circle, draw, minimum size=0.8cm}, >=Stealth]
    \node (0) at (0, 4) {0,0};   
    \node (1) at (-1.5, 2.5) {-1,1}; 
    \node (2) at (0, 2.5) {0,1};  
    \node (3) at (1.5, 2.5) {1,1};  
    \node (4) at (-3, 1) {-2,2}; 
    \node (5) at (-1.5, 1) {-1,2}; 
    \node (6) at (0, 1) {0,2};  
    \node (7) at (1.5, 1) {1,2};  
    \node (8) at (3, 1) {2,2};  

    \draw [->, color=red, thick] (0) -- (1); 
    \draw [->] (0) -- (2); 
    \draw [->] (0) -- (3); 

    \draw [->] (1) -- (4); 
    \draw [->, color=red, thick] (1) -- (5); 
    \draw [->] (1) -- (6); 

    \draw [->] (2) -- (5); 
    \draw [->] (2) -- (6);
    \draw [->] (2) -- (7);

    \draw [->] (3) -- (6); 
    \draw [->] (3) -- (7);
    \draw [->] (3) -- (8);
\end{tikzpicture}
\caption{Recombining Tree with Sample Path}
\label{fig:recombining_tree} 
\end{figure}

Let $\PathsD$ be the collection of all paths in $\TreeD$. We can regard
paths as walks by recording only the step increments. For each
\((s_1,s_2,\dots,s_D)\in\{-1,0,1\}^D\), set
\begin{equation}\label{E:walktopathA}
  \node_d \;\Def\; \biggl(\sum_{1\le d'\le d} s_{d'},\, d\biggr),
  \qquad d\in\{0,1,\dots,D\},
\end{equation}
with the convention that \(\sum_{\emptyset}\Def 0\). Then
\begin{equation}\label{E:walktopathB}
  (\node_0,\node_1,\dots,\node_D)\in \PathsD.
\end{equation}

Conversely, if
\begin{equation*}
  \ppath \;=\; \bigl((0,0),(k_1,1),\dots,(k_D,D)\bigr)\in \PathsD,
\end{equation*}
define the increments
\begin{equation*}
  s_d \;\Def\; k_d - k_{d-1}, \qquad d\in\{1,2,\dots,D\}.
\end{equation*}
This recovers the walk \((s_1,\dots,s_D)\in\{-1,0,1\}^D\).

Hence, the mapping between walks and paths is a bijection, and therefore
\begin{equation*}
  \lvert \PathsD \rvert \;=\; \lvert \{-1,0,1\}^D \rvert \;=\; 3^D.
\end{equation*}

We can also extract the depth-indexed position sequence via
\(\pi:\PathsD\to\mathbb{Z}^{D+1}\) defined by
\begin{equation}\label{E:piDef}
  \pi\bigl((0,0),(k_1,1),\dots,(k_D,D)\bigr)
  \;\Def\; (0,k_1,\dots,k_D).
\end{equation}

\section{Aggregation}
Let's return to \Cref{fig:invariance}.  The blue, green, and red paths all sum to 102 and end at the 20 node.  We can rewrite the sums as
\begin{equation*} 20\times 4+22\times 1 = 80+22=102, \end{equation*}
reflecting the fact that 20 occurs 4 times along each path, and 22 occurs once along each path (analogous to Lebesgue integration vs Riemannian integration).  Essentially, we will \emph{index} paths on the tree to first identify the blue path, compute the sum, and then skip over the green and red paths.   Doing in a way which avoids recursion, we will construct an efficient way to identify all aggregated values.

To ground the discussion, we open this section with the most straightforward technique for indexing (i.e., enumerating) every realization of a recombining trinomial tree: a depth-first recursive traversal that visits all branches until the target depth $D$. This classical approach was extremely important throughout our experiments because it served as a very stable baseline to check almost all of our algorithms for correctness. And while recursion does scale exponentially,  this baseline serves three goals
\begin{enumerate}
    \item  its transparency made manual verification easy;
    \item  it provided a clean performance yard-stick for the more sophisticated algorithms introduced later; and
    \item  it exposed the combinatorial structure of the tree in the clearest possible way.
\end{enumerate} 
Our version augments the vanilla DFS \Cref{alg:gencomb_dfs} with a hash-map memoisation scheme \Cref{alg:gencomb_hashing} that caches the value associated with each previously visited \(\{-1, 0,+1\}\)-triple; revisiting an identical sub-problem is therefore reduced to $\Order(1)$ amortized time \cite{michie1968memo}. Although there is a slight difference in the exact cost saved due to computational time that is added from using recursion on a function and adding function calls on the stack frame, we defer that discussion to \Cref{recursiveless}.

We can then transition to analyzing the computational complexity of iterating over a trinomial tree, which we can extrapolate from \cite{stojmenovic2000complexity} and our analysis in \Cref{setup}.

Without memoisation the time cost is
\begin{equation} \label{E:naivetime}
\text{Time}_{\text{naïve}}
  = 3^D \times D
  = \Order\left(D\cdot3^D\right),
\end{equation}
where the linear factor counts the computational cost associated with iterating along each $D$-step path. Caching removes the redundant work, leaving
\begin{equation}\label{E:memoisedtime}
\text{Time}_{\text{memoised}}
   \approx \Order\left(3^{D}\right)
\end{equation}
where we state that the computational time is approximate because it is constant amortized time.
This classical baseline establishes the reference point for all of our later algorithms and their optimizations.

\section{Removing Recursion from the Enumeration Process}
\label{recursiveless}
Recursion — though a classical way to generate paths — is often computationally expensive. Even with careful data structures, each additional path incurs another stack frame, and total running time can grow rapidly \cite{irons1961recursive}. To address this, we design a \emph{recursion-free} algorithm that (i) exhaustively enumerates all valid paths, (ii) produces no spurious output, (iii) limits the number of loop passes (avoiding heavy overhead), and (iv) extends naturally to the unique-representative algorithm developed later in \Cref{uniquepaths}.

Once paths are projected via \Cref{E:piDef}, it is natural to impose an order that supports a nonrecursive, “ping-pong’’ traversal: rather than recursing, we reset a loop cursor to a designated index and deterministically march through the next valid configuration. To set the stage, we introduce a canonical indexing by terminal node. Fix \(D\in\mathbb{N}\) and, for any \(k^*\in\{-D,-D+1,\dots,D\}\), define the set of all admissible depth-\(D\) paths that terminate at \((k^*,D)\):
\[
  \PathsD_{k^*}
  \;\Def\;
  \bigl\{
    (\node_{0},\node_{1},\dots,\node_{D})\in\PathsD
    :\ \node_{D}=(k^*,D)
  \bigr\}.
\]
This viewpoint lets us either sweep over all attainable \(k^*\) for a fixed depth \(D\), or restrict to a single target \(k^*\) when only that terminal node is of interest—providing flexibility in what the generator emits.

We impose a total order via the position projection \(\pi\) of \Cref{E:piDef}. The definition is recorded as:
\begin{equation}\label{S:LexOrd}
\begin{aligned}
  &\text{For } \bk=(k_0,\dots,k_D),\ \bk'=(k'_0,\dots,k'_D)\in\mathbb{Z}^{D+1},\\[4pt]
  &\bk \prec \bk'
  \iff
  \exists\, d^\star \in \{0,\dots,D\}\ \text{minimal such that } k_{d^\star}\neq k'_{d^\star}
  \ \text{and}\ k_{d^\star}<k'_{d^\star},\\[4pt]
  &\ppath \lexprec \ppath'
  \iff
  \pi(\ppath)\prec \pi(\ppath')\quad\text{for } \ppath,\ppath'\in\PathsD.
\end{aligned}
\end{equation}
In practice, this ordering mirrors “walking down’’ the tree while updating only a small number of coordinates at each step.

For implementation, it is convenient to begin with the strictly nonnegative representatives and then add a thin post-processing layer, after each paths is generated, to recover the paths with negative excursions. Define the strictly nonnegative (``positive’’) paths that end at \(k^*\) by
\[
  \EPathsD^+_{k^*}
  \;=\;
  \bigl\{\ppath\in \EPathsD_{k^*}:\ \pi(\ppath)\in\mathbb{Z}_+^{D+1}\bigr\}.
\]
We enumerate them in lexicographic order as
\[
  \EPathsD^+_{k^*} \;=\; \{\ppath^{(n)}:\ n=1,2,\dots,|\EPathsD^+_{k^*}|\},
  \qquad
  \ppath^{(n)}\lexprec \ppath^{(n+1)}.
\]
The lexicographically \emph{lowest} positive path is
\[
  \pi\!\bigl(\ppath^{(1)}\bigr)
  \;=\;
  \bigl(\underbrace{0,\dots,0}_{D+1-k^*},\,1,2,\dots,k^*\bigr).
\]
The lexicographically \emph{highest} positive path—i.e., the \(\lexprec\)-maximizer in \(\EPathsD^+_{k^*}\)—has \(\pi\)-image
\begin{equation}\label{E:highestpath}
  \begin{cases}
    \bigl(0,1,2,\dots,\tfrac{D+k^*}{2},\ \tfrac{D+k^*}{2}-1,\dots,k^*\bigr),
    & \text{if } D-k^*\ \text{is even},\\[6pt]
    \bigl(0,1,2,\dots,\tfrac{D+k^*-1}{2},\ \tfrac{D+k^*-1}{2},\ \tfrac{D+k^*-1}{2}-1,\dots,k^*\bigr),
    & \text{if } D-k^*\ \text{is odd},
  \end{cases}
\end{equation}
i.e., an initial monotone rise that peaks at \(\bigl\lfloor\tfrac{D+k^*}{2}\bigr\rfloor\) and then (possibly after a single stay at the peak in the odd case) descends to \(k^*\). This tuple serves as the \emph{maximal seed} for our nonrecursive enumeration.

\paragraph{Seed–and–march (recursion-free control)}
Beginning from this maximal seed, \Cref{alg:recursivelessgen} advances by a simple two-step local update that preserves the admissibility rules of \Cref{setup}. \emph{Tick–down:} select the largest index $j$ whose coordinate can be decreased by $1$ without violating feasibility (i.e., \textsc{ComputeTickDown} returns $j$ with \textsc{CheckValidity}$(\texttt{CURRENT},j,-1)=\texttt{True}$); if none exists, enumeration halts. \emph{Sweep–across:} treat the decremented unit as freed mass and greedily increment successive indices $i>j$ whenever \textsc{CheckValidity}$(\texttt{CURRENT},i,+1)$ holds, recording each valid intermediate path. This “tick–down then sweep–across” march visits the tuples of $\EPathsD^+_{k^*}$ in $\lexprec$ order, as defined in \Cref{S:LexOrd} using only local coordinate edits and $O(1)$ work per updated coordinate—no recursion, no whole-path comparisons. After each emitted positive path, a thin post-processing layer restores negative excursions to recover the full $\EPathsD_{k^*}$ without changing the control flow. This highlights the necessity of the lexicographical ordering we defined in \Cref{S:LexOrd} to remove the recursion from the path enumeration process. In \Cref{uniquepaths} we extend this ordering to the main construction we present in this paper, making explicit how it enables the removal of recursion in these enumeration processes.

For comparison and completeness, \Cref{alg:gencomb_hashing} presents a classical recursive DFS that explores children via $\{-1,0,+1\}$. Both schemes use the same lexicographic scaffold, but the DFS incurs branch exploration and $O(D)$ call-stack depth, whereas \Cref{alg:recursivelessgen} realizes the same enumeration with deterministic tick–down (sweep–across) updates in place—effectively making “GenComb” recursion-\emph{free} while preserving outputs (and, under the same ordering policy, the enumeration order). As a practical cross-check, one can hash canonical path encodings and verify that, for each $(D,k^{*})$, the multiset of outputs from the recursive DFS matches those from our stack-free generator; implementation details appear with the algorithms.

\paragraph{From positive representatives to all paths}
Most admissible paths reaching \(k^*\) will visit negative nodes. \Cref{appendix:excursions} supplies a light-weight post-processing: the \emph{flip family} \(\Flip(\cdot)\) (\Cref{eq:FlipFamily}) negates any subset of unlocked positive excursions, producing valid paths with the same endpoint (which can be inserted into \Cref{alg:recursivelessgen} without adding in recursion). \Cref{lem:flip-valid} guarantees validity and endpoint preservation, and \Cref{prop:flip-representation} gives the exact representation \Cref{E:fliprep} for \(k^*\ge 0\) and its sign-flipped counterpart \Cref{eq:flip-negative} for \(k^*<0\). Operationally, one may emit each nonnegative representative as soon as it is generated by \Cref{alg:recursivelessgen}, and then emit its flip family in any fixed subset order (e.g., lexicographic in the excursion indices), preserving a global total order.

\paragraph{Parity for seeding}
If a path has \(j_+\) up-steps, \(j_-\) down-steps, and \(j_0\) stays, then the relations in \Cref{E:algebra} and the parity constraint \Cref{eq:parity-j0} (\Cref{appendix:step-counts}) ensure that the chosen seed obeys the even–odd compatibility between \(D\), \(k^*\), and \(j_0\). In particular, the maximal seed \Cref{E:highestpath} is consistent with admissible counts and enables a deterministic, recursion-free enumeration pipeline: lexicographically generate \(\EPathsD^+_{k^*}\) (\Cref{alg:recursivelessgen}) from the maximal seed, then expand each representative by excursion flips to obtain \(\EPathsD_{k^*}\), preserving determinism and avoiding stack-based recursion, while remaining compatible with the unique-representative machinery in \Cref{uniquepaths}.

\paragraph{Removing Recursion from Unique Path Generation}
This section forms a cornerstone of the paper’s main results. While our excursion-based treatment and the permutation of negatives across excursion blocks (Appendix \ref{appendix:excursions}) are conceptually interesting, we relegate their detailed mechanics to the appendices to avoid obscuring the core contributions. What matters here are the ideas of \emph{maximal paths} and \emph{lexicographical ordering} (see \Cref{S:LexOrd}), which not only eliminate recursion from the forthcoming algorithms but also guide the combinatorial structure underlying them. By linking “walking down’’ the graph to “marching through’’ path tuples, these tools provide intuitive evidence for the correctness of our approach. Their practical integration—via the maximal seed defined in \Cref{E:highestpath}, the recursion-free generation \Cref{alg:recursivelessgen}, and the post-processing flip mechanism justified in \Cref{prop:flip-representation}—yields a complete, nonrecursive enumeration pipeline. With this groundwork laid and the parity constraints recorded in \Cref{appendix:step-counts}, we now proceed to the main results, confident that the reader has both a graphical and an algorithmic picture of how ordering improves and informs path enumeration. 

\section{Enumeration of Unique Path Combinations}
\label{uniquepaths}

As before, every $\ppath \in \EPathsD$ in a recombining tree $\TreeD$ can be mapped into a tuple via \Cref{E:piDef}. Our goal here is to enumerate only the \emph{unique} paths—those that differ in value, not merely by a translation or re-ordering of identical vertex positions. Although such permutations are rare near the root, they proliferate rapidly with depth, creating a large amount of redundancy. Ultimately, removing them does not alter the computational blow-up of the problem, but it \emph{does} extend the depth and breadth of the tree we can explore, allowing finer discretisation for certain discrete event problems that require it.

Building on the classic recursive framework reviewed in \Cref{setup} and the tuple-based, recursion-free iterator enabled by our ordering scheme presented in \Cref{recursiveless}, we now construct a closed-form combinatorial count and an efficient, recursiveless algorithm that enumerates every \textit{unique} path. The key observation is that many of the paths terminating at the same node \((k^{*},D)\) share the \emph{same multiset} of position values and thus the same aggregated value; differing only by shifts in visit order (see \Cref{fig:invariance}).  
In tuple form, this multiset is represented by a \emph{count vector} that records how many times each position $k$ appears. Thus we are left with the following high level algorithmic approach, which we will use this section to expand upon in detail, presenting the main results of the paper:
\begin{enumerate}
    \item \textbf{Count–vector representation}:  
          We treat each “cardinality tuple’’ as a count vector - i.e.\ a compact record of the multiplicities of each position in $\pi(\ppath)$.
    \item \textbf{Recursion-free generation}:  
      We then iterate directly over these count vectors (using the "ping-pong" iterator logic presented in \Cref{alg:recursivelessgen} in \Cref{recursiveless}), producing \emph{exactly one} tuple for every distinct vector and thereby eliminating shift-equivalent duplicates.
    \item \textbf{Natural Negative-Path Augmentation}: Finally, we          introduce a simple and natural augmentation which allows us to enumerate negative paths
\end{enumerate}
Because of the construction of the algorithm in this manner, as will become clear in this section, duplicates are suppressed \emph{a priori}. As a result, the tree is effectively pre-pruned while the entire procedure remains non-recursive, yielding a substantial reduction in runtime and memory consumption without an accuracy trade-off. This combined framework avoids both recursion and shift-equivalent paths, significantly enlarging the tractable region of the recombining tree.

\subsection{Closed-Form Combinatorics for Unique Path Enumeration}
\label{subsec:closed_form_cardinality}

Fix $\ppath \in \EPathsD_{k^*}$.  The viable positions are
\begin{equation*} (k_-,k_-+1,\dots k_+-1,k_+) \end{equation*}
where
\begin{equation} \label{E:Kbounds}
\begin{aligned} k_- &= \begin{cases} \tfrac{k^*-D}{2} &\text{if $D-k^*$ is even} \\
\tfrac{k^*-D+1}{2} &\text{if $D-k^*$ is odd} \end{cases} \\
k_+ &= \begin{cases} \tfrac{D+k^*}{2} &\text{if $D-k^*$ is even} \\
\tfrac{D+k^*-1}{2} &\text{if $D-k^*$ is odd} \end{cases} \end{aligned}\end{equation}
For each integer $k$ in $[k_-,k_+]$, define
\begin{equation*} c_k(\ppath)\Def \left|\lb d\in \{0,1\dots D\}: (\pi(\ppath))_d=k\rb\right|; \end{equation*}
$c_k(\ppath)$ is the number of times that the position equals $k$.  Combining these, we define the \emph{cardinality tuple}\footnote{Throughout, we write $\hat{c}$ for the image of an arbitrary path under the mapping $\hat{C}(\ppath)$.  
This convention streamlines the exposition and avoids repeatedly carrying the full function notation in formulas and text.}
\begin{equation}
\label{cardinality_tuple_formal_def}
\hat C(\ppath)\Def \begin{blockarray}{cccc}
 k_- & k_-+1 & \dots & k_+   \\
    \begin{block}{(cccc)}
    c_{k_-}(\ppath), & c_{k_-+1}(\ppath), & \dots  & c_{k_+}(\ppath) \\
    \end{block}
    \end{blockarray} \eqqcolon \hat{c} \end{equation}
$\hat C(\ppath)$ is the empirical count of the values taken by the path and, more specifically, $\hat{C}_{D,k^{*}}(\ppath)$ is the empirical count of the values taken by a path that end at a particular depth $D$ and position $k^{*}$. We formally define this mapping here:
\begin{equation}\label{card_map}
\hat{C}_{D,k^*} : \EPathsD_{k^*} \to \mathbb{N}^{[k_{-},k_{+}]}, 
\qquad \hat{C}_{D,k^{*}}(\ppath)=\bigl(c_{k}(\ppath)\bigr)_{k=k_{-}}^{k_{+}}.
\end{equation}
We then define the full set of all \textit{unique} cardinality tuples---that represent paths in a tree of depth $D$ that terminate at node $k^{*} $---as $\mathcal{C}_{D,k^{*}}$:
\begin{equation}
\label{full_card_set}
\mathcal{C}_{D,k^{*}} \Def \{\,\hat{C}_{D,k^{*}}(\ppath)=(c_{k}(\ppath))_{k=k_{-}}^{k_{+}}\in \mathbb{N}^{[k_{-},k_{+}]}\,\}.
\end{equation}
We then have that, for any $\hat{c} \in \mathcal{C}_{D,k^{*}}$
\begin{equation} \label{E:totalmass}
\sum_{k\in [k_-,k_+]} c_k(\ppath) = D+1
\end{equation}
This construction naturally accommodates additional weights at each node, as illustrated in \Cref{fig:invariance}.  
Let $\vvalue : [k_{-},k_{+}] \to \mathbb{R}$ denote any function assigning a weight \(\vvalue_{k}\) to each admissible position \(k\) (or node \((k,d)\)). Then, for a path \(\pi(\ppath)=(k_0,k_1,\dots,k_D)\), the cumulative value along the path is
\begin{equation}\label{E:efficientaggregation}
  \sum_{d=0}^{D} \vvalue_{k_d}
  \;=\;
  \sum_{k\in [k_{-},k_{+}]} c_{k}(\ppath)\,\vvalue_{k}.
\end{equation}
Thus the cardinality tuple provides an efficient way to aggregate weighted quantities along paths.

Before embarking on the formal construction of our algorithm and its associated combinatorial framework, we pause to introduce a sequence of key remarks. These remarks serve as the conceptual foundation of the argument: they articulate the principles that guarantee correctness and provide the scaffolding on which the later technical details rest. By presenting them explicitly at the outset, we make clear how each subsequent step of the construction is informed by—and consistent with—these foundational insights. For clarity, we present the remarks in modular form. This allows the reader to revisit them easily as the discussion unfolds, since many will be refined, extended, or specialized in later sections. In this way, the remarks serve both as a reference point and as a running thread that connects the evolving stages of the argument.

\begin{remark}[Set splitting]\label{set_splitting}
Due to the nature of the argument we will begin to construct, it becomes useful to introduce notation that separates the set \Cref{full_card_set} into "positive" $\hat{c}$ and "mixed" $\hat{c}$ (where mixed here means $c_{k}, \; k< 0$ are allowed):
\begin{equation*}
\mathcal{C}^{+}_{D,k^*} \Def \bigl\{\,\hat c\in \mathcal{C}_{D,k^{*}}:\; c_k=0\ \text{for all }k<0\,\bigr\},
\qquad
\mathcal{C}^{-}_{D,k^*} \Def \mathcal{C}_{D,k^{*}}\setminus \mathcal{C}^{+}_{D,k^{*}}.
\end{equation*}
Thus $\mathcal{C}^{+}_{D,k^*}$ consists of tuples arising from paths in $\EPathsD_{k^*}$ that never visit $k<0$, and $\mathcal{C}^{-}_{D,k^*}$ those that visit at least one negative position. Enumerating the positive part first and then augmenting the algorithm to allow negative visits yields a clean construction, without carrying along negative-index components that are identically zero in what will soon be understood as "the first phase".
\end{remark}

\begin{remark}[Cardinality tuples as equivalence-class keys]
A central guarantee of our algorithm’s correctness comes from interpreting cardinality tuples as minimal representatives of the equivalence classes of all unique paths in the tree. Each tuple indexes exactly one equivalence class, ensuring that paths are counted and ordered without duplication or omission. In this way, cardinality tuples capture precisely the minimal information required to reconstruct the tree. We therefore formalize the equivalence relation and its induced class structure here, while postponing the detailed proof to \Cref{proof_er}.

We define an equivalence relation on $\EPathsD_{k^*}$, using the definitions in~\eqref{card_map} and~\eqref{full_card_set} by
\begin{equation}\label{equivalence_relation}
\ppath \sim \ppath' \;\Longleftrightarrow\; \hat{C}_{D,k^*}(\ppath)=\hat{C}_{D,k^*}(\ppath').
\end{equation}
The equivalence class of $\ppath$ is thus:
\begin{equation}\label{equivalence_class}
[\ppath]_{\sim} 
= \hat{C}_{D,k^*}^{-1}\!\bigl(\hat{C}_{D,k^*}(\ppath)\bigr) 
= \{\ppath' \in \EPathsD_{k^*}:\ \hat{C}_{D,k^*}(\ppath')=\hat{C}_{D,k^*}(\ppath)\}.
\end{equation}
\end{remark}

\begin{remark}[Ordering]\label{card_ordering}
As discussed in \Cref{S:LexOrd}, lexicographic ordering is useful for implementation and best-case bounds. For cardinality tuples we \emph{impose} lexicographic order directly on $\mathcal{C}^{+}_{D,k^*}\subset \mathbb{N}^{[0,k_+]}$; via $\hat{C}_{D,k^*}$ this induces a representative-independent total order on the equivalence classes i.e. $\hat{c} \lex \hat{c}'$. This is why it was so important to introduce the lexicographical ordering, without it, we would not be able to construct this process on equivalence classes properly. (Note that lex order on raw paths $\pi(\ppath)$ need not descend to the quotient unless it is constant on the fibers of $\hat{C}_{D,k^*}$.). We will later extend the lexicographical ordering described here to all $\hat{c} \in \mathcal{C}_{D,k^{*}} \subset \mathbb{N}^{[k_{-},k_{+}]}$ in \Cref{complete_lex_ord}.
\end{remark}

With all these remarks fully established, we can now build an enumeration of unique elements of $\EPathsD_{k^*}$ organized by these \emph{cardinality tuples}. Using the path rules in $\EPathsD_{k^*}$, we will:
\begin{itemize}
  \item efficiently sequence all admissible (i.e., path-realizable) cardinality tuples, and
  \item reconstruct the elements of $\EPathsD_{k^*}$ from those tuples.
\end{itemize}

For the algorithm, we can fix a canonical starting tuple that captures the maximal excursion which we will refer to often as the \textbf{\textit{seed tuple}}. This seed tuple is analogous to what was referenced in \Cref{recursiveless} and defined formally in the path regime as \Cref{E:highestpath}. With $D$ and $k^* \geq 0$ fixed and $[0,k_+]$ as in \Cref{E:Kbounds}, we define this seed tuple $\hat{s}^{(0)}$, which is a valid $\hat{c}$ and is thus contained in $\mathcal{C}_{D,k^{*}}^{+}$, as the following element-wise:
\begin{equation}\label{E:starting_tuple}
c_k \;=\;
\begin{cases}
0, & k_{-} \leq k<0,\\[2pt]
1, & 0 \le k \le k^*-1,\\[2pt]
2, & k^* \le k \le k_+-1,\\[6pt]
\begin{cases}
1, & \text{$D-k^*$ even},\\
2, & \text{$D-k^*$ odd},
\end{cases}
& k = k_+,
\end{cases}
\qquad (k\in [k_-,k_+]).
\end{equation}
Equivalently, $(c_k)_{k=k_-}^{k_+}$ has $0$ for $k<0$, then $1$ on $[0,k^*)$, $2$ on $[k^*,k_+)$, and a top entry at $k_+$ determined by the parity of $D-k^*$; this is the “highest’’ tuple corresponding to \Cref{E:highestpath}, and $\sum_{k=k_-}^{k_+} c_k = D+1$ by \Cref{E:totalmass}. It serves as the starting point for the mass–shift enumeration. By \Cref{E:efficientaggregation} and \Cref{equivalence_relation}, iterating over admissible tuples enumerates all equivalence classes and---via \Cref{card_map}– \Cref{equivalence_class}---recovers the entire tree without redundancy and can be thought of, given the ordering scheme, as walking down the unique path representatives in the tree. Note that this maximal \emph{tuple} lies in $\mathcal{C}^{+}_{D,k^{*}}$. We will later extend the enumeration from $\mathcal{C}^{+}_{D,k^{*}}$ to the full $\mathcal{C}_{D,k^{*}}$ (thus including $\mathcal{C}^{-}_{D,k^{*}}$) via an augmentation introduced after the construction. For now, as in \Cref{set_splitting}, it is useful to begin with the maximal strictly positive case (including $k^{*} = 0$).

\begin{remark}\label{rem:mass_constraints}
Before the enumeration process begins, it is important to note a structural constraint that follows directly from the graph. At all times during the redistribution procedure, the following conditions must hold:
\begin{enumerate}
    \item For every index $0 \leq k \leq k^{*}-1$, where we are considering arbitrary $\hat{c} \in \mathcal{C}^{+}_{D,k^{*}}$, we must maintain $c_{k} \geq 1$. This reflects the fact that each such slot must retain at least one visit a node in order to allow a path to ascend from $0$ up to $k^{*}$.
    \item For every index $k^{*} \leq k \leq k_{+}-1$, the mass in position $c_{k}$ can only be decremented below~$2$ once the mass in the position immediately to its right, $c_{k+1}$, has already been reduced to $0$.  
    This expresses the requirement that any excursion reaching level $k \geq k^{*}$ must eventually return downwards to $k^{*}$, which necessitates having at least two visits at those intermediate heights.
\end{enumerate}
The only exception is the current highest occupied slot $c_{k}$ at the peak of an excursion: this slot may equal $1$, since its unit of mass can be redistributed while still preserving the condition that all admissible paths terminate at $k^{*}$. This graphical constraint also imposes a natural stopping condition to the algorithm, which we will discuss in detail later. 

These constraints in \Cref{rem:mass_constraints} are immediate from the structure of the underlying recombining tree and can be seen in a clear example where the red dotted line - which starts at $k^{*}$ - requires that all elements below it (highlighted in yellow) meet condition (1) in \Cref{rem:mass_constraints} and all elements at or above it meet condition (2) in \Cref{rem:mass_constraints}:

\begin{figure}[H]
\centering
\begin{tikzpicture}[
  x=0.3cm,y=0.3cm, >=Stealth,
  every node/.style={circle, draw, minimum size=0.5cm, inner sep=0pt},
  edgepos/.style={}
]
  \node[fill=cyan!30] (n0) at (0,0) {0};
  
  \node[fill=cyan!30] (n1u) at (3,  2) {1};
  \node (n1m) at (3,  0) {0};
  \node (n1d) at (3, -2) {-1};
  
  \node[fill=red!30] (n2uu) at (6,  4) {2};
  \node (n2um) at (6,  2) {1};
  \node (n2mm) at (6,  0) {0};
  \node (n2dm) at (6, -2) {-1};
  \node (n2dd) at (6, -4) {-2};
  
  \node[fill=green!30] (n3uuu) at (9,  6) {3};
  \node (n3uum) at (9,  4) {2};
  \node (n3umm) at (9,  2) {1};
  \node (n3mmm) at (9,  0) {0};
  \node (n3dmm) at (9, -2) {-1};
  \node (n3ddm) at (9, -4) {-2};
  \node (n3ddd) at (9, -6) {-3};
  
  \node (n4uuuu) at (12,  8) {4};
  \node (n4uuum) at (12,  6) {3};
  \node[fill=red!30] (n4uumm) at (12,  4) {2};
  \node (n4ummm) at (12,  2) {1};
  \node (n4mmmm) at (12,  0) {0};
  \node (n4dmmm) at (12, -2) {-1};
  \node (n4ddmm) at (12, -4) {-2};
  \node (n4dddm) at (12, -6) {-3};
  \node (n4dddd) at (12, -8) {-4};

  \draw[red, dotted, very thick] (-0.5,4) -- (15,4);
  \draw[green, thick] (n0) -- (n1u);
  \draw               (n0) -- (n1m);
  \draw               (n0) -- (n1d);
  
  \draw[green, thick] (n1u) -- (n2uu);
  \draw               (n1u) -- (n2um);
  \draw               (n1u) -- (n2mm);
  
  \draw               (n1m) -- (n2um);
  \draw               (n1m) -- (n2mm);
  \draw               (n1m) -- (n2dm);
  
  \draw               (n1d) -- (n2mm);
  \draw               (n1d) -- (n2dm);
  \draw               (n1d) -- (n2dd);
  
  \draw[green, thick] (n2uu) -- (n3uuu);
  \draw               (n2uu) -- (n3uum);
  \draw               (n2uu) -- (n3umm);
  
  \draw               (n2um) -- (n3uum);
  \draw               (n2um) -- (n3umm);
  \draw               (n2um) -- (n3mmm);
  
  \draw               (n2mm) -- (n3umm);
  \draw               (n2mm) -- (n3mmm);
  \draw               (n2mm) -- (n3dmm);
  
  \draw               (n2dm) -- (n3mmm);
  \draw               (n2dm) -- (n3dmm);
  \draw               (n2dm) -- (n3ddm);
  
  \draw               (n2dd) -- (n3dmm);
  \draw               (n2dd) -- (n3ddm);
  \draw               (n2dd) -- (n3ddd);
  
  \draw               (n3uuu) -- (n4uuuu);
  \draw               (n3uuu) -- (n4uuum);
  \draw[green, thick] (n3uuu) -- (n4uumm);
  
  \draw               (n3uum) -- (n4uuum);
  \draw               (n3uum) -- (n4uumm);
  \draw               (n3uum) -- (n4ummm);
  
  \draw               (n3umm) -- (n4uumm);
  \draw               (n3umm) -- (n4ummm);
  \draw               (n3umm) -- (n4mmmm);
  
  \draw               (n3mmm) -- (n4ummm);
  \draw               (n3mmm) -- (n4mmmm);
  \draw               (n3mmm) -- (n4dmmm);
  
  \draw               (n3dmm) -- (n4mmmm);
  \draw               (n3dmm) -- (n4dmmm);
  \draw               (n3dmm) -- (n4ddmm);
  
  \draw               (n3ddm) -- (n4dmmm);
  \draw               (n3ddm) -- (n4ddmm);
  \draw               (n3ddm) -- (n4dddm);
  
  \draw               (n3ddd) -- (n4ddmm);
  \draw               (n3ddd) -- (n4dddm);
  \draw               (n3ddd) -- (n4dddd);
\end{tikzpicture}
\caption{Example of Graphically Informed Constraints on $c_{k}$}
\label{fig:mass_constraints_graph}
\end{figure}

These constraints - informed by the structure of the graph - ensure that every admissible redistribution corresponds to a valid path ending at $k^{*}$ while also remaining true to the minimality of the equivalence class representation of the tree \Cref{equivalence_class}.
\end{remark}

\paragraph{\textbf{Working Example}}

Let's work through a simple example, taking $D=7$ and $k^*=2$. Using \Cref{E:Kbounds}, we have
\begin{equation*} k_-=-2 \qquad \text{and}\qquad k_+=4. \end{equation*}

From \Cref{E:highestpath}, the highest path with this $(k^{*},D) = (2,7)$ has position sequence
\begin{equation}\label{E:samplemaxpath}
(\blue{0},\blue{1},\red{2},\red{3},\green{4},\green{4},\red{3},\red{2})
\end{equation}
and this has cardinality tuple $\hat{c}$: 
\begin{equation}\label{E:toptuple}
\begin{blockarray}{ccccccc}
 -2 & -1 & 0 & 1 & 2 & 3 & 4   \\
    \begin{block}{(ccccccc)}
    0, & 0, & \blue{1}, & \blue{1}, & \red{2}, & \red{2}, & \green{2} \\
    \end{block}
    \end{blockarray}
\end{equation}
In line with \Cref{E:totalmass}, we have
\begin{equation*} 0 + 0 + 1 + 1 + 2 + 2 + 2 = 8. \end{equation*}

Note the structure of \Cref{E:toptuple} compared to \Cref{E:samplemaxpath}. The path goes up to $k^*=2$, then further up to $4$, and then back down to $k^*=2$. It visits $0$ and $1$ once (on the way up), and visits $2$ and $3$ twice (once up, once down). It also visits $4$ twice (since $D-k^*=5$ is odd), though it would visit $4$ once if $D-k^*$ were even. In \Cref{E:samplemaxpath} and \Cref{E:toptuple}, the “way up’’ is in blue, the top of the excursion above $k^*=2$ in green, and the remaining descent above $k^*=2$ in red, similar in spirit to \Cref{fig:mass_constraints_graph}.

Let's secondly consider the next highest path reaching $(k^{*},D) = (2,7)$ which has cardinality tuple $\hat{c}$:
\begin{equation}\label{E:nexthighestcardinality}
\begin{blockarray}{ccccccc}
 -2 & -1 & 0 & 1 & 2 & 3 & 4   \\
    \begin{block}{(ccccccc)}
    0, & 0, & 1, & 1, & 2, & 3, & 1 \\
    \end{block}
    \end{blockarray}
\end{equation}
These are the paths $\ppath$:
\begin{equation*}
(0,1,2,3,3,4,3,2) \qquad \text{and}\qquad (0,1,2,3,4,3,3,2).
\end{equation*}

In terms of the lexicographical ordering of \Cref{S:LexOrd},
\begin{equation*} 
(0,1,2,3,4,4,3,2) \lexsucc (0,1,2,3,4,3,3,2) \lexsucc (0,1,2,3,3,4,3,2). \end{equation*}
In other words, the highest path \Cref{E:samplemaxpath} is greater than the paths corresponding to \Cref{E:nexthighestcardinality}. 

What is especially nice is that this same lexicographical ordering is maintained in the context of the cardinality tuples themselves. Here we compare tuples **right to left** (decreasing $k$), so the $\hat{c} = (0,0,1,1,2,2,2)$ is read as $\hat{c} = (2,2,2,1,1,0,0)$ for the comparison. Consequently, we can compare \Cref{E:toptuple} with \Cref{E:nexthighestcardinality} as
\begin{equation}
    \begin{blockarray}{ccccccc}
 4 & 3 & 2 & 1 & 0 & -1 & -2   \\
    \begin{block}{(ccccccc)}
    \green{2}, & \red{2}, & \red{2}, & \blue{1}, & \blue{1}, & 0, & 0\\
    \end{block}
    \end{blockarray}
    \;\lexsucc\;
    \begin{blockarray}{ccccccc}
     4 & 3 & 2 & 1 & 0 & -1 & -2   \\
        \begin{block}{(ccccccc)}
        1, & 3,  & 2, & 1, & 1, & 0, & 0 \\
        \end{block}
        \end{blockarray}
\end{equation}
Consequently, the induced ordering between equivalence classes is determined by their cardinality tuples (under this right-to-left lex order we observed in \Cref{S:LexOrd} and \Cref{card_ordering}) and is independent of which path representative from each class is chosen. Here, we have pushed one of the parts of the top of the excursion above $k^*=2$ (at height $4$) back down into the lower parts of the excursion.

Let's now examine the third highest path reaching $(k^{*},D) = (2,7)$ with cardinality tuple
\begin{equation}\label{E:secondcardinalitytuple}
\begin{blockarray}{ccccccc}
 -2 & -1 & 0 & 1 & 2 & 3 & 4   \\
    \begin{block}{(ccccccc)}
    0, & 0, & 1, & 1, & 3, & 2, & 1 \\
    \end{block}
\end{blockarray}
\end{equation}
These correspond to the paths:
\begin{equation*}
(0,1,2,2,3,4,3,2) \; \text{ and } \; (0,1,2,3,4,3,2,2).
\end{equation*}
There are exactly $2$ paths corresponding to \Cref{E:secondcardinalitytuple}. In terms of the lexicographical ordering of \Cref{S:LexOrd}, we therefore have
\begin{gather*}
(0,1,2,3,4,4,3,2) \;\lexsucc\; (0,1,2,3,4,3,3,2) \;\lexsucc\; (0,1,2,3,4,3,2,2) \;\lexsucc\;\\ (0,1,2,3,3,4,3,2) \;\lexsucc\; (0,1,2,2,3,4,3,2).
\end{gather*}
In particular, the lower of the two paths from \Cref{E:nexthighestcardinality}, namely the path $(0,1,2,3,3,4,3,2),$ still lies above both paths from \Cref{E:secondcardinalitytuple} except the path $(0,1,2,3,4,3,2,2),$ which lies strictly between the two paths from \Cref{E:nexthighestcardinality}.

For the corresponding \emph{cardinality tuples}, we compare entries from right to left (decreasing $k$) as before. Writing tuples in the order $4,3,2,1,0,-1,-2,$ we have
\begin{equation*}
\begin{blockarray}{ccccccc}
 4 & 3 & 2 & 1 & 0 & -1 & -2 \\
 \begin{block}{(ccccccc)}
 1, & 3, & 2, & 1, & 1, & 0, & 0 \\
 \end{block}
\end{blockarray}
\;\;\lexsucc\;\;
\begin{blockarray}{ccccccc}
 4 & 3 & 2 & 1 & 0 & -1 & -2 \\
 \begin{block}{(ccccccc)}
 1, & 2, & 3, & 1, & 1, & 0, & 0 \\
 \end{block}
\end{blockarray}
\end{equation*}
so the class for \Cref{E:nexthighestcardinality} precedes the class for \Cref{E:secondcardinalitytuple} under this right-to-left lex order. As before, the induced ordering between equivalence classes is determined by their tuples and is independent of which path representative from each class is chosen. Here, compared with \Cref{E:nexthighestcardinality}, we have pushed \emph{one} visit from level $3$ down to level $2$ (the single visit at level $4$ remains unchanged).

\begin{remark}[\textit{Truncating} $\hat{c}$]
\label{truncation}
We also reinforce the fact that we have clearly restricted attention to the positive set \(\mathcal{C}^{+}_{D,k^*}\) as shown in \Cref{set_splitting} - which we will later augment to the full $\mathcal{C}_{D,k^{*}}$. Since every \(\hat c\in\mathcal{C}^{+}_{D,k^*}\) has \(c_k=0\) for all \(k<0\), it is convenient to work with the truncated (positive–restriction) map
\begin{equation}
\mathrm{Tr}:\ \mathcal{C}^{+}_{D,k^*}\longrightarrow \mathrm{Tr}(\mathcal{C}^{+}_{D,k^*})\subseteq \mathbb{N}^{[0,k_+]},\quad 
\mathrm{Tr}\bigl((c_k)_{k_-}^{k_+}\bigr)\Def (c_k)_{k=0}^{k_+},
\end{equation}
and we denote by \(\mathrm{Tr}^{-1}\) its inverse on this image:
\begin{equation}
\mathrm{Tr}^{-1}:\ \mathrm{Tr}(\mathcal{C}^{+}_{D,k^*})\longrightarrow \mathcal{C}^{+}_{D,k^*},\qquad 
\mathrm{Tr}^{-1}\bigl((c_k)_{0}^{k_+}\bigr)\Def (\underbrace{0,\dots,0}_{k= k_-,\dots,-1}, c_0,\dots, c_{k_+}).
\end{equation}
On \(\mathcal{C}^{+}_{D,k^*}\) we have \(\mathrm{Tr}^{-1}\circ \mathrm{Tr}=\mathrm{id}_{\mathcal{C}^{+}_{D,k^*}}\) and \(\mathrm{Tr}\circ \mathrm{Tr}^{-1}=\mathrm{id}_{\mathrm{Tr}(\mathcal{C}^{+}_{D,k^*})}\), so no information is lost by dropping the identically zero negative coordinates. The right-to-left lexicographic order is also preserved, because the discarded entries are equal (all zeros) for every element of \(\mathcal{C}^{+}_{D,k^*}\).
\end{remark}

Returning to the running example with \(D=7\), \(k^*=2\), \(k_-=-2\), \(k_+=4\), we see that the length-\(7\) tuples in \(\mathcal{C}^{+}_{7,2}\) (indexed by \(-2,-1,0,1,2,3,4\)) correspond bijectively (c.f. \Cref{truncation}) to length-\(5\) truncated tuples (indexed by \(0,1,2,3,4\)):
\begin{gather*}
(0,0,1,1,2,2,2)\ \longleftrightarrow\ (1,1,2,2,2),\\
(0,0,1,1,2,3,1)\ \longleftrightarrow\ (1,1,2,3,1),\\
(0,0,1,1,3,2,1)\ \longleftrightarrow\ (1,1,3,2,1),\\
(0,0,1,2,2,2,1)\ \longleftrightarrow\ (1,2,2,2,1),\\
(0,0,2,1,2,2,1)\ \longleftrightarrow\ (2,1,2,2,1).
\end{gather*}
where we completed this procedure with the last two entries into the list. Thus we notice that this procedure is actually the movement of mass $i=1$ through the tuple. The “move one unit of mass across the nonzero support” visualization can be carried out on the truncated tuples as an addition:
\begin{equation}
\label{mass_process}
\begin{aligned}
&(1,1,2,2,1) + (0,0,0,0,1) = (1,1,2,2,2)\\
\Longrightarrow\ &(1,1,2,2,1) + (0,0,0,1,0) = (1,1,2,3,1)\\
\Longrightarrow\ &(1,1,2,2,1) + (0,0,1,0,0) = (1,1,3,2,1)\\
\Longrightarrow\ &(1,1,2,2,1) + (0,1,0,0,0) = (1,2,2,2,1)\\
\Longrightarrow\ &(1,1,2,2,1) + (1,0,0,0,0) = (2,1,2,2,1)
\end{aligned}
\end{equation}
Note that whenever needed, we recover the full length-\(7\) representatives in \(\mathcal{C}^{+}_{7,2}\) by applying $\text{Tr}^{-1}$, i.e., by padding two leading zeros (the entries for \(k=-2,-1\)). In particular, the truncated tuple \((1,1,2,3,1)\) is exactly the positive-restriction of the length-\(7\) tuple \((0,0,1,1,2,3,1)\), and similarly for the others. Hence all ordering and enumeration arguments carry over verbatim on \(\mathcal{C}^{+}_{D,k^*}\) using truncated tuples, with no loss of correctness or generality.

This same process in \Cref{mass_process} can be captured by graphically traversing the tree. To do this, we consider a process where a unit of mass $i=1$ is subtracted from the final entry of $\hat{c}$, and this mass is ``moved'' leftward through the tuple, one position at a time. This results in the following sequence of configurations which we can interpret as a walk down the unique path representatives in the graph:
\begin{align}
\begin{array}{c}
(0, 0, \ldots, 0, 1) \\
(0, 0, \ldots, 1, 0) \\
\vdots \\
(0, 1, \ldots, 0, 0) \\
(1, 0, \ldots, 0, 0)
\end{array}
\quad \Longrightarrow \quad
\begin{array}{c}
    \text{Sample Path Configuration}\\
    \includegraphics[width=0.5\textwidth]{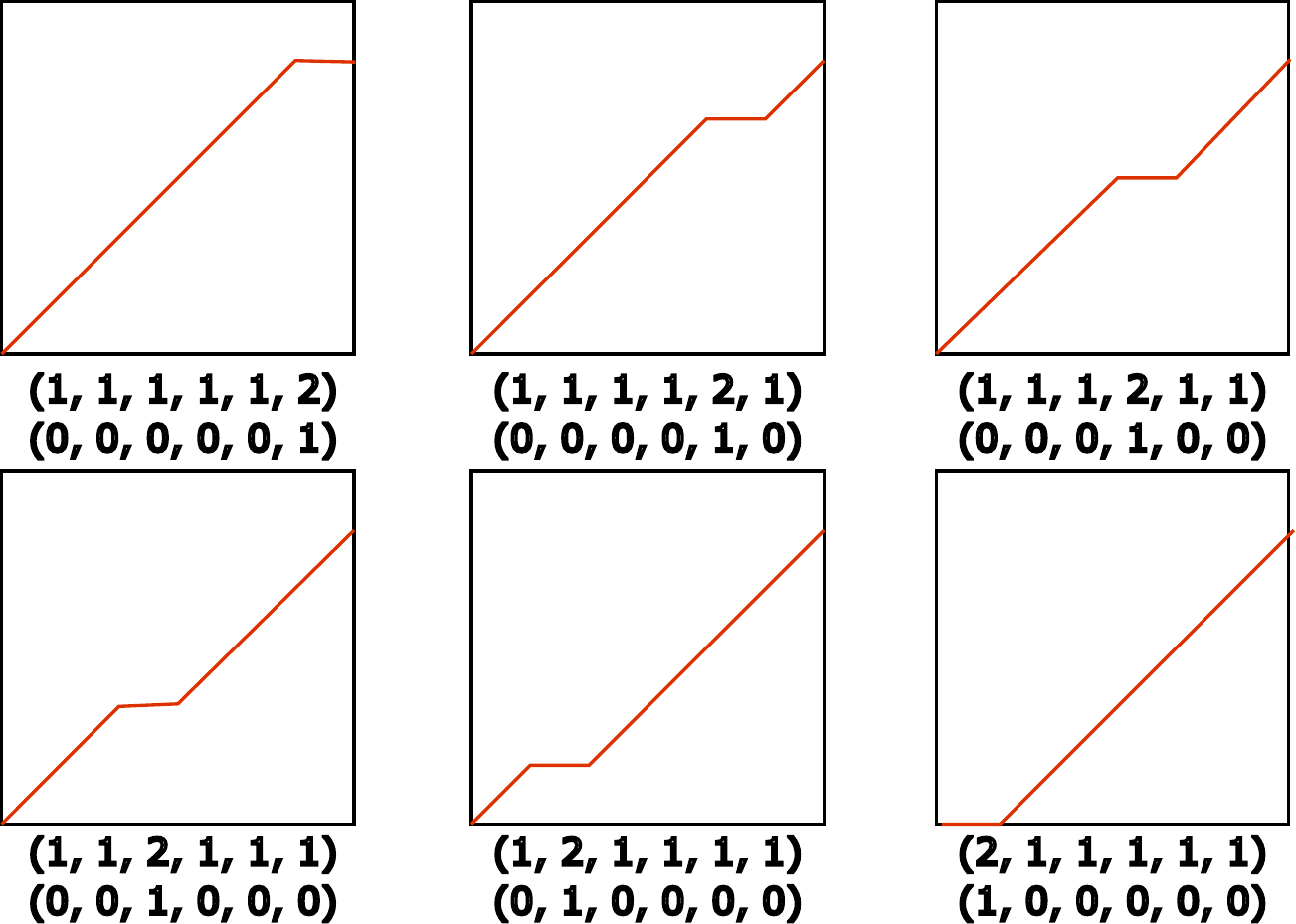}
    \label{first_partition}
\end{array}
\end{align}
As can be seen in \Cref{first_partition}, the “one–mass sweep’’ already exhausts all unique configurations when \(k^{*}=D-1\): by the equivalence relation \Cref{equivalence_relation} and the graphical observations in \Cref{first_partition}, shifting zero mass (the starting tuple \Cref{E:starting_tuple}) and then one unit from the rightmost entry across the support generates every admissible class at that depth. For deeper targets \(k^{*}<D-1\) (e.g., the toy case \(D=7,\ k^{*}=2\)), additional configurations appear and we must move larger amounts of mass while preserving feasibility of paths (hence of the cardinality tuples) to generate \textbf{all} the possible configurations.

To organize this, we define the number of \emph{available slots}, for $\hat{c} \in \mathcal{C}_{D,k^{*}}^{+}$ at a given stage to be the count of indices in the truncated support \([0,k_+]\) that a unit of mass may traverse from the current rightmost non-zero index:
\[
\ell \;\Def\; k_+ - 0 + 1 \;=\; k_+ + 1.
\]
In the toy example, \(\ell=5\). A unit of mass may be removed only from the final element \(c_{k_+}\) and shifted left through these \(\ell\) slots; this produces exactly the one–mass family. Once \(c_{k_+}\) is depleted, we proceed to \(c_{k_+-1}\), and \(\ell\) decreases by one (we never “move over zero slots,” which would duplicate earlier tuples or violate feasibility). To reach new configurations, we then have two units we need to move, and so on: at stage \(m\) we permit moving \(m\) units while the available slots shrink as the right boundary retreats. This schedule respects the graph constraints at every step and, under the lexicographic order, enumerates all admissible tuples without omission or repetition.

Returning to our toy example, we remove two units of mass from the rightmost entry of an arbitrary seed $\hat{c} \in \mathcal{C}_{7,2}^{+}$ whose elements are described by \Cref{E:starting_tuple}. The construction proceeds analogously to \Cref{mass_process}: the mass is first subtracted, then permuted across the $\ell$ available slots, and finally these permutations are added back to the originating tuple. This additive perspective serves only to illustrate the mechanism; the algorithm itself does not require explicitly performing these additions. 

Now that we are at stage $m = 2$, we need to move $i = m = 2$ mass across $\ell=4$ many slots. One would naturally ask the question, what about $(0,0,0,0,2) + (1,1,2,2,0)$? Our algorithm naturally captures this through our seed tuple $\hat{s}^{(0)}$ \Cref{E:starting_tuple}, so we don't want to count it again. But, by construction of the algorithm, we automatically exclude this case from being counted by how we start this second stage i.e.:
\begin{equation}
\label{starting_tuple_1}
\begin{tikzpicture}[baseline=(current bounding box.center)]
  \node (lp1) at (0,0) {$\Bigl($};

  \node (a1)  at (0.60,0) {$\boxed{2},$};
  \node (a2)  at (1.35,0) {$1,$};
  \node (a3)  at (2.10,0) {$2,$};
  \node (a4)  at (2.85,0) {$\boxed{2},$};
  \node (a5)  at (3.60,0) {$\boxed{1}$};

  \node (rp1) at (4.05,0) {$\Bigr)$};
  \node (b1)  at (5.40,0) {$\Longrightarrow (1,1,2,4,0)$};

  \draw[-{Latex[length=2mm]}]
    (a5.south) .. controls ($(a5.south)+(0,-0.9)$) and ($(a4.south)+(0,-0.9)$) ..
    node[pos=0.20, below=6pt, font=\footnotesize] {$-1$}
    (a4.south);

  \draw[-{Latex[length=2mm]}]
    (a1.south) .. controls ($(a1.south)+(0,-1.2)$) and ($(a4.south)+(0,-1.2)$) ..
    node[pos=0.18, below=8pt, font=\footnotesize] {$-1$}
    (a4.south);
\end{tikzpicture}
\end{equation}
\begin{remark}
    Note as well that this same process happened bridging our mass $m = i = 0$ stage to our $m = i = 1$ stage which built in this natural exclusion of the seed tuple $\hat{s}^{(0)}$ from the $i = 1$ case as well:
    \begin{equation}
    \label{starting_tuple_2}
    \begin{tikzpicture}[baseline=(current bounding box.center)]
      \node (lp1) at (0,0) {$\Bigl($};
    
      \node (a1)  at (0.60,0) {$\boxed{1},$};
      \node (a2)  at (1.35,0) {$1,$};
      \node (a3)  at (2.10,0) {$2,$};
      \node (a4)  at (2.85,0) {$\boxed{2},$};
      \node (a5)  at (3.60,0) {$\boxed{2}$};
    
      \node (rp1) at (4.05,0) {$\Bigr)$};
      \node (b1)  at (5.40,0) {$\Longrightarrow (1,1,2,3,1)$};
    
      \draw[-{Latex[length=2mm]}]
        (a5.south) .. controls ($(a5.south)+(0,-0.9)$) and ($(a4.south)+(0,-0.9)$) ..
        node[pos=0.20, below=6pt, font=\footnotesize] {$-1$}
        (a4.south);
    
      \draw[-{Latex[length=2mm]}]
        (a1.south) .. controls ($(a1.south)+(0,-1.2)$) and ($(a4.south)+(0,-1.2)$) ..
        node[pos=0.18, below=8pt, font=\footnotesize] {$-0$}
        (a4.south);
    \end{tikzpicture}
    \end{equation}
    In this case we removed $0$ mass from the leftmost entry (since taking more would violate the tree structure) and $1$ mass from the rightmost entry (recovering the seed configuration). Thus we moved $i=1$ across $\ell=4$ slots and $i=0$ across $\ell=5$ slots, exactly as predicted. Summing these possibilities gives $5$ distinct combinations, in agreement with \Cref{mass_process}.
\end{remark}

By construction, this procedure never under- or over-counts equivalence classes, preserving the minimality of the relation. Moreover, the same mechanism applies at every stage: the slot–mass combinatorics automatically encode the correct enumeration. With these tools equipped, we can finally enumerate the $m = i = 2$ stage for our toy example:
\begin{equation}
\label{second_partition}
\begin{aligned}
    &(1,1,2,2,0) + (0,0,0,2,0) = (1,1,2,4,0)\\
    \implies &(1,1,2,2,0) + (0,0,1,1,0) = (1,1,3,3,0)\\
    \implies &(1,1,2,2,0) + (0,1,0,1,0) = (1,2,2,3,0)\\
    \implies &(1,1,2,2,0) + (1,0,0,1,0) = (2,1,2,3,0)\\
    \implies &(1,1,2,2,0) + (0,0,2,0,0) = (1,1,4,2,0)\\
    \implies &(1,1,2,2,0) + (0,1,1,0,0) = (1,2,3,2,0)\\
    \implies &(1,1,2,2,0) + (1,0,1,0,0) = (2,1,3,2,0)\\
    \implies &(1,1,2,2,0) + (0,2,0,0,0) = (1,3,2,2,0)\\
    \implies &(1,1,2,2,0) + (1,1,0,0,0) = (2,2,2,2,0)\\
    \implies &(1,1,2,2,0) + (2,0,0,0,0) = (3,1,2,2,0)\\
\end{aligned}
\end{equation}
What \Cref{mass_process} and \Cref{second_partition} show is that we are effectively counting combinations of integer partitions over a truncated tuple of size~$\ell$, and then reinserting that truncated structure into the stage-$m$ starting tuple (cf. \Cref{starting_tuple_1}, \Cref{starting_tuple_2}). In particular, comparing \Cref{mass_process} and \Cref{second_partition} with the integer partitions of $i=1,2$ demonstrates that this identification is valid without loss of generality, accuracy, or minimality of the equivalence relation:
\begin{itemize}
    \item For $i=1$, there is only one partition: $(1)$.
    \item For $i=2$, there are two partitions: $(2)$ and $(1+1)$.
\end{itemize}
\newpage
At any stage, the validity of this procedure can be verified empirically by examining the tree graph for fixed $k^{*}$ and $D$ (both finite for case studies):
\begin{figure}[H]
\label{general_tree}
\centering
\begin{minipage}{0.3\textwidth}
\centering
\begin{tikzpicture}[scale=0.2, transform shape, >=Stealth,
  every node/.style={circle, draw, minimum size=1cm, font=\small},
  node distance=2cm and 1.2cm]
    \node (n0) at (0,0) {};
    
    \node (n1u) at (3,  2) {};
    \node (n1m) at (3,  0) {};
    \node (n1d) at (3, -2) {};
    
    \node (n2uu) at (6,  4) {};
    \node (n2um) at (6,  2) {};
    \node (n2mm) at (6,  0) {};
    \node (n2dm) at (6, -2) {};
    \node (n2dd) at (6, -4) {};
    
    \node (n3uuu) at (9,  6) {};
    \node (n3uum) at (9,  4) {};
    \node (n3umm) at (9,  2) {};
    \node (n3mmm) at (9,  0) {};
    \node (n3dmm) at (9, -2) {};
    \node (n3ddm) at (9, -4) {};
    \node (n3ddd) at (9, -6) {};
    
    \node (n4uuuu) at (12,  8) {};
    \node (n4uuum) at (12,  6) {};
    \node (n4uumm) at (12,  4) {};
    \node (n4ummm) at (12,  2) {};
    \node (n4mmmm) at (12,  0) {};
    \node (n4dmmm) at (12, -2) {};
    \node (n4ddmm) at (12, -4) {};
    \node (n4dddm) at (12, -6) {};
    \node (n4dddd) at (12, -8) {};
    
    \draw[red, thick] (n0) -- (n1u);
    \draw (n0) -- (n1m);
    \draw (n0) -- (n1d);
    
    \draw[red, thick] (n1u) -- (n2uu);
    \draw (n1u) -- (n2um);
    \draw (n1u) -- (n2mm);
    
    \draw (n1m) -- (n2um);
    \draw (n1m) -- (n2mm);
    \draw (n1m) -- (n2dm);
    
    \draw (n1d) -- (n2mm);
    \draw (n1d) -- (n2dm);
    \draw (n1d) -- (n2dd);
    
    \draw[red, thick] (n2uu) -- (n3uuu);
    \draw (n2uu) -- (n3uum);
    \draw (n2uu) -- (n3umm);
    
    \draw (n2um) -- (n3uum);
    \draw (n2um) -- (n3umm);
    \draw (n2um) -- (n3mmm);
    
    \draw (n2mm) -- (n3umm);
    \draw (n2mm) -- (n3mmm);
    \draw (n2mm) -- (n3dmm);
    
    \draw (n2dm) -- (n3mmm);
    \draw (n2dm) -- (n3dmm);
    \draw (n2dm) -- (n3ddm);
    
    \draw (n2dd) -- (n3dmm);
    \draw (n2dd) -- (n3ddm);
    \draw (n2dd) -- (n3ddd);
    
    \draw[red, thick] (n3uuu) -- (n4uuuu);
    \draw (n3uuu) -- (n4uuum);
    \draw (n3uuu) -- (n4uumm);
    
    \draw (n3uum) -- (n4uuum);
    \draw (n3uum) -- (n4uumm);
    \draw (n3uum) -- (n4ummm);
    
    \draw (n3umm) -- (n4uumm);
    \draw (n3umm) -- (n4ummm);
    \draw (n3umm) -- (n4mmmm);
    
    \draw (n3mmm) -- (n4ummm);
    \draw (n3mmm) -- (n4mmmm);
    \draw (n3mmm) -- (n4dmmm);
    
    \draw (n3dmm) -- (n4mmmm);
    \draw (n3dmm) -- (n4dmmm);
    \draw (n3dmm) -- (n4ddmm);
    
    \draw (n3ddm) -- (n4dmmm);
    \draw (n3ddm) -- (n4ddmm);
    \draw (n3ddm) -- (n4dddm);
    
    \draw (n3ddd) -- (n4ddmm);
    \draw (n3ddd) -- (n4dddm);
    \draw (n3ddd) -- (n4dddd);
\end{tikzpicture}
\caption*{$k^{*} = D, \, \, D \geq 0$}
\end{minipage}
\hspace{0.1em}
\begin{minipage}{0.3\textwidth}
\centering
\begin{tikzpicture}[scale=0.2, transform shape, >=Stealth,
  every node/.style={circle, draw, minimum size=1cm, font=\small},
  node distance=2cm and 1.2cm]
    \node (n0) at (0,0) {};
    
    \node (n1u) at (3,  2) {};
    \node (n1m) at (3,  0) {};
    \node (n1d) at (3, -2) {};
    
    \node (n2uu) at (6,  4) {};
    \node (n2um) at (6,  2) {};
    \node (n2mm) at (6,  0) {};
    \node (n2dm) at (6, -2) {};
    \node (n2dd) at (6, -4) {};
    
    \node (n3uuu) at (9,  6) {};
    \node (n3uum) at (9,  4) {};
    \node (n3umm) at (9,  2) {};
    \node (n3mmm) at (9,  0) {};
    \node (n3dmm) at (9, -2) {};
    \node (n3ddm) at (9, -4) {};
    \node (n3ddd) at (9, -6) {};
    
    \node (n4uuuu) at (12,  8) {};
    \node (n4uuum) at (12,  6) {};
    \node (n4uumm) at (12,  4) {};
    \node (n4ummm) at (12,  2) {};
    \node (n4mmmm) at (12,  0) {};
    \node (n4dmmm) at (12, -2) {};
    \node (n4ddmm) at (12, -4) {};
    \node (n4dddm) at (12, -6) {};
    \node (n4dddd) at (12, -8) {};
    
    \draw[cyan, thick] (n0) -- (n1u);
    \draw[cyan, thick] (n0) -- (n1m);
    \draw (n0) -- (n1d);
    
    \draw[cyan, thick] (n1u) -- (n2uu);
    \draw (n1u) -- (n2um);
    \draw (n1u) -- (n2mm);
    
    \draw[cyan, thick] (n1m) -- (n2um);
    \draw (n1m) -- (n2mm);
    \draw (n1m) -- (n2dm);
    
    \draw (n1d) -- (n2mm);
    \draw (n1d) -- (n2dm);
    \draw (n1d) -- (n2dd);
    
    \draw[cyan, thick] (n2uu) -- (n3uuu);
    \draw (n2uu) -- (n3uum);
    \draw (n2uu) -- (n3umm);
    
    \draw[cyan, thick] (n2um) -- (n3uum);
    \draw (n2um) -- (n3umm);
    \draw (n2um) -- (n3mmm);
    
    \draw (n2mm) -- (n3umm);
    \draw (n2mm) -- (n3mmm);
    \draw (n2mm) -- (n3dmm);
    
    \draw (n2dm) -- (n3mmm);
    \draw (n2dm) -- (n3dmm);
    \draw (n2dm) -- (n3ddm);
    
    \draw (n2dd) -- (n3dmm);
    \draw (n2dd) -- (n3ddm);
    \draw (n2dd) -- (n3ddd);
    
    \draw (n3uuu) -- (n4uuuu);
    \draw[cyan, thick] (n3uuu) -- (n4uuum);
    \draw (n3uuu) -- (n4uumm);
    
    \draw[cyan, thick] (n3uum) -- (n4uuum);
    \draw (n3uum) -- (n4uumm);
    \draw (n3uum) -- (n4ummm);
    
    \draw (n3umm) -- (n4uumm);
    \draw (n3umm) -- (n4ummm);
    \draw (n3umm) -- (n4mmmm);
    
    \draw (n3mmm) -- (n4ummm);
    \draw (n3mmm) -- (n4mmmm);
    \draw (n3mmm) -- (n4dmmm);
    
    \draw (n3dmm) -- (n4mmmm);
    \draw (n3dmm) -- (n4dmmm);
    \draw (n3dmm) -- (n4ddmm);
    
    \draw (n3ddm) -- (n4dmmm);
    \draw (n3ddm) -- (n4ddmm);
    \draw (n3ddm) -- (n4dddm);
    
    \draw (n3ddd) -- (n4ddmm);
    \draw (n3ddd) -- (n4dddm);
    \draw (n3ddd) -- (n4dddd);
\end{tikzpicture}
\caption*{$k^{*} = D - 1, \, \, D - 1 \geq 0$}
\end{minipage}
\begin{minipage}{0.3\textwidth}
\centering
\begin{tikzpicture}[scale=0.2, transform shape, >=Stealth,
  every node/.style={circle, draw, minimum size=1cm, font=\small},
  node distance=2cm and 1.2cm]
    \node (n0) at (0,0) {};
    
    \node (n1u) at (3,  2) {};
    \node (n1m) at (3,  0) {};
    \node (n1d) at (3, -2) {};
    
    \node (n2uu) at (6,  4) {};
    \node (n2um) at (6,  2) {};
    \node (n2mm) at (6,  0) {};
    \node (n2dm) at (6, -2) {};
    \node (n2dd) at (6, -4) {};
    
    \node (n3uuu) at (9,  6) {};
    \node (n3uum) at (9,  4) {};
    \node (n3umm) at (9,  2) {};
    \node (n3mmm) at (9,  0) {};
    \node (n3dmm) at (9, -2) {};
    \node (n3ddm) at (9, -4) {};
    \node (n3ddd) at (9, -6) {};
    
    \node (n4uuuu) at (12,  8) {};
    \node (n4uuum) at (12,  6) {};
    \node (n4uumm) at (12,  4) {};
    \node (n4ummm) at (12,  2) {};
    \node (n4mmmm) at (12,  0) {};
    \node (n4dmmm) at (12, -2) {};
    \node (n4ddmm) at (12, -4) {};
    \node (n4dddm) at (12, -6) {};
    \node (n4dddd) at (12, -8) {};
    
    \draw[green, thick] (n0) -- (n1u);
    \draw (n0) -- (n1m);
    \draw[green, thick] (n0) -- (n1d);
    
    \draw[green, thick] (n1u) -- (n2uu);
    \draw (n1u) -- (n2um);
    \draw (n1u) -- (n2mm);
    
    \draw (n1m) -- (n2um);
    \draw (n1m) -- (n2mm);
    \draw (n1m) -- (n2dm);
    
    \draw[green, thick] (n1d) -- (n2mm);
    \draw (n1d) -- (n2dm);
    \draw (n1d) -- (n2dd);
    
    \draw[green, thick] (n2uu) -- (n3uuu);
    \draw (n2uu) -- (n3uum);
    \draw (n2uu) -- (n3umm);
    
    \draw (n2um) -- (n3uum);
    \draw (n2um) -- (n3umm);
    \draw (n2um) -- (n3mmm);
    
    \draw[green, thick] (n2mm) -- (n3umm);
    \draw (n2mm) -- (n3mmm);
    \draw (n2mm) -- (n3dmm);
    
    \draw (n2dm) -- (n3mmm);
    \draw (n2dm) -- (n3dmm);
    \draw (n2dm) -- (n3ddm);
    
    \draw (n2dd) -- (n3dmm);
    \draw (n2dd) -- (n3ddm);
    \draw (n2dd) -- (n3ddd);
    
    \draw (n3uuu) -- (n4uuuu);
    \draw (n3uuu) -- (n4uuum);
    \draw[green, thick] (n3uuu) -- (n4uumm);
    
    \draw (n3uum) -- (n4uuum);
    \draw (n3uum) -- (n4uumm);
    \draw (n3uum) -- (n4ummm);
    
    \draw[green, thick] (n3umm) -- (n4uumm);
    \draw (n3umm) -- (n4ummm);
    \draw (n3umm) -- (n4mmmm);
    
    \draw (n3mmm) -- (n4ummm);
    \draw (n3mmm) -- (n4mmmm);
    \draw (n3mmm) -- (n4dmmm);
    
    \draw (n3dmm) -- (n4mmmm);
    \draw (n3dmm) -- (n4dmmm);
    \draw (n3dmm) -- (n4ddmm);
    
    \draw (n3ddm) -- (n4dmmm);
    \draw (n3ddm) -- (n4ddmm);
    \draw (n3ddm) -- (n4dddm);
    
    \draw (n3ddd) -- (n4ddmm);
    \draw (n3ddd) -- (n4dddm);
    \draw (n3ddd) -- (n4dddd);
\end{tikzpicture}
\caption*{$k^{*} = D - 2, \, \, D - 2 \geq 0$}
\end{minipage}
\hspace{0.1em}
\begin{minipage}{0.3\textwidth}
\centering
\begin{tikzpicture}[scale=0.2, transform shape, >=Stealth,
  every node/.style={circle, draw, minimum size=1cm, font=\small},
  node distance=2cm and 1.2cm]
    \node (n0) at (0,0) {};
    
    \node (n1u) at (3,  2) {};
    \node (n1m) at (3,  0) {};
    \node (n1d) at (3, -2) {};
    
    \node (n2uu) at (6,  4) {};
    \node (n2um) at (6,  2) {};
    \node (n2mm) at (6,  0) {};
    \node (n2dm) at (6, -2) {};
    \node (n2dd) at (6, -4) {};
    
    \node (n3uuu) at (9,  6) {};
    \node (n3uum) at (9,  4) {};
    \node (n3umm) at (9,  2) {};
    \node (n3mmm) at (9,  0) {};
    \node (n3dmm) at (9, -2) {};
    \node (n3ddm) at (9, -4) {};
    \node (n3ddd) at (9, -6) {};
    
    \node (n4uuuu) at (12,  8) {};
    \node (n4uuum) at (12,  6) {};
    \node (n4uumm) at (12,  4) {};
    \node (n4ummm) at (12,  2) {};
    \node (n4mmmm) at (12,  0) {};
    \node (n4dmmm) at (12, -2) {};
    \node (n4ddmm) at (12, -4) {};
    \node (n4dddm) at (12, -6) {};
    \node (n4dddd) at (12, -8) {};
    
    \draw[magenta, thick] (n0) -- (n1u);
    \draw (n0) -- (n1m);
    \draw[magenta, thick] (n0) -- (n1d);
    
    \draw[magenta, thick] (n1u) -- (n2uu);
    \draw (n1u) -- (n2um);
    \draw (n1u) -- (n2mm);
    
    \draw (n1m) -- (n2um);
    \draw (n1m) -- (n2mm);
    \draw (n1m) -- (n2dm);
    
    \draw (n1d) -- (n2mm);
    \draw (n1d) -- (n2dm);
    \draw[magenta, thick] (n1d) -- (n2dd);
    
    \draw (n2uu) -- (n3uuu);
    \draw (n2uu) -- (n3uum);
    \draw[magenta, thick] (n2uu) -- (n3umm);
    
    \draw (n2um) -- (n3uum);
    \draw (n2um) -- (n3umm);
    \draw (n2um) -- (n3mmm);
    
    \draw (n2mm) -- (n3umm);
    \draw (n2mm) -- (n3mmm);
    \draw (n2mm) -- (n3dmm);
    
    \draw (n2dm) -- (n3mmm);
    \draw (n2dm) -- (n3dmm);
    \draw (n2dm) -- (n3ddm);
    
    \draw[magenta, thick] (n2dd) -- (n3dmm);
    \draw (n2dd) -- (n3ddm);
    \draw (n2dd) -- (n3ddd);
    
    \draw (n3uuu) -- (n4uuuu);
    \draw (n3uuu) -- (n4uuum);
    \draw (n3uuu) -- (n4uumm);
    
    \draw (n3uum) -- (n4uuum);
    \draw (n3uum) -- (n4uumm);
    \draw (n3uum) -- (n4ummm);
    
    \draw (n3umm) -- (n4uumm);
    \draw (n3umm) -- (n4ummm);
    \draw[magenta, thick] (n3umm) -- (n4mmmm);
    
    \draw (n3mmm) -- (n4ummm);
    \draw (n3mmm) -- (n4mmmm);
    \draw (n3mmm) -- (n4dmmm);
    
    \draw[magenta, thick] (n3dmm) -- (n4mmmm);
    \draw (n3dmm) -- (n4dmmm);
    \draw (n3dmm) -- (n4ddmm);
    
    \draw (n3ddm) -- (n4dmmm);
    \draw (n3ddm) -- (n4ddmm);
    \draw (n3ddm) -- (n4dddm);
    
    \draw (n3ddd) -- (n4ddmm);
    \draw (n3ddd) -- (n4dddm);
    \draw (n3ddd) -- (n4dddd);
\end{tikzpicture}
\caption*{$k^{*} = 0$}
\end{minipage}
\caption{Recombining Tree Paths}
\label{fig:tree-four-panel}
\end{figure}

This graphical viewpoint, together with the toy example and algorithm, provides the foundation for developing closed-form combinatorics. We develop said combinatorics by introducing the well-known formula for counting "weak compositions" \cite{Page2013}:
\begin{align}
\mathscr{C}^w = \binom{i + \ell - 1}{\ell - 1}
\end{align} 
This formula gives an exact count of the ways to distribute a mass $i$ across $\ell$ slots and, in doing so, establishes a precise correspondence between the enumeration of the equivalence classes of tree structures and the theory of integer partitions - captured formally by the weak composition formula. Thus, this formula can be exploited to obtain a closed-form combinatorial expression with a corresponding algorithm that enables one to either (a) count and enumerate \textbf{exactly} the number of \textbf{unique} paths reaching a node $k^{*}$ of their choice in a recombining trinomial tree of depth $D$, or (b) to generate \textbf{all} unique paths in a recombining trinomial tree of depth $D$ by iterating over all $k^{*} \in [-D,D]$.

In order to achieve this, we must take into account the dynamic nature of the size of the mass, the slots available to us during each stage, and finally, the existence of negative paths and how we can extend these observations and the combinatorial expression to capture the entire general structure (i.e. the case $k^{*} \in \{D, D-1,..., 0, ..., -D + 1, -D\}$). In order to achieve this, we perform a case study, which we will be able to link to our toy example, and then continuously extend it until we have generalized to this case.

\subsubsection{Case Examination:}

\section*{\textbf{Case 1}: $k^{*} = D,\, D \geq 0$}
\label{case_1}

In \Cref{E:starting_tuple} we introduced a seed tuple as the general template for initialization; in the toy example this specialized to \Cref{E:toptuple}. One perspective on this seeding is provided by the weak composition formula: here we move $m=i=0$ units of mass across a tuple with $\ell=5$ available slots,
\begin{align*}
    \binom{0+5-1}{5-1} = \binom{4}{4} = 1,
\end{align*}
confirming that the initialization consists of a single cardinality tuple. More generally, when $k^{*}=D$ the same phenomenon occurs (cf. \Cref{fig:tree-four-panel}): the algorithm is seeded, no mass can be moved without violating the tuple (and hence the tree), and exactly one maximal path results.

Formally,
\begin{equation}
    \mathscr{C}^{w}_{D,\,k^{*}=D} = \binom{0+\ell-1}{\ell-1} = \binom{\ell-1}{\ell-1} = 1.
\end{equation}
In this case, the set $\mathcal{C}_{D,\,k^{*}=D}$ contains a single cardinality tuple, so $|\mathcal{C}_{D,\,k^{*}=D}|=1$. The weak-composition procedure $\mathscr{C}^{w}_{D,\,k^{*}=D}$ therefore enumerates exactly one configuration, establishing
$|\mathscr{C}^{w}_{D,\,k^{*}=D}| = |\mathcal{C}_{D,\,k^{*}=D}| = 1.$
In our toy example $k^{*}\neq D$, so further stages are required beyond this trivial case, and we therefore proceed to Case~2.

\section*{\textbf{Case 2}: $k^{*} = D - 1,\, D - 1 \geq 0$}
\label{case_2} In this case we recover exactly the situation of our worked example in \Cref{first_partition}: the process reduces to moving a single unit of mass across the cardinality tuple, which corresponds to stepping down one level at a time in the graph in order to generate the next collection of possible paths, naturally following the lexicographical ordering. This procedure generates all unique legal paths, and—by manipulating the tuple entries—can equivalently be used to reconstruct all corresponding tuples. Moreover, as illustrated in \Cref{fig:tree-four-panel}, no negative excursions are possible at this depth, so no augmentation of the counting process is required. 

Formally, the enumeration reduces to the positive set,  
$\mathcal{C}^{+}_{D, k^{*}=D-1} = \mathcal{C}_{D, k^{*}=D-1}$. We again apply the weak composition formulation, verified directly in \Cref{mass_process}. The seed tuple $\hat{s}^{(0)}$ (the first tuple enumerating in \Cref{mass_process}) is already counted by \(\mathscr{C}^{w}_{D,\,k^{*}=D}\), so it remains only to count the ways of moving \(m=i=1\) unit of mass through \(\ell-1\) slots and add this to the base case. Thus,  
\begin{equation}
\mathscr{C}^{w}_{D,\,k^{*}=D-1}
= \mathscr{C}^{w}_{D,\,k^{*}=D}
+ \binom{1+(\ell-1)-1}{(\ell-1)-1}
= 1 + \binom{\ell-1}{\ell-2}
= \ell.
\end{equation}  
This agrees with the explicit enumeration in \Cref{mass_process} (where \(\ell=5\) gives exactly 5 paths and we enumerated 5 paths) and matches the walk down the path in the graphical structure of \Cref{fig:tree-four-panel}.

\section*{Recognizing the Algorithm}

For $m=i=2$ an algorithmic pattern emerges. The seed tuple $\hat s^{(0)}$ (cf. \Cref{E:starting_tuple}) has its entries determined by the depth $D$ and the terminal index $k^{*}$. By parity, the last entry $c_{k_{+}}$ is either $1$ or $2$, so the mass–redistribution count must branch on this value. This is already visible in Case~2 (\ref{case_2}): moving one unit requires counting over $\ell-1$ slots, in addition to the trivial $i=0$ move over $\ell$ slots. In our toy example $\mathcal{C}_{D=7,k^{*}=2}$, the final entry begins at $2$, so this issue does not arise immediately; however, if $c_{k_{+}}=1$ (which is permitted and exhibited by \Cref{E:starting_tuple}), that step exhausts the available mass at the boundary, and to avoid repeats (and to continue the lexicographically ordered “march down” the tree) we must proceed with two units moved over $\ell-2$ slots instead. In short, the enumeration at each stage respects the lexicographic schedule and the tree’s feasibility constraints, but its first increment depends on the parity of the terminal entry. Accordingly, we introduce the switching term
\begin{equation}
\label{switching_term}
\beta \;=\;
\begin{cases}
1, & D + k^{*} \equiv 1 \pmod{2},\\
2, & D + k^{*} \equiv 0 \pmod{2},
\end{cases}
\end{equation}
which dictates whether the initial increment at the right boundary consumes one or two units before the process resumes its lexicographic descent. If this term's use is not apparent now, as we write the full combinatorics, its use will become obvious. 

With the switching term $\beta$ in place, we separate the counting process into two regimes: the \emph{even} and \emph{odd} cases. The need for this distinction comes from the observation that the number of available slots $\ell$ depends on the parity of the terminal entry $c_{k_{+}}$ in the seed tuple. This parity directly alters the redistribution schedule and hence modifies the weak composition count $\mathscr{C}^{w}$. Consequently, in the general case the formula itself must adapt, and $\beta$ serves as the toggle between the two regimes. 
\newpage

We therefore write the weak composition enumeration of $\mathcal{C}^{+}_{D,k^{*}}$ for the $m$-th stage, valid for any $m>0$ and any $i \geq 0$, in two versions: one for the even case ($\beta=2$) and one for the odd case ($\beta=1$). This separation ensures that the lexicographic march down the tree is preserved and that no paths are over- or under-counted.

\begin{table}[H]
\small
\begin{center}
\begin{tabular}{|c|c|c|c|}
\multicolumn{4}{c}{Odd Case: $c_{k_{+}} = 1, \; \beta = 1$} \\
\hline
\textbf{stage} ($m$) & \textbf{mass} ($i$) & \textbf{slots} ($\ell$) & \textbf{\#compositions} ($C^{w}$) \\
\hline
0 & 0 & $\ell$ & 1 \\
1 & 1 & $\ell - 1$ & $\binom{i + \ell - 2}{\ell - 2}$ \\
2 & 2 & $\ell - 2$ & $\binom{i + \ell - 3}{\ell - 3}$ \\
3 & 3 & $\ell - 2$ & $\binom{i + \ell - 3}{\ell - 3}$ \\
4 & 4 & $\ell - 3$ & $\binom{i + \ell - 4}{\ell - 4}$ \\
5 & 5 & $\ell - 3$ & $\binom{i + \ell - 4}{\ell - 4}$ \\
$\vdots$ & $\vdots$ & $\vdots$ & $\vdots$ \\
$m$ & $i = D - k^{*}$ & $\ell - \lceil \tfrac{i+1}{2} \rceil, \, \, i > 0$ & $\binom{i + \ell - \lceil \tfrac{i+1}{2} \rceil - 1}{\ell - \lceil \tfrac{i+1}{2} \rceil - 1}$ \\
\hline
\end{tabular}

\vspace{1em}

\begin{tabular}{|c|c|c|c|}
\multicolumn{4}{c}{Even Case: $c_{k_{+}} = 2, \; \beta = 2$} \\
\hline
\textbf{stage} ($m$) & \textbf{mass} ($i$) & \textbf{slots} ($\ell$) & \textbf{\#compositions} ($C^{w}$) \\
\hline
0 & 0 & $\ell$ & 1 \\
1 & 1 & $\ell - 1$ & $\binom{i + \ell - 2}{\ell - 2}$ \\
2 & 2 & $\ell - 1$ & $\binom{i + \ell - 2}{\ell - 2}$ \\
3 & 3 & $\ell - 2$ & $\binom{i + \ell - 3}{\ell - 3}$ \\
4 & 4 & $\ell - 2$ & $\binom{i + \ell - 3}{\ell - 3}$ \\
5 & 5 & $\ell - 3$ & $\binom{i + \ell - 4}{\ell - 4}$ \\
$\vdots$ & $\vdots$ & $\vdots$ & $\vdots$ \\
$m$ & $i = D - k^{*}$ & $\ell - \lfloor \tfrac{i+1}{2} \rfloor$ & $\binom{i + \ell - \lfloor \tfrac{i+1}{2} \rfloor - 1}{\ell - \lfloor \tfrac{i+1}{2} \rfloor - 1}$ \\
\hline
\end{tabular}
\caption{Weak Composition Patterns with Parity-Based Indexing}
\label{tab:weak_compositions}
\end{center}
\end{table}

Returning to the toy example \(\mathcal{C}_{7,2}\), we observe that \(c_{k_{+}} = 2\), so \(\beta = 2\) and we are in the even regime. Referring to the even-case table, we compare the stage \(m=i=2\) with the explicit enumeration in \Cref{second_partition}. Substituting into the weak-composition formula gives  
\[
\binom{2 + 5 - 2}{5 - 2} = \binom{5}{3} = 10,
\]  
which matches the number of tuples explicitly listed.  

To account for all positive paths up to this point, we note that \(\mathscr{C}^{w}_{D, k^{*}=D-2}\) should include: mass \(0\) over \(\ell=5\), mass \(1\) over \(\ell=4\), and mass \(2\) over \(\ell=4\). This yields  
\begin{equation}\label{wrong_count}
\begin{aligned}
\mathscr{C}^{w}_{D, k^{*}=D - 2}
    &= \mathscr{C}^{w}_{D,\,k^{*}=D}
     + \mathscr{C}^{w}_{D,\,k^{*}=D - 1} 
     + \binom{2 + (\ell - 1) - 1}{(\ell - 1) - 1} \\
    &= 1 + (\ell - 1) + \binom{\ell}{\ell - 2}.
\end{aligned}
\end{equation}  
For \(\ell=5\), this gives \(15\) total combinations, exactly matching the enumerated tuples in \Cref{second_partition}. However, careful inspection of the graph in \Cref{fig:tree-four-panel} reveals the first appearance of a negative path. Since \Cref{wrong_count} accounts only for positive paths, it under-counts by precisely one and is thus incomplete. This observation leads naturally to the next case, \(k^{*} = D-2, \; D-2 \geq 0\), where negative contributions must begin to be incorporated. 
\begin{remark}
    At this point it is useful to note that all positive paths can be systematically counted within the even–odd regime using \Cref{tab:weak_compositions}. In particular, expression \Cref{wrong_count}, though incorrect, highlights the general form of the positive-path count \(\mathcal{C}^{+}_{D,k^{*}}\), which can use \Cref{tab:weak_compositions} to write. Again, defer the correct count in \Cref{wrong_count}, which has negative paths, to the subsequent cases, beginning with \(k^{*} = D-2, \; D-2 \geq 0\), and continuing with the general case \(k^{*} \in \{D, D - 1, ..., 0,..., -D + 1, -D\}\), which will generalize and ultimately complete the counting procedure for any $D, k^{*}$. In order to write \Cref{tab:weak_compositions} into a nice compact closed-form non-piecewise combinatorial expression, we utilize the well known identity relating the ceiling and floor functions:
    \begin{equation}\label{delta}
    \Big\lceil \frac{i + 1}{2}\Big\rceil = \Big\lfloor\frac{i + 1}{2}\Big\rfloor + \delta, \quad \delta = 
    \begin{cases}
        1, \quad i \, \, $\text{is even}$\\
        0, \quad i \, \, $\text{is odd}$
    \end{cases}
    \end{equation}
    We also recall Pascal's Identity:
    \begin{equation}
    \label{pascal}
        \binom{n}{k} = \binom{n - 1}{k} + \binom{n - 1}{k - 1}
    \end{equation}
    Examining the general case in \Cref{tab:weak_compositions}, define
    \[
    a \;\Def\; \Big\lfloor\frac{i+1}{2}\Big\rfloor,
    \qquad
    c \;\Def\; \Big\lceil\frac{i+1}{2}\Big\rceil
    = a+\delta(i).
    \]
    The “even-case” general term then reads
    \begin{equation}\label{eq:even-general-term}
    \binom{i+\ell-a-1}{\ \ell-a-1\ } \;=\; \binom{i+\ell-\lfloor\frac{i+1}{2}\rfloor-1}{\ \ell-\lfloor\frac{i+1}{2}\rfloor-1\ }.
    \end{equation}
    Applying \Cref{pascal} to \Cref{eq:even-general-term} yields
    \begin{equation}\label{eq:pascal-split-a}
    \binom{i+\ell-a-1}{\ \ell-a-1\ }
    = \binom{i+\ell-a-2}{\ \ell-a-1\ } + \binom{i+\ell-a-2}{\ \ell-a-2\ }.
    \end{equation}
    Rewriting in terms of \(c=a+\delta(i)\) gives the uniform identity
    \begin{equation}\label{eq:uniform-delta}
    \binom{i+\ell-a-1}{\ \ell-a-1\ }
    =
    \binom{i+\ell-c-1}{\ \ell-c-1\ }
    \;+\;
    \delta(i)\,\binom{i+\ell-c-1}{\ \ell-c\ }.
    \end{equation}
    Indeed, if \(i\) is odd then \(\delta(i)=0\) and the second term vanishes; if \(i\) is even then \(\delta(i)=1\) and \Cref{eq:uniform-delta} is precisely the Pascal split with indices shifted by one. Next, we use the global parity switch determined by \((D,k^{*})\) that we introduced in \Cref{switching_term} so that $\beta - 1 \in \{0,1\}$.
    In regimes where the parity choice is fixed by \((D,k^{*})\) (and not by \(i\)), we replace the local \(\delta(i)\) in \Cref{eq:uniform-delta} by the global toggle \(\beta-1\). Using \(c=\big\lceil\frac{i+1}{2}\big\rceil\), this yields the single closed-form summand
    \begin{equation}\label{eq:single-summand}
    \binom{i+\ell-c-1}{\ \ell-c-1\ }
    \;+\;
    (\beta-1)\,\binom{i+\ell-c-1}{\ \ell-c\ },
    \qquad
    c=\Big\lceil\frac{i+1}{2}\Big\rceil.
    \end{equation}
    Therefore, the positive-path count, which as we saw from \Cref{wrong_count}, can taken as then sum over the mass $i = 0$ until $D - k^{*}$. Given this, for any arbitrary starting tuple $\hat{c}$, given the choice of $D, k^{*}$ can be written without case splits as
    \begin{equation}\label{eq:final-closed-form}
    \mathscr{C}_{D,k^{*}}^{w,+}(\hat{c})
    \;=\;
    \sum_{i=0}^{D-k^{*}}
    \left\{
    \binom{i+\ell-\big\lceil\frac{i+1}{2}\big\rceil-1}{\ \ell-\big\lceil\frac{i+1}{2}\big\rceil-1\ }
    +
    (\beta-1)\,
    \binom{i+\ell-\big\lceil\frac{i+1}{2}\big\rceil-1}{\ \ell-\big\lceil\frac{i+1}{2}\big\rceil\ }
    \right\}
    \end{equation}
    We write \(\mathscr{C}_{D,k^{*}}^{w,+}\) to reconcile with \Cref{wrong_count}: here the superscript \(^{w,+}\) indicates the count of \emph{weak-composition terms arising from positive paths only}, i.e., paths that never visit the negative side, thereby excluding the negative-path contribution. This distinguishes \(\mathscr{C}_{D,k^{*}}^{w,+}\) from the full weak-composition count \(\mathscr{C}_{D,k^{*}}^{w}\), which includes both positive and negative paths (the latter will be incorporated in the subsequent cases).

    For the stopping condition of the counting process - and by extension the algorithm - we must make explicit the feasibility constraint on the cardinality tuple. Along the positive side, every level up to the terminal index must remain reachable; equivalently, for each level \(\ell \le k^{*}\) we must reserve the baseline occupancy that maintains reachability (otherwise some node becomes unreachable). In addition, the terminal level \(k^{*}\) contributes its fixed terminal mass. 
    
    It is important to note that \(k^{*}\) itself may tick down from its default value of \(2\) to \(1\), but never below \(1\) and values of $\hat{c}$, $c_{0 \leq k < k^{*}}$, cannot give up any of their mass thus they remain at 1 and do not contribute to the sum bounds. Allowing \(k^{*}<1\) would immediately force the path into an unreachable state and invalidate the composition. Thus, when constructing the admissible range for the transferable surplus mass \(i\), we must account for this minimal cutoff. With the baseline and terminal reservations enforced, the maximum transferable mass is $i_{\max} = D - k^{*}$,
    since algebraically \(i = D - k^{*} - 1 + 1 = D - k^{*}\). Therefore the admissible summation range is
    \begin{equation}\label{eq:i-range}
    0 \;\le\; i \;\le\; D - k^{*},
    \end{equation}
    and this upper bound furnishes the stopping criterion used by the counting algorithm for \(\mathscr{C}_{D,k^{*}}^{w,+}\).
    \footnote{Additional Remarks:
    \begin{itemize}
        \item (i) If \(D+k^{*}\) is odd, then \(\beta=1\) and the second binomial in each summand is suppressed.
        \item (ii) If \(D+k^{*}\) is even, then \(\beta=2\) and one recovers the full Pascal contribution, matching the “even-regime’’ rows of \Cref{tab:weak_compositions}.
    \end{itemize}}
\end{remark}

We can now finally proceed to our last two cases and in doing so, augment our combinatorics and algorithm with the proper introduction of negative paths. In order to do this, we need to return to observation we made in our initial construction of the tree: symmetry.

\paragraph{\textbf{Symmetry of the Recombining Tree}}

One of the core advantages of the recombining tree is its symmetry over the path $\ppath = ((0,0),(0,0),...,(0,0))$ as referenced in \Cref{setup}. In our construction of the problem, we continuously referenced positive paths, and then mixed paths that contained both positive and negative paths. Moreover, in \Cref{fig:tree-four-panel}, we structured these cases over the positive paths. This property of symmetry allows us to state the follow theorem:
\begin{theorem}[Reflection symmetry of counts]\label{thm:reflection-symmetry}
Let $\TreeD=(\VerticesD,\EdgesD)$ be the recombining rooted trinomial tree with
\begin{gather*}
    \VerticesD=\{(k,d)\in\mathbb{Z}\times\mathbb{Z}_+:\ 0\le d\le D,\ |k|\le d\}\\
    \EdgesD=\bigl\{\,((k,d-1),(k+s,d)):\ s\in\{-1,0,1\}\,\bigr\}
\end{gather*}
Let $\mathscr{P}^+$ (resp.\ $\mathscr{P}^-$) be the set of root-to-depth-$D$ paths that stay on the nonnegative (resp.\ non-positive) side, i.e. $k_d\ge 0$ (resp.\ $k_d\le 0$) for all $d$. For any terminal constraint that is symmetric under $k\mapsto -k$ (e.g., “end at $\pm k^{\dagger}$” or “end in a set $S$ with $S=-S$), the map
\begin{equation}
R:\ (k,d)\mapsto(-k,d)
\end{equation}
induces a bijection $R:\mathscr{P}^{+}\!\to\mathscr{P}^{-}$, and hence a bijection on cardinality tuples
\begin{equation}
\hat C:\ \pi\mapsto (c_k(\pi))_{k}\quad\text{satisfies}\quad
\hat C\bigl(R\pi\bigr)=\rho\bigl(\hat C(\pi)\bigr), \text{ where } \rho\bigl((c_k)_k\bigr)=(c_{-k})_k.
\end{equation}
Consequently, the positive- and negative-side combinatorial counts coincide:
\begin{equation*}
\#\hat C(\mathscr{P}^{+}) \;=\; \#\hat C(\mathscr{P}^{-}),
\end{equation*}
and any enumeration algorithm (or closed-form count) developed on $\mathscr{P}^{+}$ applies \emph{mutatis mutandis} to $\mathscr{P}^{-}$ via $\rho$.
\end{theorem}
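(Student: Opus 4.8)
The plan is to exhibit a single involution of the ambient graph and track its effect through each layer of structure in the statement. First I would define $R$ on vertices by $R(k,d)=(-k,d)$ and observe that it maps $\VerticesD$ into itself: the defining constraints $0\le d\le D$ and $|k|\le d$ are invariant under $k\mapsto -k$ because $|-k|=|k|$, so $R$ is a well-defined involution on $\VerticesD$. Next I would check that $R$ preserves $\EdgesD$. A generic edge has the form $((k,d-1),(k+s,d))$ with $s\in\{-1,0,1\}$; applying $R$ coordinatewise gives $((-k,d-1),(-k-s,d))$, which is again an edge with base position $-k$ and step $-s\in\{-1,0,1\}$. Thus $R$ is an edge-preserving involution, i.e.\ a graph automorphism of $\TreeD$.

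Second, I would push $R$ forward to paths. Since $R$ fixes the root $(0,0)$ (as $-0=0$) and carries edges to edges, applying $R$ coordinatewise to any $\ppath=(\node_0,\dots,\node_D)\in\PathsD$ yields another element of $\PathsD$; being an involution, $R$ is a bijection of $\PathsD$. The nonnegativity condition $k_d\ge 0$ for all $d$ becomes $-k_d\le 0$ for all $d$ under $R$, so $R$ interchanges $\mathscr{P}^+$ and $\mathscr{P}^-$ and restricts to a bijection $R:\mathscr{P}^+\to\mathscr{P}^-$. For the terminal constraint, I would note that if a path ends at $(k_D,D)$ satisfying a constraint symmetric under $k\mapsto -k$, then $R\ppath$ ends at $(-k_D,D)$, which satisfies the same constraint; hence $R$ also respects the constrained subsets (``ends at $k^{\dagger}$'' maps to ``ends at $-k^{\dagger}$'', and $S=-S$ is preserved).

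Third, I would compute the action on cardinality tuples. Since $R$ sends each visited position $k$ to $-k$ while leaving the depth index untouched, the number of depths at which $R\ppath$ occupies position $k$ equals the number at which $\ppath$ occupies $-k$; in symbols $c_k(R\ppath)=c_{-k}(\ppath)$, which is exactly $\hat C(R\ppath)=\rho(\hat C(\ppath))$ with $\rho\bigl((c_k)_k\bigr)=(c_{-k})_k$. Finally, because $\rho$ is itself an involutive bijection on finitely supported tuples and $\hat C\circ R=\rho\circ\hat C$, I would conclude $\hat C(\mathscr{P}^-)=\hat C(R\,\mathscr{P}^+)=\rho\bigl(\hat C(\mathscr{P}^+)\bigr)$, so $\#\hat C(\mathscr{P}^-)=\#\hat C(\mathscr{P}^+)$; the transfer of any enumeration procedure from $\mathscr{P}^+$ to $\mathscr{P}^-$ then follows by conjugating it with $\rho$ (equivalently $R$).

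I anticipate no deep obstacle here: the content is essentially bookkeeping, with the automorphism property of $R$ resting on the single observation that $|k|\le d$ is symmetric. The one place that demands genuine care is matching the index windows of the tuples. Concretely, I would verify that $\rho$ sends the support window $[0,k_+]$ used for nonnegative paths ending at $k^{\dagger}$ to the reflected window $[-k_+,0]$ for nonpositive paths ending at $-k^{\dagger}$, so that $\hat C\circ R=\rho\circ\hat C$ is an equality of tuples over compatible (reflected) index ranges rather than merely a relabeling; cross-checking this against the bounds in \Cref{E:Kbounds} is the only step that is not immediate.
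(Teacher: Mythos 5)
Your proposal is correct and follows essentially the same route as the paper's own proof: establish that $R(k,d)=(-k,d)$ is a graph automorphism fixing the root (using $|-k|=|k|$ and $-s\in\{-1,0,1\}$), observe it is an involution swapping $\mathscr{P}^{+}$ and $\mathscr{P}^{-}$ while respecting symmetric terminal constraints, compute $c_k(R\ppath)=c_{-k}(\ppath)$ to get $\hat C\circ R=\rho\circ\hat C$, and conclude equality of counts and transfer of the enumeration. Your closing remark about matching the reflected index windows is a sensible extra check but does not change the argument, which matches the paper's.
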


\begin{proof}
Define $R(k,d)=(-k,d)$. Then $R$ is a graph automorphism of $\TreeD$:
if $((k,d-1),(k+s,d))\in\EdgesD$ with $s\in\{-1,0,1\}$, applying $R$ to both endpoints yields
\[
\bigl((-k,d-1),\,(-k-s,d)\bigr),
\]
which is again an edge in $\EdgesD$ since $-s\in\{-1,0,1\}$. The root $(0,0)$ is fixed by $R$, and depth is preserved.

If $\pi\in\mathscr{P}^{+}$ satisfies a symmetric terminal constraint (e.g., $k_D\in S$ with $S=-S$ or $k_D=\pm k^{\dagger}$), then $R\pi\in\mathscr{P}^{-}$ satisfies the same constraint because $R$ flips the sign of $k$ and leaves $d$ unchanged. In particular, $R$ is a bijection $\mathscr{P}^{+}\to\mathscr{P}^{-}$ with inverse $R$ itself.

For the cardinality tuple $\hat C(\pi)=(c_k(\pi))_{k}$, where $c_k(\pi)=|\{d\in\{0,\dots,D\}:\ k_d=k\}|$, one has
\[
c_k(R\pi)=|\{d:\ -k_d=k\}|=|\{d:\ k_d=-k\}|=c_{-k}(\pi),
\]
so $\hat C(R\pi)=\rho(\hat C(\pi))$ with $\rho((c_k)_k)=(c_{-k})_k$. Thus $\rho$ is a bijection between the sets of cardinality tuples realized by $\mathscr{P}^{+}$ and $\mathscr{P}^{-}$, which proves the equality of counts.

Finally, any enumeration algorithm on $\mathscr{P}^{+}$ that operates by (i) local transitions $k\mapsto k+s$ with $s\in\{-1,0,1\}$ and (ii) book-keeping via the tuple $(c_k)_k$ is equivariant under $\rho$; hence reflecting the output (or equivalently mirroring indices $k\mapsto -k$ in the algorithm) yields a correct enumeration on $\mathscr{P}^{-}$.
\end{proof}

\section*{\textbf{Case 3: $k^{*} = D - 2, \; D-2 \geq 0$}}

Like \Cref{case_1}, we can use the graphical structure to see what amounts to a trivial case where we add our first negative path. Examining \Cref{fig:tree-four-panel} we can see that there is only 1 mixed positive and negative path. Thus, we can finally correct \Cref{wrong_count} by counting the negative path formally:
\begin{equation}
\begin{aligned}
    \mathscr{C}^{w}_{D, k^{*}=D - 2} & = \mathscr{C}_{D,k^{*} = D-2}^{w, -} + \mathscr{C}_{D,k^{*} = D-2}^{w, +}\\
    & = 1 + \sum_{i=0}^{2}
    \left\{
    \binom{i+\ell-\big\lceil\frac{i+1}{2}\big\rceil-1}{\ \ell-\big\lceil\frac{i+1}{2}\big\rceil-1\ }
    +
    (\beta-1)\,
    \binom{i+\ell-\big\lceil\frac{i+1}{2}\big\rceil-1}{\ \ell-\big\lceil\frac{i+1}{2}\big\rceil\ }
    \right\}
\end{aligned}
\end{equation}
Plugging in the correct $\beta, \ell$ for our toy example $\mathcal{C}_{7,2}$ (where we can finally dispense with the superscript $^{+}$ notation), verifies that this count is 16, exactly what we would expect. 

Unfortunately, while it is possible, it now becomes difficult to see in the graph all of the negative paths and their combinatorial structure. However, using \Cref{thm:reflection-symmetry}, we know that our counting process, so long as we follow the constraints, will remain valid. This observation however is key and leads to some key questions. 1) How can we maintain the fidelity of the algorithm while still optimizing for computational speed? 2) What constraints does the tree impose that we need to account for to maintain the equivalence relation and still be able to utilize \Cref{eq:final-closed-form}? 3) How can we exploit our original and canonical construction $\hat{C}(\ppath)$? Fortunately, in our next and final case, we will not only be able to answer all these questions, but also be able to construct our full algorithm and complete combinatorial expression.

\section*{\textbf{Case N:} $k^{*} \in \{-D,-D+1,\dots,D\}$}

Take any mixed-integer path $\ppath \in \PathsD_{D,k^{*}}$ that satisfies the standing constraints of our construction: successive nodes differ by $\pm 1$ or $0$, the path starts at $0$, and it must terminate at the prescribed endpoint $k^{*}$. For paths that extend into the negative region, however, an additional restriction arises, one that is also exploited in \Cref{appendix:excursions} when we wrote our recursiveless enumeration in \Cref{recursiveless}. Specifically, a valid traversal requires the return to the nonnegative positions. Concretely, whenever our mixed $\ppath$ takes its step $-1$ (or more negative pieces subject to obeying the same constraints) while considering $k^{*}\geq0$, at least one additional $0$-step becomes necessary to ensure a return to the positive side:
\begin{figure}[H]
\centering
\begin{tikzpicture}[
  x=0.3cm,y=0.3cm, >=Stealth,
  every node/.style={circle, draw, minimum size=0.5cm, inner sep=0pt},
  edgepos/.style={}
]
  \node[fill=yellow!40] (n0) at (0,0) {0};
  
  \node (n1u) at (3,  2) {1};
  \node (n1m) at (3,  0) {0};
  \node[fill=yellow!40] (n1d) at (3, -2) {-1};
  
  \node (n2uu) at (6,  4) {2};
  \node (n2um) at (6,  2) {1};
  \node[fill=yellow!40] (n2mm) at (6,  0) {0};
  \node (n2dm) at (6, -2) {-1};
  \node (n2dd) at (6, -4) {-2};
  
  \node (n3uuu) at (9,  6) {3};
  \node (n3uum) at (9,  4) {2};
  \node (n3umm) at (9,  2) {1};
  \node (n3mmm) at (9,  0) {0};
  \node (n3dmm) at (9, -2) {-1};
  \node (n3ddm) at (9, -4) {-2};
  \node (n3ddd) at (9, -6) {-3};
  
  \node (n4uuuu) at (12,  8) {4};
  \node (n4uuum) at (12,  6) {3};
  \node (n4uumm) at (12,  4) {2};
  \node (n4ummm) at (12,  2) {1};
  \node (n4mmmm) at (12,  0) {0};
  \node (n4dmmm) at (12, -2) {-1};
  \node (n4ddmm) at (12, -4) {-2};
  \node (n4dddm) at (12, -6) {-3};
  \node (n4dddd) at (12, -8) {-4};
  
  \draw[green, thick] (n0) -- (n1u);
  \draw               (n0) -- (n1m);
  \draw[magenta, thick] (n0) -- (n1d);
  
  \draw[green, thick] (n1u) -- (n2uu);
  \draw               (n1u) -- (n2um);
  \draw               (n1u) -- (n2mm);
  
  \draw               (n1m) -- (n2um);
  \draw               (n1m) -- (n2mm);
  \draw               (n1m) -- (n2dm);
  
  \draw[magenta, thick] (n1d) -- (n2mm);
  \draw               (n1d) -- (n2dm);
  \draw               (n1d) -- (n2dd);
  
  \draw[green, thick] (n2uu) -- (n3uuu);
  \draw               (n2uu) -- (n3uum);
  \draw               (n2uu) -- (n3umm);
  
  \draw               (n2um) -- (n3uum);
  \draw               (n2um) -- (n3umm);
  \draw               (n2um) -- (n3mmm);
  
  \draw[magenta, thick] (n2mm) -- (n3umm);
  \draw               (n2mm) -- (n3mmm);
  \draw               (n2mm) -- (n3dmm);
  
  \draw               (n2dm) -- (n3mmm);
  \draw               (n2dm) -- (n3dmm);
  \draw               (n2dm) -- (n3ddm);
  
  \draw               (n2dd) -- (n3dmm);
  \draw               (n2dd) -- (n3ddm);
  \draw               (n2dd) -- (n3ddd);
  
  \draw               (n3uuu) -- (n4uuuu);
  \draw               (n3uuu) -- (n4uuum);
  \draw[green, thick] (n3uuu) -- (n4uumm);
  
  \draw               (n3uum) -- (n4uuum);
  \draw               (n3uum) -- (n4uumm);
  \draw               (n3uum) -- (n4ummm);
  
  \draw[magenta, thick] (n3umm) -- (n4uumm);
  \draw               (n3umm) -- (n4ummm);
  \draw               (n3umm) -- (n4mmmm);
  
  \draw               (n3mmm) -- (n4ummm);
  \draw               (n3mmm) -- (n4mmmm);
  \draw               (n3mmm) -- (n4dmmm);
  
  \draw               (n3dmm) -- (n4mmmm);
  \draw               (n3dmm) -- (n4dmmm);
  \draw               (n3dmm) -- (n4ddmm);
  
  \draw               (n3ddm) -- (n4dmmm);
  \draw               (n3ddm) -- (n4ddmm);
  \draw               (n3ddm) -- (n4dddm);
  
  \draw               (n3ddd) -- (n4ddmm);
  \draw               (n3ddd) -- (n4dddm);
  \draw               (n3ddd) -- (n4dddd);
\end{tikzpicture}
\caption{Mixed-integer recombining tree highlighting a sample traversal for $k^{*}=D-2$ (green).}
\label{fig:mixed-path-kstar-D-2}
\end{figure}
Here, we highlight an example of the negative path in magenta and the use of the additional required 0 when a negative value is hit. Further inspection will show that whenever a negative node is traversed in a valid path $\ppath$, at least two visits to $0$ are required. By symmetry, the same holds whenever $k^{*}<0$. In the special case $k^{*}=0$, three distinct visits to $0$ are necessary: the initial step, the return, and the terminal position.  

Accordingly, the admissible cardinality tuples must satisfy
\begin{equation}\label{condition}
    c_{0} =
    \begin{cases}
        2, & k^{*} \neq 0,\\
        3, & k^{*} = 0.
    \end{cases}
\end{equation}

In our definition \Cref{E:starting_tuple}, we prescribed minimum counts for certain $c_{k}$ values that must be preserved throughout the mass–redistribution process. Specifically, entries initialized at $1$ remain fixed, while entries initialized at $2$ may be reduced provided they are walked down appropriately through the redistribution procedure described in the previous cases. The new condition augments this rule: whenever a path includes a negative node, the process terminates precisely when mass must first be removed from $k^{*}$, at which point $c_{k^{*}}$ is reduced to $1$. This stopping condition is exactly the one encoded in the upper bound of the summation in \Cref{eq:i-range} and keeps us from ever decreasing $c_{0}$ below 2.

In the same vein, the negative paths obey analogous constraints to the positive ones. As the tree is traversed into negative indices, the cardinality counts must satisfy nested lower bounds. 
Assume $k_- < 0$ and let $m_* \Def -\,k_- \in \mathbb{Z}_{> 0}$. 
To admit excursions down to $k_-,$ the cardinalities must satisfy
\begin{equation}\label{E:neg_to_kminus_cases}
    c_{k_-} \;\ge\; 1,
    \qquad
    c_{-j} \;\ge\; 2 \quad \text{for } 1 \le j \le m_* - 1,
\end{equation}
together with $c_0$ as in \Cref{condition}.

Equivalently, for each intermediate depth $m \in \{1,\dots,m_*\}$, inclusion of the node $-m$ requires
\begin{equation}\label{E:neg_to_kminus_general}
    c_{-j} \;\ge\; 
    \begin{cases}
        2, & 1 \le j < m,\\
        1, & j = m,
    \end{cases}
    \qquad (1 \le j \le m).
\end{equation}
With these new constraints equipped, we can use our understanding of our lexicographical ordering to inform how, while obeying these constraints, we can enumerate all these paths. Previously, when generating all positive paths, our process naturally maintained the lexicographical ordering and thus verified our enumeration algorithm was not missing any paths and our combinatorics were not missing any counts. We can impose this same logic on the negative paths by extending our lexicographical ordering for the cardinality tuples. We state it as such:

\begin{remark}[Extension of lexicographic order to mixed-sign tuples]\label{R:MixedLex}
Let $\hat c=(c_k)_{k=k_-}^{k_+}$ and $\hat c'=(c_k')_{k=k_-}^{k_+}$ be elements of $\mathcal{C}_{D,k^*}\subset \mathbb{N}^{[k_-,k_+]}$. 
We define the index blocks of $c_{k}$
\begin{equation}
\label{index_blocks}
I_- \Def (-1,-2,\dots,k_-),\qquad
I_+ \Def (k_+,k_+-1,\dots,0),
\end{equation}
and define the comparison key:
\begin{equation}
\Phi(\hat c)\;\Def\;\bigl(\,-c_{-1},-c_{-2},\dots,-c_{k_-}\;;\;c_{k_+},c_{k_+-1},\dots,c_0\,\bigr)\in\mathbb{Z}^{|I_-|+|I_+|}.
\end{equation}
Equip $\mathbb{Z}^{|I_-|+|I_+|}$ with the standard (left-to-right) lexicographic order.  
We then define the \emph{negative-first, right-to-left} lex order on $\mathcal{C}_{D,k^*}$ by
\begin{equation}
\label{complete_lex_ord}
\hat c \;\lexprecpm\; \hat c' \quad\Longleftrightarrow\quad \Phi(\hat c)\ \text{is lexicographically smaller than}\ \Phi(\hat c').
\end{equation}

Unpacking the rule:
\begin{enumerate}
\item (\emph{Compare the negative part first; “smaller is larger” deeper down.})
Let $j^*=\min\{\,j\ge 1:\ c_{-j}\neq c'_{-j}\,\}$, if it exists. Then
\begin{equation}
\hat c \;\lexsuccpm\; \hat c' \quad\Longleftrightarrow\quad c_{-j^*} \;<\; c'_{-j^*}.
\end{equation}
For instance, $(c_{-1},c_{-2})=(2,1)$ compares as $(2,1)\lexsuccpm(2,2)$, mirroring $-21>-22$.
\item (\emph{Tie on negatives $\Rightarrow$ compare positives right-to-left in the usual sense.})
If $c_{-j}=c'_{-j}$ for all $j\ge 1$ with $-j\ge k_-$, set 
\begin{equation}
i^*=\max\{\,i\in[0,k_+]:\ c_i\neq c'_i\,\}.
\end{equation}
Then we have
\begin{equation}
\hat c \;\lexsuccpm\; \hat c' \quad\Longleftrightarrow\quad c_{i^*} \;>\; c'_{i^*}.
\end{equation}
\end{enumerate}

This order $\lexprecpm$ is a total order on $\mathcal{C}_{D,k^*}$; it \emph{extends} the right-to-left lex order on $\mathcal{C}^{+}_{D,k^*}$ (where all negative coordinates are zero), and via the cardinality map \Cref{card_map}– \Cref{equivalence_class} it induces a representative-independent total order on the equivalence classes in $\EPathsD_{k^*}$ (cf.\ \Cref{S:LexOrd}). Moreover, it coincides perfectly with a walk down the mixed paths given by the graph of $k^{*} \in \{-D,-D+1,\dots,D\}$ where the number of negative paths that reach $k^{*}$ is nontrivial. By reflection symmetry (\Cref{thm:reflection-symmetry}), the strictly negative side is obtained by the involution $k\mapsto -k$, consistent with the sign flip built into $\Phi$.
\end{remark}
We now employ the extended lexicographical ordering together with the additional cardinality constraints and the combinatorial expression previously developed. Taken together, these three ingredients enable us to formulate the refined algorithm in its complete form and to derive the associated closed-form combinatorial expression. The key step is the construction of an appropriate \emph{shift function}, which expands the enumeration while respecting the constraints - and of course is motivated by the ordering.

\paragraph{\textbf{The Shift Function}}
Recall the positive maximal seed $\hat s^{(0)}$ from \Cref{E:starting_tuple}. 
Our extended lexicographic order (\Cref{R:MixedLex}) tells us to prioritize comparisons by the negative coordinates and only then, in case of ties, by the positive coordinates (right-to-left). 
To extend the enumeration from $\mathcal{C}^{+}_{D,k^*}$ to the full $\mathcal{C}_{D,k^*}$ while respecting \Cref{condition}– \Cref{E:neg_to_kminus_general}, we proceed in \emph{stages} that progressively admit negative levels.

\paragraph{Stage and Step Indices.}
We distinguish two levels of indexing in the redistribution process:

Within each stage $M$, the admissible tuples form a block that we denote by
\begin{equation}
\mathcal{B}_M 
   \;\Def\; \bigl\{\, \hat{s}^{(M,m)} : 0 \leq m \leq D-k^{*}_{M} \,\bigr\},
\end{equation}
where $k^{*}_{M}$ is the terminal position $k^{*}$ after $m$ many stages of mass redistribution and $M$ is the number of total shifts (i.e. applications of the proposed shift function). Note also the added subscript on $k^{*}$. A detailed explanation will follow shortly; for the moment it suffices to observe that $k^{*}$ shifts after each $M$, and this labeling records that dependence. By construction, $\hat{s}^{(M,0)} \equiv \hat{s}^{(M)}$ is the canonical seed for stage $M$, and successive elements are generated via
\begin{equation}
\hat{s}^{(M,m)} \;\rightsquigarrow\; \hat{s}^{(M,m+1)}\qquad (0 \leq m < D-k_{M}^{*}).
\end{equation}
as determined by our counting process. Thus each block $\mathcal{B}_M$ is a contiguous chain of tuples in lex order, beginning from the stage seed and ending when no further admissible redistribution is possible. In this way the full enumeration proceeds block-wise across stages $M$, while within each block it walks sequentially through the tuples that are generated over $m$ many mass redistributions.

At stage $M=1$ we “turn on’’ visits to $-1$ by minimally \emph{migrating mass} from the rightmost admissible positive entries (i.e., from $k_+$ leftward down to $0$) into the new coordinates that must be met first under the order, namely $c_{-1}$ (and $c_0$ if needed to satisfy \Cref{condition}). This extension should be obvious to the reader as it is informed by the graphical structure just now adding in the negative side (c.f. \Cref{rem:mass_constraints}). 
This produces a new seed tuple $\hat s^{(1)}$ that is lexicographically maximal among all tuples that reach $-1$ (no tuple greater than $\hat s^{(1)}$ in $\lexprecpm$ remains to be enumerated within this block, aside from those already handled at stage $M=0$ which was the all positive case). Below we provide a sample case of this first mass shift that occurs for a general initial all positive seed tuple $\hat{s} \in \mathcal{C}^{+}_{D,k^{*}}$ and performs one mass shift. Note that these updates obey \Cref{condition}– \Cref{E:neg_to_kminus_general}: at each step we remove two units of mass from the rightmost slot (or, if that slot contains only a single unit, from the next available slot as well) and redistribute this mass into the $c_{0}$ and $c_{-1}$ positions. This procedure initializes the algorithm at the extremal path where $-1$ is the only negative value, though as the mass process continues, additional visits to $-1$ naturally appear. At the same time, the index $k^{*}$ shifts one position to the right within the active block, while the slots with $c_{k}=0$ are shifted one step to the left, a behavior we will make precise shortly.
\begin{equation}
    \begin{split}
        \hat{s}^{(M = 0, \; m = 0)}
         & = (\underbrace{0\;,\; 0\;,\; \dots \;,\; 0\;,\; 0}_{\scriptscriptstyle c_{k}, k \in [k_{-}, -1]} \;, \underbrace{\underset{c_0}{1}\;,\;1\;,\;\dots \;,\;1}_{\scriptscriptstyle c_{k}, k \in [0, k^{*}-1]}\;,\;\underbrace{2\;,\;\dots \;,\quad\; 2\quad\;,\;(2 \text{ or } 1)}_{\scriptscriptstyle c_{k}, k \in [k^{*}, k_{+}]})                  \\
        \hat{s}^{(M = 1,\; m=0)}
         & = (\underbrace{0\;,\;0\;,\;\dots \;,\;0}_{\scriptscriptstyle c_{k}, k \in [k_{-}, -2]}\;,\;\underbrace{1\;,\;\underset{c_{0}}{2}\;,\;1\;,\;\dots \;,\;1}_{\scriptscriptstyle c_{k}, k \in [-1, k^{*}-1]}\;,\;\underbrace{2\;,\;\dots \;,(2 \text{ or } 1)}_{\scriptscriptstyle c_{k},k \in [k^{*}, k_{+} - 1]},\quad\;\underset{k_{+}}{0}\quad\;)
    \end{split}
\end{equation}

We then enumerate all admissible tuples at stage $M=1$ by the same mass redistribution walk as in the positive case, only now constrained by the newly included negative coordinate (and equivalently the new set of indices it is defined over). After exhausting this block, we repeat the same idea for $M=2$: minimally migrate mass (again from the rightmost admissible positive entries) to satisfy the next required negative coordinate $c_{-2}$ (together with the previously activated $c_{-1}$ and the $c_0$ constraint), thereby producing a new seed $\hat s^{(2)}$, which is maximal for the $-2$-admitting block under $\lexprecpm$. 
Continuing in this manner for $M=3,4,\dots,-k_-$, we \emph{gradually} include $-3,-4,\dots,k_-$, each time starting from a canonical seed and sweeping the corresponding block in lex order. 

In summary, the reader should picture a staircase of seed tuples:
\begin{align}
\hat s^{(M=0,\; m=0)} 
   &\leadsto \cdots 
   \leadsto \hat{s}^{(M=0,\; m=D-k^{*}_{0})} 
   \leadsto \hat s^{(M=1,\; m=0)} 
   \leadsto \cdots 
   \leadsto \hat s^{(M=1,\; m=D-k^{*}_{1})} \notag \\
   &\leadsto \hat s^{(M=2,\; m=0)} 
   \leadsto \cdots 
   \leadsto \hat s^{(M=2,\; m=D-k^{*}_{2})} 
   \leadsto \cdots 
   \leadsto \hat s^{(M=-k_-,\; m=D-k^{*}_{-k_{-}})}\,
\end{align}
where each transition is effected by a minimal, right-to-left shift of mass that (i) enforces the next negative-side constraints from \Cref{condition}– \Cref{E:neg_to_kminus_general}, and (ii) ensures the new seed is the \emph{largest} element (in $\lexprecpm$) of its stage. Now we can formalize our shift function and write in our final closed-form combinatorial expression. 

\paragraph{\textbf{Formalizing the Shift Function}}

At the outer stage $M \in \{0,1,\dots,-k_{-}\}$ we act only on the active index set \Cref{index_blocks}, realized as a left–translate of $[0,k_{+}]$ by $M$:
\begin{equation}
\label{eq:IM-def}
  I_M \;\Def\; \bigl([0,k_{+}] - M\bigr)\cap\mathbb{Z}
  \;=\; \{\,k-M : k\in[0,k_{+}]\,\}\cap\mathbb{Z}
  \;=\; [-M,\,k_{+}-M]\cap\mathbb{Z}.
\end{equation}
Equivalently, the blocks satisfy the recursion
\begin{equation}
\label{eq:IM-recursion}
  I_{M+1} \;=\; I_M - 1 \;=\; \{\,k-1 : k\in I_M\,\}.
\end{equation}
Thus $I_0 = [0,k_{+}]$ and $I_1 = [-1,\,k_{+}-1]$, etc., i.e., the active set is pushed one unit to the left at each stage.

\begin{remark}
\textbf{Position of $k^{*}$}
In global indices $k^{*}$ is fixed, but its \emph{position within the active block} $I_M$ shifts right by one each stage. Measuring position from the left endpoint of $I_M$,
\begin{equation}\label{eq:kstar-position}
  \mathrm{pos}_{I_M}(k^{*}) \;\Def\; k^{*} - \min I_M \;=\; k^{*} - (-M) \;=\; k^{*}+M.
\end{equation}
Equivalently, in the locally re–centered (translated) coordinates
\[
  \tau_M:\mathbb{Z}\to\mathbb{Z},\qquad \tau_M(k)\Def k - \min I_M = k+M,
\]
we have $\tau_M(I_M)=[0,k_{+}]$ and $\tau_M(k^{*}) = k^{*}+M$, making explicit that $k^{*}$ advances one slot to the right at each outer stage $M$.
\end{remark}

Given \Cref{eq:IM-def}– \Cref{eq:IM-recursion} and the conditions stated in\\ \Cref{condition}– \Cref{E:neg_to_kminus_general}, we now define the full process that each application of the shift function applies to the previous $\hat{s}$ in the chain starting at $\hat{s}^{(0)}$.  
We start at stage $M=0$ with our initial seed tuple $\hat{s}^{(0)}$ with its associated parity tag $\beta \in \{1,2\}$ as defined in \Cref{switching_term}. We then define our shifting process formally by the update $\mathcal{A}_{M,t}$ as follows:

\begin{itemize}
\label{shift_process_A}
  \item \textbf{Odd case $\beta=1$:}
  \begin{itemize}
    \item Input $\hat{s}^{(M=0,m=0)}_{\text{odd}} = (c_{0},...,c_{k_{+}} = 1)$.
    \item Subtract $1$ from $c_{k_{+}}$ (so $c_{k_{+}}\gets c_{k_{+}}-1$).
    \item Form $v \Def (1,\,c_0,\,\dots,\,c_{k_{+}-1})$ (left–translate with a leading $1$).
    \item Let $j_t \Def \max\{j: v_j=2\}$ (rightmost local slot holding $2$).
    \item Update
      \[
        v_{j_t} \leftarrow 1, \qquad v_1 \leftarrow v_1 + 1.
      \]
    \item Set $\hat{s}^{(M=1,m=0)}_{\text{odd}} \Def v$.
  \end{itemize}

  \item \textbf{Even case $\beta=2$:}
  \begin{itemize}
    \item Input $\hat{s}^{(M=0,m=0)}_{\text{even}} = (c_{0},...,c_{k_{+}} = 2)$.
    \item Subtract $2$ from $c_{k_{+}}$ (so $c_{k_{+}}\gets c_{k_{+}}-2$).
    \item Form $v \Def (1,\,c_0,\,\dots,\,c_{k_{+}-1})$.
    \item Update
      \[
        v_1 \leftarrow v_1 + 1.
      \]
    \item Set $\hat{s}^{(M=1,m=0)}_{\text{even}} \Def v$.
  \end{itemize}
\end{itemize}

We then translate the active window to $I_{M+1}=I_M-1$ and set the next seed.  
This process (i) reseeds the counting routine each time with a new $\hat{s}^{(i)}$, $i\in[0,-k_-]$, and (ii) produces both (a) a \emph{geometric} terminal index showing where $k^{*}$ sits in global coordinates, and (b) an \emph{effective} index reflecting the mass reservoir used in the next counting pass. We make this explicit as follows.

\paragraph{Dynamic indices}
We define the \emph{geometric} terminal index
\begin{equation}\label{eq:kgeo}
  k^{*}_{\mathrm{geo}}(M)\;\Def\;k^{*}+M,
\end{equation}
which governs the placement of $k^{*}$ inside $I_{M}$, and the \emph{effective} terminal index
\begin{equation}\label{eq:keff}
  \tilde{k}(M)\;\Def\;k^{*}+2M,
\end{equation}
which governs the mass horizon available at stage $M$.  
After applying the stage update we set
\[
\hat{s}^{(M+1)}=\mathcal{S}_M(\hat{s}^{(M)}),
\]
and the next counting pass is performed with the reseated terminal index values $k^{*}_{\mathrm{geo}}(M+1)$ and $\tilde{k}(M+1)$.

\paragraph{Reseeded counting at each stage}
After each shift we rerun the positive–side counting with the reseated terminal index.  
Using the same closed form as in \Cref{eq:final-closed-form} but with $k^{*}\mapsto k^{*}_{\mathrm{geo}}(M)$ for geometry and $\tilde{k}(M)$ for the horizon, the \emph{stage–$M$ contribution} is
\begin{equation}\label{eq:stage-contrib-clean}
  \mathscr{C}^{w}_{D,k^{*}_{\mathrm{geo}}(M)}\!\bigl(\hat{s}^{(M)}\bigr)
  =
  \sum_{i=0}^{\,D-\tilde{k}(M)}
  \left\{
    \binom{i+\ell-\Big\lceil\frac{i+1}{2}\Big\rceil-1}{\ \ell-\Big\lceil\frac{i+1}{2}\Big\rceil-1\ }
    +
    (\beta-1)\,
    \binom{i+\ell-\Big\lceil\frac{i+1}{2}\Big\rceil-1}{\ \ell-\Big\lceil\frac{i+1}{2}\Big\rceil\ }
  \right\}
\end{equation}
Here $(\beta-1)$ is unchanged across stages by construction of our shifting process, and the upper limit 
$D-\tilde{k}(M)=D-k^{*}-2M$ shrinks by $2$ at each stage, reflecting the natural depletion of the mass reservoir by the shift update.

Within each stage we have the deterministic trajectory
\[
\hat s^{(M,0)}\equiv \hat s^{(M)}
\ \rightsquigarrow\
\hat s^{(M,1)}
\ \rightsquigarrow\ \cdots\ \rightsquigarrow\
\hat s^{(M,m)}\in\mathbb{N}^{I_M},
\qquad m=0,1,\dots, m_{\max}(M),
\]
generated by the parity–dependent map
\[
\hat s^{(M,m+1)} \;=\; \mathcal{S}_{\beta}^{(M)}\!\bigl(\hat s^{(M,m)}\bigr),
\qquad m=0,1,\dots, m_{\max}(M)-1,
\]
where the counting index $m$ coincides with the summation index $i$ in \Cref{eq:stage-contrib-clean}.
We define the stage horizon and reseeding map by
\[
m_{\max}(M) \;\Def\; D - \tilde{k}(M),
\;\;
\mathscr{S}^{(M)}(\hat s^{(M)}) \;\Def\; \hat s^{(M,m_{\max}(M))},
\;\;
\hat s^{(M+1)} \;\Def\; \mathscr{S}^{(M)}(\hat s^{(M)}),
\]
so that \(m_{\max}(M)\) is exactly the upper limit of the summation in \Cref{eq:stage-contrib-clean} and the terminal state of stage \(M\) becomes the seed for stage \(M+1\).

\paragraph{Total reseeded count (using the existing outer stopping time).}
A shift is applicable at stage $M$ iff the active window can move left without
colliding with either boundary; equivalently,
\[
k_{+}-M \;>\; k^{*}
\quad\text{and}\quad
-M \;>\; k_{-}.
\]
Thus the outer process stops at the first $M$ for which a further shift would not
change $k^{*}$, namely
\begin{equation}\label{eq:outer-stop}
  T \;\Def\; \min\{\,k_{+}-k^{*},\,-k_{-}\,\}.
\end{equation}
For $M<T$ we have $k^{*}_{\mathrm{geo}}(M)=k^{*}+M$ and the next reseed is well–defined; at $M=T$ no further shift is applied and $k^{*}$ ceases to change.  
By recursion, \(\hat s^{(0)}\) is the given seed and, for \(M\ge1\), \(\hat s^{(M)}=\mathscr{S}^{(M-1)}(\hat s^{(M-1)})\).

We then define the total reseeded count as
\begin{equation}\label{eq:dynamic-total-count-final}
  \mathscr{C}^{w}_{D,\mathrm{dyn}}\!\bigl(\hat s^{(0)}\bigr)
  \;=\;
  \sum_{M=0}^{T}
  \mathscr{C}^{w}_{D,k^{*}_{\mathrm{geo}}(M)}\!\bigl(\hat s^{(M)}\bigr),
  \qquad k^{*}_{\mathrm{geo}}(M)=k^{*}+M,\quad
  \tilde{k}(M)=k^{*}+2M.
\end{equation}

\begin{remark}[\textit{Note on the naturality of our stopping condition}]
    Our stopping index $T$ is not imposed externally but arises from the natural mechanics of the shift process~$\mathcal{A}_{M}$.  
    The active window can shift left exactly $T$ times before either the boundary $k_{-}$ or the right–capacity $k_{+}$ prevents a further shift.  
    With our effective index $\tilde{k}(M)=k^{*}+2M$ the available mass horizon shrinks by $2$ per completed shift and the final “leftover” unit is automatically counted when $m_{\max}(M)=1$ (even case $\cup \{k^{*} = 0\}$) or $m_{\max}(M)=0$ (odd case), so no additional ad hoc correction is required.
\end{remark}

\section{Computational Complexity Analysis}
\label{computation_complexity}
Before we begin our computational complexity analysis analytically, it serves to observe something that should appear intuitive. We start by referencing a visualization of a recombining tree from~\cite{LebedevBanerjee2024}:
\begin{figure}[H] 
    \centering 
    \includegraphics[width=0.9\textwidth]{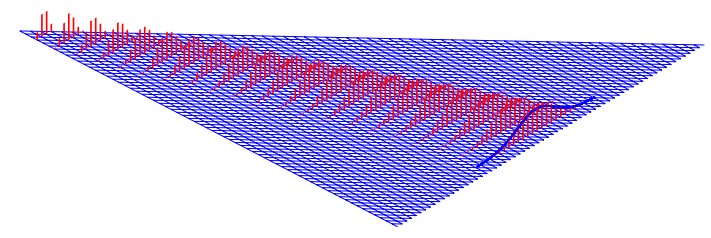} 
    \caption{Distribution of the Concentration of Paths} 
    \label{path_concentration} 
\end{figure}
Although this visualization is of a binomial tree, it highlights a phenomenon that our counting algorithm establishes analytically and empirically: the path density at each terminal position $k^{*}$ peaks at $k^{*} = 0$ and smoothly decays to $1$ at the boundary terminal nodes $k^{*} = \{-D\},\{D\}$. This observation is rigorously justified through the symmetry result in \Cref{thm:reflection-symmetry} and is true for any recombining $n$-nomial tree (which of course includes the tree we worked over). Consider the distribution of terminal nodes indexed by~$k^{*}$ in the interval  
\begin{equation*}
[k^{*} = -D,\,0) \;\cup\; \{0\} \;\cup\; (0,\,D] .
\end{equation*}  
From \Cref{path_concentration} (or equivalently by direct evaluation of \Cref{eq:dynamic-total-count-final} for each $k^{*}$), it follows that the cardinalities satisfy  
\begin{equation}
\max_{k^{*}\in[-D,D]} \, \bigl|\{\ppath \in \mathcal{C}_{D,k^{*}}\}\bigr|
\;=\;
\bigl|\{\ppath \in \mathcal{C}_{D,0}\}\bigr|,
\end{equation}  
with the path counts decaying symmetrically as $|k^{*}|$ increases. In particular, the boundary values obey  
\begin{equation}
\forall D \in \mathbb{N}, 
\qquad 
\bigl|\{\ppath \in \mathcal{C}_{D,\pm D}\}\bigr| = 1,
\end{equation}  
so that the distribution is unimodal at $k^{*}=0$ and strictly decreasing toward the endpoints. Thus, in analyzing the computational complexity of the algorithm that performs computations over these paths, it suffices to show that the runtime is bounded above by the enumeration count of the path family with the largest cardinality. Equivalently, this reduces to identifying the terminal index~$k^{*}$ at which the maximum mass accumulates---both perspectives are equivalent. Accordingly, we bound the counting formula by considering the set~$\mathcal{C}_{D,0}$, since  
\begin{equation}
\max_{k^{*}\in[-D,D]} \, \bigl|\mathcal{C}_{D,k^{*}}\bigr| \;=\; \bigl|\mathcal{C}_{D,0}\bigr| .
\end{equation}  
On a per-round basis, the largest enumeration occurs when we generate all positive paths i.e. $\mathcal{C}_{D,0}^{+}$. (Note that the symmetric contribution from negative paths is picked up in parallel, but is always strictly smaller because the mass is constrained at each application of the shift function.) To formalize this, we let
\begin{itemize}
    \item $\ell \Def \lceil D/2 \rceil$ - "slot" parameter used in every binomial index
    \item $s \Def \lfloor D/2 \rfloor$ - the maximum number of "shifts" the parity-aware shift operator $\mathcal{S}_{\beta}$ can perform
\end{itemize} 
We can that that each general configuration of an arbitrary $\hat{c} = (c_{0},...,c_{k_{+}}) \in \mathcal{C}_{D,k^{*}}^{+}$ is counted by \Cref{eq:final-closed-form}. We can thus make the substitutions
\begin{itemize}
    \item $r_{i} \Def \ell - \lceil\frac{i + 1}{2}\rceil$
    \item $N_{i} \Def i + r_{i} - 1$
    \item $K_{i} \Def r_{i} - 1$
\end{itemize}
into \Cref{eq:final-closed-form} to get:
\begin{equation*}
    \mathscr{C}(\hat{s}^{(0)}) = \sum_{i=0}^{D-k^{*}}\!\left(
  \binom{N_i}{K_i}
  \;+\;
  (\beta - 1)\,\binom{N_i}{K_i + 1}
\right).
\end{equation*}
One application of the shift operator $\mathcal{S}_{\beta}$ removes two units of mass from the rightmost end and slides the tuple one place right. Therefore, after the $k^{\text{th}}$ shift, we conservatively have:
\begin{align*}
    D_{k}\Def D - k
\end{align*}
We can then present the following lemma:
\begin{lemma}[Refined upper bound via entropy]\label{lem:refined-upper-entropy}
Let $c$ be a configuration with effective depth $\tilde d\in\mathbb{N}$ and put $\ell=\lceil \tilde d/2\rceil$.
For $i\ge0$ define
\begin{equation}
\begin{gathered}
r_i=\ell-\Big\lceil\frac{i+1}{2}\Big\rceil,\qquad
N_i=i+r_i-1=\ell-1+i-\Big\lceil\frac{i+1}{2}\Big\rceil,\\
K_i=r_i-1=\ell-\Big\lceil\frac{i+1}{2}\Big\rceil-1.
\end{gathered}
\end{equation}
Then there exists a constant $C>0$ such that
\begin{equation}
\mathscr{C}(\hat{s}^{(0)})\;\le\; C\,\tilde d^{1/2}\,\gamma^{\,\tilde d},
\qquad
\gamma \;=\; 2^{\frac{3}{4}H(1/3)} \approx 1.61185,
\end{equation}
where $H(x)=-x\log_2 x-(1-x)\log_2(1-x)$.
\end{lemma}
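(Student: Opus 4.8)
The plan is to reduce the claimed estimate to a single–variable optimization of a binomial exponent, controlled by the binary–entropy bound for $\binom{n}{k}$, and then to trade the $O(\tilde d)$ number of summands against the $O(\tilde d^{-1/2})$ Stirling prefactor carried by the peak term. Throughout I would work with the closed form \Cref{eq:final-closed-form} specialized to the lemma's $N_i,K_i$, and treat the parity switch $\beta$ from \Cref{switching_term} as a constant-factor perturbation.

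First I would strip away the lower-order structure. The summand is $\binom{N_i}{K_i}+(\beta-1)\binom{N_i}{K_i+1}$ with $\beta-1\in\{0,1\}$; since $\binom{N_i}{K_i+1}/\binom{N_i}{K_i}=(N_i-K_i)/(K_i+1)=i/(K_i+1)$ remains bounded on the range that matters, the second term only inflates the final constant $C$, so it suffices to bound $\sum_i\binom{N_i}{K_i}$. Next I would record the clean identities $N_i-K_i=i$ and, with $\ell=\lceil\tilde d/2\rceil$, $N_i=\ell+\lceil i/2\rceil-2$ and $K_i=\ell-\lfloor i/2\rfloor-2$, so that both $N_i$ and the ratio $p_i:=K_i/N_i$ are smooth functions of $i$ up to $O(1)$ integer corrections.

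The core step is the entropy optimization. Using a two-sided Stirling estimate $\binom{n}{k}\le C_0\,(np(1-p))^{-1/2}\,2^{nH(p)}$ (with the prefactor-free bound $\binom{n}{k}\le 2^{nH(k/n)}$ reserved for the few terms where $p_i$ is near $1$, i.e. $i=O(1)$, which are only polynomially large), each principal term obeys $\binom{N_i}{K_i}\le C_2\,\tilde d^{-1/2}\,2^{N_iH(p_i)}$ because $N_i=\Theta(\tilde d)$ there. Parametrising $i=\alpha\tilde d$ gives, to leading order, $N_i\approx\tfrac{\tilde d}{2}(1+\alpha)$ and $p_i\approx\tfrac{1-\alpha}{1+\alpha}$, so the exponent is $\tilde d\,\varphi(\alpha)$ with $\varphi(\alpha)=\tfrac{1+\alpha}{2}H\!\bigl(\tfrac{1-\alpha}{1+\alpha}\bigr)$. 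The substitution $u=\tfrac{1-\alpha}{1+\alpha}$ reduces the problem to maximising $H(u)/(1+u)$ over $u\in[0,1]$, and the stated value $\gamma=2^{\frac34 H(1/3)}$ is exactly the exponent of the configuration $p_i=1/3$, i.e. $N_i=\tfrac34\tilde d$ (equivalently $i\approx\tilde d/2$). Having located the peak exponent $\varphi^\star$, I would bound the whole sum crudely by (number of terms)$\times$(peak term) $\le(\tilde d+1)\cdot C_2\tilde d^{-1/2}2^{\varphi^\star\tilde d}$, which collapses to $C\,\tilde d^{1/2}\gamma^{\tilde d}$; the power $\tilde d^{1/2}$ is precisely the $O(\tilde d)$ summands divided by the $\sqrt{\tilde d}$ Stirling prefactor.

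The main obstacle is this optimization step together with the exact identification of $\gamma$. One must solve the critical-point equation $H'(u)(1+u)=H(u)$ and verify that its root—together with the boundary behaviour at $\alpha\in\{0,1\}$ and the $O(1)$ shifts in $N_i,K_i$ dropped in the continuum limit—genuinely certifies $2^{\frac34 H(1/3)}$ as an \emph{upper} bound for the growth constant rather than a merely nearby value. This is delicate precisely because $\varphi$ is very flat near its maximum, so the integer corrections and the one-sided nature of the Stirling inequality must be tracked carefully; confirming that $p_i=1/3$ (not a slightly smaller ratio) controls the peak, and that the entropy estimate is uniform across the summation range, is the remaining technical point to nail down.
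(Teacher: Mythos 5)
Your strategy is essentially the paper's own (entropy bound on each binomial, locate the peak exponent over $i$, pay a factor of $\tilde d$ for the number of summands), but the step you explicitly defer --- ``confirming that $p_i=1/3$ (not a slightly smaller ratio) controls the peak'' --- is exactly where the argument breaks, and it cannot be repaired. The critical-point equation $H'(u)(1+u)=H(u)$ is \emph{not} solved by $u=1/3$: its root is $u^\star=\tfrac{3-\sqrt5}{2}=1/\phi^2\approx0.382$, where $\phi=\tfrac{1+\sqrt5}{2}$ is the golden ratio, and $\max_{u\in[0,1]}H(u)/(1+u)=\tfrac12\log_2\phi\cdot 2/(1)\big|_{\text{at }u^\star}$ works out to $\varphi^\star=\log_2\phi\approx0.6943$, strictly larger than $\tfrac34H(1/3)\approx0.6887$. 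The peak of your $\varphi(\alpha)$ therefore sits at $\alpha^\star=1/\sqrt5\approx0.447$ (not $\alpha=1/2$), and your ``number of terms times peak term'' bound yields $O(\tilde d^{1/2}\phi^{\tilde d})$, not $O(\tilde d^{1/2}\gamma^{\tilde d})$ with $\gamma\approx1.612$. This is not an artifact of loose estimates: using your own identities $N_i=\ell+\lceil i/2\rceil-2$, $K_i=\ell-\lfloor i/2\rfloor-2$ and substituting $m=\ell-j-2$ for $i=2j$ turns the principal terms into $\binom{(2\ell-4)-m}{m}$, and the diagonal Pascal identity $\sum_m\binom{n-m}{m}=F_{n+1}$ shows the sum over even $i$ alone equals a Fibonacci number, so $\mathscr{C}(\hat s^{(0)})=\Theta(\phi^{\tilde d})$ up to polynomial factors. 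Since $\phi>\gamma$, no constant $C$ makes $C\,\tilde d^{1/2}\gamma^{\tilde d}$ an upper bound for large $\tilde d$; the stated base is simply too small.

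For comparison, the paper's proof commits precisely the error you were worried about: it combines $N_i\le\tfrac34\tilde d+O(1)$ (itself false for $i>\tilde d/2$, since $N_i=\ell+\lceil i/2\rceil-2$ climbs to about $\tilde d$ at the right end of the range) with $K_i/N_i\ge\tfrac13-O(1/\tilde d)$ and concludes $2^{N_iH(K_i/N_i)}\le\gamma^{\tilde d}$. But $H$ is increasing on $[0,\tfrac12]$, so a \emph{lower} bound on $K_i/N_i$ gives a lower bound on the entropy, not the upper bound the chain of inequalities needs. Your instinct that the flatness of $\varphi$ near its maximum and the exact location of the critical point had to be checked was the right one; carrying out that check shows the honest output of this method is the (correct) bound $\mathscr{C}(\hat s^{(0)})=O(\tilde d^{1/2}\phi^{\tilde d})$ with $\phi\approx1.618$, and that the lemma as stated, with $\gamma=2^{\frac34H(1/3)}\approx1.612$, is not provable because it is false.
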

\begin{proof}
Use $\lceil x\rceil\ge x$ and $\lceil x\rceil\le x+1$ to get, for all relevant $i$,
\begin{equation*}
N_i\le \ell-1+\frac{i}{2}\le \tfrac{3}{4}\tilde d + \mathscr{O}(1),
\qquad
\frac{K_i}{N_i}\;\ge\;\frac{2\ell - i - 3}{2\ell + i - 2}\;=\;\frac{1}{3}-\mathscr{O}\Big(\frac{1}{\tilde d}\Big).
\end{equation*}
By the entropy-form binomial bound (e.g., \cite[Thm.~VIII.1]{FlajoletSedgewick}, \cite[Ch.~11]{CoverThomas}),
\[
\binom{N}{K}\;\le\;\frac{C_0}{\sqrt{N}}\;2^{\,N H(K/N)}.
\]
Apply with $(N,K)=(N_i,K_i)$ and the bounds above to obtain
\[
\binom{N_i}{K_i},\ \binom{N_i}{K_i+1}\;\le\;\frac{C_1}{\sqrt{\tilde d}}\;\gamma^{\,\tilde d},
\quad
\gamma=2^{\frac{3}{4}H(1/3)}.
\]
Summing over at most $\tilde d+1$ indices $i$ yields $\mathscr{C}(\hat{s}^{(0)})\le C\,\tilde d^{1/2}\gamma^{\tilde d}$.
\end{proof}

\begin{remark}[Optional collapse]
Using the hockey-stick identity \\$\sum_{i=0}^{r}\binom{i+\ell-1}{\ell-1}=\binom{r+\ell}{\ell}$ \cite[Eq.~(5.25)]{GKP}
to collapse the $i$–sum first gives a single binomial term and improves the pre-factor to $\mathscr{O}(\tilde d^{-1/2})$ (same base $\gamma$).
\end{remark}
We then write another Lemma where we examine the per-round decay under shifts:
\begin{lemma}[Per-round decay under shifts]
    Let $\mathscr{C}_{k}$ denote the cost after exactly $k$ applications of $\mathcal{S}_{\beta}$, and set $D_k \Def D-k$.
    There exists a constant $C>0$ (the same as in \Cref{lem:refined-upper-entropy}) such that, for all $0\le k\le m$,
    \begin{equation}\label{eq:per-round-decay}
        \mathscr{C}_{k}
        \;\le\;
        C\,D_k^{1/2}\,\gamma^{\,D_k}
        \;\le\;
        C\,D^{1/2}\,\gamma^{\,D}\,\rho^{\,k},
        \qquad
        \rho \Def \gamma^{-1} \in (0,1).
    \end{equation}
\end{lemma}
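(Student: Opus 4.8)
The plan is to establish the two chained inequalities of \Cref{eq:per-round-decay} in turn: the first by invoking \Cref{lem:refined-upper-entropy}, and the second by an elementary factorization. First I would pin down the \emph{effective depth} $\tilde d_k$ that governs the cost $\mathscr{C}_k$ after exactly $k$ applications of the shift operator $\mathcal{S}_\beta$. Each shift removes two units of mass from the rightmost slot and translates the active window one step left, so the mass horizon feeding the binomial sum \Cref{eq:stage-contrib-clean} can only shrink; concretely the true effective depth tracks $D-\tilde k(M)=D-k^{*}-2M$, and conservatively one has $\tilde d_k \le D_k = D-k$. We deliberately use $D-k$ rather than the tighter $D-2k$ in order to obtain a clean, uniform upper bound.

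With this in hand, the first inequality follows directly. Applying \Cref{lem:refined-upper-entropy} at round $k$ with effective depth $\tilde d_k$ gives $\mathscr{C}_k \le C\,\tilde d_k^{1/2}\,\gamma^{\,\tilde d_k}$ with the same uniform constant $C$. Since the map $x\mapsto x^{1/2}\gamma^{x}$ is strictly increasing on $x>0$ (its logarithmic derivative $\tfrac{1}{2x}+\ln\gamma$ is positive because $\gamma>1$), and since $\tilde d_k\le D_k$, I can replace $\tilde d_k$ by $D_k$ to conclude $\mathscr{C}_k\le C\,D_k^{1/2}\,\gamma^{\,D_k}$, which is exactly the first claimed bound. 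The hypothesis $0\le k\le m\le\lfloor D/2\rfloor$ guarantees $D_k=D-k\ge 0$, so all quantities are well defined.

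For the second inequality I would simply factor the exponential. Writing $D_k=D-k$ gives $\gamma^{\,D_k}=\gamma^{\,D-k}=\gamma^{\,D}\gamma^{-k}=\gamma^{\,D}\rho^{\,k}$ with $\rho\Def\gamma^{-1}$, and since $\gamma\approx 1.612>1$ we indeed have $\rho\in(0,1)$. For the polynomial prefactor, $0\le k$ forces $D_k^{1/2}=(D-k)^{1/2}\le D^{1/2}$. Multiplying these two bounds yields $C\,D_k^{1/2}\gamma^{\,D_k}\le C\,D^{1/2}\gamma^{\,D}\rho^{\,k}$, completing the chain in \Cref{eq:per-round-decay}.

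The only genuinely substantive step is the first one: rigorously justifying that after $k$ shifts the effective depth is bounded by $D-k$. This requires unpacking the mechanics of $\mathcal{S}_\beta$—that each application strictly depletes the transferable mass reservoir recorded by $\tilde k(M)=k^{*}+2M$ and hence lowers the upper summation limit $D-\tilde k(M)$ in \Cref{eq:stage-contrib-clean}—and then observing that this (tighter) depth is still dominated by the conservative $D-k$. Everything downstream is the monotonicity of $x^{1/2}\gamma^{x}$ together with a one-line factorization, so I expect no analytic difficulty there; the lemma is essentially a bookkeeping consequence of \Cref{lem:refined-upper-entropy}.
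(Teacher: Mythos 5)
Your proposal is correct and follows essentially the same route as the paper's proof: bound the effective depth after $k$ shifts by $D_k=D-k$, invoke \Cref{lem:refined-upper-entropy} at that depth, and then factor $\gamma^{D_k}=\gamma^{D}\rho^{k}$ together with $D_k^{1/2}\le D^{1/2}$. Your added monotonicity observation for $x\mapsto x^{1/2}\gamma^{x}$ (justifying the passage from the true effective depth $\tilde d_k$ up to the conservative $D_k$) is a small point the paper glosses over, but it does not change the argument.
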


\begin{proof}
    After $k$ shifts the effective depth is at most $D_k=D-k$. Applying
 \Cref{lem:refined-upper-entropy} with $\tilde d=D_k$ yields
    $\mathscr{C}_{k}\le C\,D_k^{1/2}\,\gamma^{D_k}$.
    Since $D_k^{1/2}\le D^{1/2}$ and $\gamma^{D_k}=\gamma^{D}\cdot\gamma^{-k}$, we obtain \Cref{eq:per-round-decay} with $\rho=\gamma^{-1}$.
\end{proof}

We then have yet another Lemma where we examine the per-round decay under shifts:
\begin{lemma}[Outer stopping time]
    The shift process stops no later than round $s=\lfloor D/2\rfloor$.
\end{lemma}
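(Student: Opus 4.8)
The plan is to reduce the claim to an explicit evaluation of the outer stopping time $T$ from \Cref{eq:outer-stop}, namely $T\Def\min\{k_{+}-k^{*},\,-k_{-}\}$, and then to bound $T$ by $\lfloor D/2\rfloor$ using the closed forms for $k_{\pm}$ recorded in \Cref{E:Kbounds}. Since the construction is carried out on the positive side $k^{*}\ge 0$ (the case $k^{*}<0$ being handled by the reflection symmetry of \Cref{thm:reflection-symmetry}), it suffices to establish the bound for $k^{*}\in\{0,1,\dots,D\}$.

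The key steps are as follows. First I would substitute the two-case formulas for $k_{+}$ and $k_{-}$ into $T$, splitting on the parity of $D-k^{*}$. In the even case ($D-k^{*}$ even) one computes $k_{+}-k^{*}=\tfrac{D+k^{*}}{2}-k^{*}=\tfrac{D-k^{*}}{2}$ and $-k_{-}=-\tfrac{k^{*}-D}{2}=\tfrac{D-k^{*}}{2}$; in the odd case one computes $k_{+}-k^{*}=\tfrac{D-k^{*}-1}{2}$ and $-k_{-}=\tfrac{D-k^{*}-1}{2}$. The crucial observation, which makes the minimum trivial, is that in both parity regimes the two arguments of the $\min$ coincide, so that
\begin{equation*}
T \;=\; \Bigl\lfloor \frac{D-k^{*}}{2}\Bigr\rfloor .
\end{equation*}
This collapse is precisely the manifestation of the tree's reflection symmetry at the level of the admissible window $[k_-,k_+]$: the window is centered so that the distance from $k^{*}$ to the right boundary $k_{+}$ equals the distance from $0$ to the left boundary $k_{-}$, which is exactly why the two shift-blocking constraints bind simultaneously.

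Finally, since we are assuming $k^{*}\ge 0$, we have $D-k^{*}\le D$, whence $T=\lfloor (D-k^{*})/2\rfloor \le \lfloor D/2\rfloor = s$, which is the assertion; equality holds precisely at $k^{*}=0$, so the bound is tight and consistent with the unimodality at $k^{*}=0$ discussed in \Cref{computation_complexity}. I do not anticipate a genuine obstacle here: the entire argument is a direct substitution, and the only point requiring care is the parity bookkeeping together with the one-line verification that the two entries of the $\min$ are equal, after which the floor inequality $\lfloor(D-k^{*})/2\rfloor\le\lfloor D/2\rfloor$ for $k^{*}\ge 0$ is immediate.
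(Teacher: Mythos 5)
Your proof is correct, but it takes a genuinely different route from the paper's. You evaluate the explicit stopping time $T=\min\{k_{+}-k^{*},\,-k_{-}\}$ of \Cref{eq:outer-stop} by substituting the closed forms for $k_{\pm}$ from \Cref{E:Kbounds}, observe that the two arguments of the minimum coincide in both parity regimes, and conclude $T=\lfloor (D-k^{*})/2\rfloor\le\lfloor D/2\rfloor$ for $k^{*}\ge 0$ (with $k^{*}<0$ deferred to \Cref{thm:reflection-symmetry}). The paper instead argues by a mass budget: each application of $\mathcal{S}_{\beta}$ removes exactly two units of removable mass from a reservoir of size at most $D$, and the mass never increases, so at most $\lfloor D/2\rfloor$ shifts are possible before depletion. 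Your computation buys more — it pins down the exact value of $T$ rather than just an upper bound, and it makes visible that the two shift-blocking constraints bind simultaneously, which is a nice structural remark — but it leans entirely on the specific formulas for $k_{\pm}$ and on the prior identification of the stopping round with the $T$ of \Cref{eq:outer-stop}. The paper's depletion argument is coarser but more robust: it would survive any modification of the window geometry so long as each shift still consumes two units of mass. Both are valid; yours is the sharper statement, the paper's is the more invariant one.
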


\begin{proof}
    By definition, one application of $\mathcal{S}_{\beta}$ via $\mathcal{A}$ removes two units of mass from the rightmost end and shifts the tuple one slot to the right (c.f. \Cref{shift_process_A}). Let $M_0$ denote the total removable mass available to the shift operator at the start of a global round. Because the tuple encodes a path of length $d$ with at least one unit in every occupied slot, the removable mass satisfies $M_0\le D$. Each shift reduces the removable mass by exactly $2$ and never increases it (collisions merge “$2$”s but do not create new ones). Hence after at most $\lfloor D/2\rfloor$ shifts the removable mass is exhausted and no further application
    of $\mathcal{S}_{\beta}$ is possible. Equivalently, the pivot cannot advance beyond the rightmost slot once $M_0$ has been
    depleted, so the procedure halts by round $s=\lfloor D/2\rfloor$.
\end{proof}
We then clearly have a complete theorem for the computational upper bound on the combinatorics:
\begin{theorem}[Total running time]\label{thm:total-time}
Let
\[
T(D) \Def \sum_{k=0}^{s} \mathscr{C}_{k},
\qquad
s \Def \bigl\lfloor D/2 \bigr\rfloor,
\]
where $\mathscr{C}_{k}$ is the cost after exactly $k$ shifts.
With $\gamma \Def 2^{\frac{3}{4}H(1/3)}\approx 1.61185$ and $\rho\Def\gamma^{-1}\approx 0.6206$,
\Cref{lem:refined-upper-entropy} and the per-round decay lemma give
\[
T(D)
\;\le\;
C\,D^{1/2}\,\gamma^{D}
\sum_{k=0}^{s}\rho^{k}
\;\le\;
\frac{C}{1-\rho}\,D^{1/2}\,\gamma^{D}.
\]
Numerically, $1-\rho \approx 0.3794$ and $\frac{1}{1-\rho}\approx 2.636$, hence
\[
T(D)\;\le\; C_{\!*}\,D^{1/2}\,\gamma^{D}
\quad\text{with}\quad
C_{\!*}\Def \frac{C}{1-\rho}\;\lesssim\; 2.64\,C.
\]
In particular,
\[
T(D)\;=\;\mathscr{O}\bigl(D^{1/2}\,1.612^{\,D}\bigr).
\]
\end{theorem}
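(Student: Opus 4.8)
The plan is to assemble the statement directly from the three preceding lemmas, treating it as a short bookkeeping step layered on top of the entropy bound. Starting from the definition $T(D)\Def\sum_{k=0}^{s}\mathscr{C}_{k}$ with $s=\lfloor D/2\rfloor$, I would first invoke the per-round decay lemma to replace each summand by its uniform geometric bound $\mathscr{C}_{k}\le C\,D^{1/2}\,\gamma^{D}\,\rho^{k}$, where $\rho=\gamma^{-1}\in(0,1)$ and the constant $C$ is exactly the one supplied by \Cref{lem:refined-upper-entropy}. This is the crucial reduction: it converts the depth-dependent quantities $D_{k}=D-k$ into a single prefactor $D^{1/2}\gamma^{D}$ times a pure geometric weight $\rho^{k}$, at which point the depth no longer couples to the summation index.

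Next I would factor the $k$-independent quantity out of the sum and bound the remaining geometric series. Since $0<\rho<1$, the partial sum is dominated by the full series,
\[
\sum_{k=0}^{s}\rho^{k}\;\le\;\sum_{k=0}^{\infty}\rho^{k}\;=\;\frac{1}{1-\rho},
\]
so that $T(D)\le \tfrac{C}{1-\rho}\,D^{1/2}\,\gamma^{D}$. Setting $C_{\!*}\Def C/(1-\rho)$ and inserting the numerical values $\gamma\approx1.61185$, $\rho\approx0.6206$, and $1-\rho\approx0.3794$ gives $C_{\!*}\approx2.636\,C$, and hence $T(D)=\mathscr{O}\bigl(D^{1/2}\,1.612^{D}\bigr)$.

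There is genuinely little left to prove at this stage, because the analytic difficulty has been front-loaded into \Cref{lem:refined-upper-entropy}. The one point I would take care to verify is that the constant $C$ really is uniform across all rounds $k$: the entropy lemma must be applied with effective depth $\tilde d=D_{k}$, and I would confirm that its hypotheses (the bounds $N_i\le\tfrac34\tilde d+\mathscr{O}(1)$ and $K_i/N_i\ge\tfrac13-\mathscr{O}(1/\tilde d)$) hold with the \emph{same} absolute constant for every $\tilde d\le D$, so that no $k$-dependent blow-up leaks into the prefactor. The outer-stopping-time lemma supplies the finite index range $0\le k\le s$, but for the upper bound this is not even essential, since extending the geometric series to infinity only weakens the inequality. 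The main obstacle, such as it is, therefore lies not in this theorem but in ensuring the cited entropy bound controls \emph{both} binomial terms $\binom{N_i}{K_i}$ and $\binom{N_i}{K_i+1}$ uniformly; once that is granted, the geometric summation collapses immediately and the stated bound follows.
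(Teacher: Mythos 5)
Your proposal is correct and follows essentially the same route as the paper: apply the per-round decay bound $\mathscr{C}_{k}\le C\,D^{1/2}\gamma^{D}\rho^{k}$, sum the geometric series (the paper keeps the partial-sum factor $\tfrac{1-\rho^{s+1}}{1-\rho}$ before dropping the tail, while you bound directly by the full series — the same estimate), and insert the numerics. Your added check that the constant $C$ from \Cref{lem:refined-upper-entropy} is uniform in the effective depth $\tilde d=D_k$ is a sensible point of care but does not change the argument.
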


\begin{proof}
By the per-round bound,
$\mathscr{C}_{k}\le C\,D^{1/2}\gamma^{D}\rho^{k}$ for $0\le k\le s$.
Summing the geometric series and using $s=\lfloor D/2\rfloor$ (so the tail factor drops),
\[
T(D)\le C D^{1/2}\gamma^{D}\frac{1-\rho^{\,s+1}}{1-\rho}\le \frac{C}{1-\rho} D^{1/2}\gamma^{D}.
\]
Insert the numerics for $\rho=\gamma^{-1}$.
\end{proof}

\begin{corollary}[Exponential speedup over naive recursion]
The naive exhaustive recursion costs $3^{D}$ operations. Therefore
\[
\frac{3^{D}}{T(D)}
\;\ge\;
\frac{1}{C_{\!*}}\,\frac{(3/\gamma)^{D}}{D^{1/2}}
\quad\text{with}\quad
\frac{3}{\gamma}\approx 1.861.
\]
Hence the ratio grows unboundedly like $(1.861)^{D}/(C_{\!*}\,D^{1/2})$, proving an exponential improvement.
\end{corollary}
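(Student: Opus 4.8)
The plan is to obtain this corollary as an immediate consequence of \Cref{thm:total-time}, so the argument is short and its content is bookkeeping of constants rather than any genuine difficulty. The single ingredient I would invoke is the upper bound $T(D)\le C_{\!*}\,D^{1/2}\,\gamma^{D}$ already established there, together with the fact that $C_{\!*}=C/(1-\rho)$ is a fixed positive constant independent of $D$.

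First I would substitute this bound into the denominator of the target ratio. Since dividing by a larger quantity can only decrease the quotient, the inequality $T(D)\le C_{\!*}\,D^{1/2}\,\gamma^{D}$ gives
\[
\frac{3^{D}}{T(D)}\;\ge\;\frac{3^{D}}{C_{\!*}\,D^{1/2}\,\gamma^{D}}\;=\;\frac{1}{C_{\!*}}\cdot\frac{(3/\gamma)^{D}}{D^{1/2}},
\]
where the final equality is just the regrouping $3^{D}/\gamma^{D}=(3/\gamma)^{D}$. This is precisely the stated inequality, and substituting $\gamma\approx1.61185$ yields the numerical base $3/\gamma\approx1.861$.

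Next I would establish the divergence asserted at the end. Because the prefactor $1/C_{\!*}$ is a fixed constant, it does not affect the limiting behavior, so it suffices to analyze $(3/\gamma)^{D}/D^{1/2}$. The key quantitative check — the only point deserving any attention, and it is routine — is that $3/\gamma>1$, equivalently $\gamma<3$. This holds comfortably: since $H(1/3)<1$ we have $\gamma=2^{\frac{3}{4}H(1/3)}<2^{3/4}<2<3$. With $3/\gamma$ strictly exceeding $1$, the numerator $(3/\gamma)^{D}$ grows exponentially while the denominator $D^{1/2}$ grows only polynomially, so $(3/\gamma)^{D}/D^{1/2}\to\infty$ as $D\to\infty$.

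Concluding, the ratio $3^{D}/T(D)$ is bounded below by the unboundedly growing quantity $(1.861)^{D}/(C_{\!*}\,D^{1/2})$, which demonstrates that the speedup over the naive $3^{D}$ recursion is exponential. I do not anticipate any obstacle here, as the entire proof reduces to one algebraic rearrangement and the elementary observation that an exponential with base greater than one dominates any fixed power of $D$; the comparison $\gamma<3$ is the sole inequality to verify.
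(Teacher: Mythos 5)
Your proposal is correct and follows exactly the route the paper intends: the corollary is an immediate consequence of the bound $T(D)\le C_{\!*}\,D^{1/2}\,\gamma^{D}$ from \Cref{thm:total-time}, followed by the rearrangement $3^{D}/\gamma^{D}=(3/\gamma)^{D}$ and the observation that $3/\gamma>1$. Your explicit verification that $\gamma<3$ via $H(1/3)<1$ is a small but welcome addition that the paper leaves implicit.
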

These bounds establish a rigorous computational upper bound on the complexity of the combinatorial enumeration. In turn, they also imply a theoretical lower bound on the runtime that any implementation of our algorithm must incur. The key task then is to analyze the performance of our actual implementation, derive its achievable practical upper bound, and compare this with the theoretical lower bound. The gap between the two quantifies how closely the implemented algorithm approaches the fundamental efficiency limits.

\section{Conclusions}

We close by highlighting two complementary threads. 
In our first thread \emph{future work}, we outline several directions that naturally follow from our framework, including extensions to general $n$-nomial trees, links to Motzkin or Dyck–style occupation profiles, sharper complexity bounds, and density-aware sampling schemes for very deep trees. 
Together, these closing sections synthesize the practical impact of our contributions and chart a concrete agenda for subsequent research.
Then in our second thread \emph{applications}, we summarize how our enumeration and counting results can be used in practice—e.g., discretizing stochastic dynamics on lattices, path-dependent valuation and risk aggregation, and planning/rollout in various discrete event problems—emphasizing when the constructive algorithm is preferable to classical recursion.

\subsection{Further Work}

\subsubsection{Connection with Motzkin Paths}
\label{ss:motzkin-connection}
There is a natural correspondence between length-$D$ walks on our recombining trinomial tree (step set $\{-1,0,+1\}$) and Motzkin walks of length $D$ (directed lattice paths with up, flat, down steps). Imposing the usual half-plane constraint (never going below the baseline) and endpoint $0$ specializes these to classical Motzkin paths, whose univariate generating function is algebraic and whose enumerants are the Motzkin numbers. Removing $0$-steps further specializes to Dyck paths (Catalan objects). See, e.g., Banderier--Flajolet for a unified treatment of directed lattice paths (including $\{-1,0,+1\}$) via the kernel method, generating functions, and half-plane constraints~\cite{BanderierFlajolet2002}. 

What is distinctive in our setting is the \emph{cardinality tuple} (occupation profile) $c(\pi)=(c_{k})_{k\in[k_-,k_+]}$, which records the number of visits to each level along a path. This refines beyond standard Motzkin/Dyck statistics (peaks, returns, level steps at height $h$, etc.) typically used for Narayana- or Fine-type refinements. A promising direction is to relate our multilevel occupation profiles to \emph{Motzkin polynomials} (a multivariate scheme that weights steps by height) and to continued-fraction/J-fraction representations: Oste--Van~der~Jeugt develop Motzkin polynomials and show how tridiagonal-matrix powers and weighted Motzkin paths are encoded by such generating functions~\cite{OsteVanDerJeugt2015}. 

\paragraph{Open questions w.r.t the Connection with Motzkin Paths}
\begin{enumerate}[label=(Q\arabic*)]
\item Does the multivariate series $\displaystyle F(\mathbf{y};t)=\sum_{D\ge0}\;\sum_{\pi} t^{D}\prod_{k} y_{k}^{\,c_{k}(\pi)}$ admit a closed J-fraction/continued-fraction form that matches a suitable specialization of Motzkin polynomials? If so, our weak-composition formula for fixed $c(\pi)$ could potentially follow by coefficient extraction.
\item Under nonnegativity constraints (meanders/excursions), can one potentially obtain Narayana-type refinements that \emph{condition} on $c(\pi)$ (e.g., total visits at nonnegative levels) and compare them to known refinements for Motzkin and or Dyck paths?
\item How do endpoint constraints ($k^{*}\neq 0$) and parity rules interact with the standard first-return/first-step decompositions used for Motzkin paths? Can these be expressed as simple functional equations for $F(\mathbf{y};t)$?
\end{enumerate}

We emphasize that, while the Motzkin correspondence is classical, we are not aware of a prior closed enumeration by the \emph{full} level-visit profile $c(\pi)$ on trinomial trees. Establishing whether our occupation-profile enumeration reduces to (or strictly extends) known Motzkin polynomial frameworks is an interesting avenue for future work.

\subsubsection{Potential for Gray-Code Optimization}
Our algorithm possesses a strong potential for further optimization. Here, we make a loose remark with some loose observations on bounds to theorize on how one could actually go about implementing gray codes to optimize the algorithm given its construction:
\begin{remark}[Further optimization via Gray codes and constant-delay generation]\label{rem:further-gray}
Our recursionless design updates only a constant number of tuple entries per emitted object (an adjacent unit transfer),
so the inner enumeration can be implemented with a \emph{minimal-change (Gray)} successor that guarantees
\emph{constant worst-case delay} and \(O(1)\) extra workspace beyond the current state
\cite{eades1984algorithm,ruskey2005prefix,Takaoka2007}. Let \(c_{\mathrm{succ}}>0\) denote this per-output constant.

\paragraph{Cost decomposition}
Write the implementation cost as
\[
T_{\mathrm{impl}}(D)\;=\;T_{\mathrm{outer}}(D)\;+\;T_{\mathrm{inner}}(D).
\]
By the stopping-time bound, the outer routine applies \(\mathcal{S}_{\beta}\) at most
\(s=\lfloor D/2\rfloor\) times. If we conservatively initialize the negative-path seed in
\(\Theta(D)\) time per stage (Algorithm~\ref{alg:init_neg_array}), then
\[
T_{\mathrm{outer}}(D)\;\le\; c_{\mathrm{init}}\,D\,s\;+\;c_{\mathrm{shift}}\,s
\;=\;\mathscr{O}(D^2)\quad\text{(prototype bound)}.
\]
With the standard pointerized update (no full re-scan), the same work is \(O(1)\) per stage,
hence \(T_{\mathrm{outer}}(D)=O(D)\) (achievable).

For the inner enumeration, a Gray successor on the constrained weak compositions yields
\[
T_{\mathrm{inner}}(D)\;\le\; c_{\mathrm{succ}}\,
\Bigl|\mathscr{C}^{w}_{D,\mathrm{dyn}}\!\bigl(\hat s^{(0)}\bigr)\Bigr|
\;=\; c_{\mathrm{succ}}\cdot \Theta\!\bigl(D^{1/2}\gamma^{D}\bigr),
\]
by Theorem~\ref{thm:total-time}. Therefore
\[
T_{\mathrm{impl}}(D)
\;\le\;
\underbrace{c_{\mathrm{succ}}\,\Theta\bigl(D^{1/2}\gamma^{D}\bigr)}_{\text{output-sensitive, dominant}}
\;+\;
\underbrace{\mathscr{O}(D^2)}_{\text{prototype init + shifts}}
\;=\;
\Theta\!\bigl(D^{1/2}\gamma^{D}\bigr),
\]
and with \(O(1)\)-time reseeding the additive term improves to \(O(D)\), which is negligible compared
to \(D^{1/2}\gamma^{D}\).

\paragraph{Takeaway}
Even without changing the mathematics, a loopless Gray-code successor makes the implementation
\emph{output-optimal}: constant worst-case delay per emitted tuple and total time within a constant factor
of the theoretical lower bound implied by the combinatorial count. Practically, the inner loops can be
replaced by a streaming \texttt{next()} that touches only two adjacent entries per step, while the outer
loop performs at most \(s=\lfloor D/2\rfloor\) constant-time reseeds.

\end{remark}

\subsubsection{Extension to\texorpdfstring{$n$}{n}-nomial Recombining Trees}\label{ss:extension-n-nomial}
We also believe there is a closed-form enumeration for general $n$-nomial recombining trees—depending on depth $D$, terminal index $k^{*}$, and branch multiplicity $n$—with a corresponding complexity bound of the form $\mathcal{O}\big(D\,b(n)^D\big)$. Deriving it appears to require a fully multivariate occupation-profile framework and new symmetry reductions beyond the trinomial case as well as a more nuanced mass-redistribution process; pursuing these technicalities is beyond the scope of this paper and is left as future work.

\subsection{Applications}
\label{sec:applications}
Our results are directly useful in several settings:

\paragraph{(A) Option pricing on recombining trees}
The constructive enumeration supports exact valuation of path-dependent claims (e.g., Asian, barrier, lookback) by aggregating payoffs over occupation profiles (cardinality tuples). Closed-form counts enable stratified/importance sampling and reduce variance; early exercise (American-style) fits via dynamic programming on the enumerated successor sets. Listing these also allows for analysis of systemic risk when run in conjunction with known algorithms like FSG (forward-shooting grid).

\paragraph{(B) Discrete-event control and scheduling}
Event sequences in queues or inventory systems can be modeled as trinomial walks; the cardinality tuple records level visits (e.g., buffer or stock levels). This yields exact reachability distributions, cost aggregation along trajectories, and worst-case or risk-sensitive evaluations under resource constraints.

\paragraph{(C) Planning and model-based reinforcement learning}
Finite-horizon rollouts on the recombining tree avoid recursion and duplicates, while occupation profiles serve as compact trajectory features for value expansion or policy improvement. For long horizons, the same structure supports density-aware pruning or stratified sampling with explicit error control.

In moderate-depth regimes requiring auditability and tight error budgets, the proposed enumeration is preferable to naive recursion or unconstrained Monte Carlo; for very deep horizons, it remains a principled backbone for hybrid sampling schemes.

\appendix
\section{Generation of Non-Unique Combinations} 

\begin{algorithm}[H]
\small
\caption{Generate Combinations of Paths using a Recursive DFS Procedure}
\label{alg:gencomb_dfs}
\begin{algorithmic}[1]
\scriptsize
\FUNCTION{GenerateCombinationsViaRecursiveDFS}{$d, k$} \COMMENT{/* DFS recursion to generate all paths of depth $d$ that terminates at $k$*/}
\begin{ALC@g}
\STATE \texttt{All Paths} $\gets$ \texttt{Empty List} 
\STATE \texttt{Current Path} $\gets$ \texttt{Empty List} 
\COMMENT{Start DFS from the root node}
\STATE DFS(0, 0, \texttt{Current Path}, \texttt{All Paths}, $d, k$)
\end{ALC@g}
\ENDFUNCTION
\STATE \textbf{Return:} \texttt{All Paths}
\FUNCTION{DFS}{\texttt{current-depth}, \texttt{current-position}, \texttt{current-path}, \texttt{All-Paths}, $d, k$} 
\COMMENT{/* DFS Recursion; \texttt{current-position} $\in [-\texttt{current-depth}, +\texttt{current-depth}]$ */}
\begin{ALC@g}
\IF{\texttt{current-depth} == $d$}
    \STATE Add \texttt{current-path} to \texttt{All-Paths} if it terminates on $k$
    \STATE \textbf{Return}
\ENDIF
\STATE Append ``(-1,\texttt{current-depth})'' to \texttt{current-path} \COMMENT{/* Recursion-step for -1  */}
\STATE DFS(\texttt{current-depth}+1, \texttt{current-position}+1, \texttt{current-path}, \texttt{All-Paths}, $d, k$)
\STATE Remove last element in \texttt{current-path}
\STATE Append ``(0,\texttt{current-depth})'' to \texttt{current-path} \COMMENT{/* Recursion-step for 0 */}
\STATE DFS(\texttt{current-depth}+1, \texttt{current-position}+1, \texttt{current-path}, \texttt{All-Paths}, $d, k$)
\STATE Remove last element in \texttt{current-path}
\STATE Append ``(+1,\texttt{current-depth})'' to \texttt{current-path} \COMMENT{/* Recursion-step for +1 */}
\STATE DFS(\texttt{current-depth}+1, \texttt{current-position}+1, \texttt{current-path}, \texttt{All-Paths}, $d, k$)
\STATE Remove last element in \texttt{current-path}
\end{ALC@g}
\ENDFUNCTION
\end{algorithmic}
\end{algorithm}

\begin{algorithm}[H]
\small
\caption{Generate Combinations of Paths using Hashing}
\label{alg:gencomb_hashing}
\begin{algorithmic}[1]
\scriptsize
\FUNCTION{GenerateCombinations}{\texttt{input\_map}, \texttt{length}, \texttt{buf}, \texttt{cbuf}}
\begin{ALC@g}
    \IF{\texttt{length} $== 1$}
        \STATE \texttt{path} $\gets$ \CALL{ExtractPathFromBuffer}{\texttt{buf}}
        \STATE \CALL{StorePath}{\texttt{path}}
        \RETURN
    \ENDIF
    \IF{\texttt{cbuf} $==$ \texttt{buf}}
        \STATE \texttt{cbuf[0]} $\gets$ \{\texttt{key}: \texttt{0}, \texttt{value}: \texttt{input\_map[0]}\}
        \STATE \CALL{GenerateCombinations}{\texttt{input\_map}, \texttt{length - 1}, \texttt{buf}, \texttt{cbuf + 1}}
        \RETURN
    \ENDIF
    \STATE \texttt{prev\_key} $\gets$ \texttt{cbuf[-1].key}
    \FOR{\texttt{offset} in $[-1, 0, 1]$ \COMMENT{Traverse Child Nodes}} 
    \STATE \texttt{next\_key} $\gets$ \texttt{prev\_key + offset}
        \IF{\texttt{next\_key} in \texttt{input\_map}}
            \STATE \texttt{cbuf[0]} $\gets$ \texttt{[key: next\_key, value: input\_map[next\_key]]}
            \STATE \CALL{GenerateCombinations}{\texttt{input\_map}, \texttt{length - 1}, \texttt{buf}, \texttt{cbuf + 1}}
        \ENDIF
    \ENDFOR
\end{ALC@g}
\ENDFUNCTION
\STATE \textbf{Return:} \texttt{output\_storage} containing all generated paths
\end{algorithmic}
\end{algorithm}

\newpage

\begin{algorithm}[H]
\small
\caption{Recursion-free Generation of Paths}
\label{alg:recursivelessgen}
\begin{algorithmic}[1]
\scriptsize
\FUNCTION{RecursionFreeeGenerateCombinations}{$D, k^{*}$} \COMMENT{/* Recursion-free procedure to generate all paths of depth $D$ that terminate at depth $D$ at $(k^{*}, D)$*/}
\begin{ALC@g}
\STATE \texttt{current path} $\gets$ The path that contains the maximally reachable node $(\ppath_{k_{+}})'$ \Cref{E:highestpath} for the positive-paths, using {\em Init-Array}(\texttt{depth}, \texttt{terminal\_node}, \texttt{path}) in Appendix B.1
\COMMENT{/* This path starts from the root-node and terminates at depth  $D$, contains $\ppath_{k_{+}})'$, and has no negative elements */}
\STATE \texttt{All Paths} = \{\texttt{current path}\}
\STATE \texttt{((Boolean) tick-down, tick-down-index)} = \texttt{ComputeTickDown(current path)}
\COMMENT{/* Returns \texttt{False}, along with the largest index that can be ticked-down; returns \texttt{False} if none found */}
\WHILE{tick-down}
\STATE Replace $(m, \texttt{tick-down-index})$ with $(m-1, \texttt{tick-down-index})$ in \texttt{current path} \COMMENT{/* Decrement the \texttt{tick-down-index} by 1 */}
\STATE \texttt{All Paths} = \texttt{All Paths} $\cup$ \texttt{current-path} \COMMENT{/* Add the ticked-down, valid, path to the set of paths */}
\FOR{$i \in \{\texttt{tick-down-index+1},d\}$}
\IF{\texttt{CheckValidity}({\texttt{current path, i, +1}})} 
\STATE Replace $(value, i) \in \texttt{current path}$ with $(value+1, i)$.
\COMMENT{/* Explore indices larger than \texttt{tick-down-index} to find other valid paths */}
\STATE \texttt{All Paths} = \texttt{All Paths} $\cup$ \texttt{current-path}.
\ENDIF
\ENDFOR
\STATE \texttt{((Boolean) tick-down, tick-down-index)} = \texttt{ComputeTickDown(current path)}
\ENDWHILE
\end{ALC@g}
\ENDFUNCTION
\STATE \textbf{Return:} \texttt{All Paths}
\FUNCTION{\texttt{CheckValidity}}{\texttt{current path, index, change}}
\COMMENT{/* \texttt{change} $\in \{-1,1\}$, increment or decrement */}
\STATE Replace (\texttt{value}, \texttt{index}) $\in$ \texttt{current path} with (\texttt{value}+\texttt{change}, \texttt{index}) and check if the new path satisfies the vectorized rules in \Cref{setup}. Return \texttt{True} if it does, else Return \texttt{False}
\ENDFUNCTION
\FUNCTION{\texttt{ComputeTickDown}}{\texttt{current path}}
\STATE \texttt{tick-down-index} is the largest value in $\{0, 1, \ldots, d\}$ such that $(m, \texttt{tick-down-index}) \in \texttt{current path}$ and \texttt{CheckValidity}(\texttt{current path}, m, -1) = \texttt{True}
\STATE \textbf{Return}: (\texttt{True}, \texttt{tick-down-index}) if found; else \textbf{Return}: (\texttt{False}, \texttt{NaN})
\ENDFUNCTION
\end{algorithmic}
\end{algorithm}

\section{Path Init Algorithm}
\begin{algorithm}[H]
\small
\caption{Initialize Path Array}
\label{alg:init_array}
\begin{algorithmic}[1]
\scriptsize
\FUNCTION{Init-Array}{$D$, $k^{*}$}
\FOR{$i = 1$ \TO $\texttt{depth} - 1$}
    \IF{$i < \texttt{max\_node\_pos} + 1$}
        \STATE \texttt{path}[$i-1$] $\gets i$
    \ELSIF{$\texttt{depth} \bmod 2 == \texttt{terminal\_node} \bmod 2$}
        \FOR{$j = \texttt{max\_node\_pos} - 1$ \textbf{down to} $\texttt{terminal\_node} + 1$}
            \STATE \texttt{path}[$i-1$] $\gets j$
            \STATE $i \gets i + 1$
        \ENDFOR
        \STATE $i \gets i - 1$ \COMMENT{Adjust for loop increment}
    \ELSE
        \STATE \texttt{path}[$i-1$] $\gets \texttt{max\_node\_pos}$
        \STATE $i \gets i + 1$
        \FOR{$j = \texttt{max\_node\_pos} - 1$ \textbf{down to} $\texttt{terminal\_node} + 1$}
            \STATE \texttt{path}[$i-1$] $\gets j$
            \STATE $i \gets i + 1$
        \ENDFOR
        \STATE $i \gets i - 1$ \COMMENT{Adjust for loop increment}
    \ENDIF
\ENDFOR
\ENDFUNCTION
\end{algorithmic}
\end{algorithm}

\newpage
\section{Negative Path Init}
\begin{algorithm}[H]
\small
\caption{Shift-and-Reseed (\(\mathcal{A}_{M,t}\)) — one stage}
\label{alg:init_neg_array}
\begin{algorithmic}[1]

\STATE \COMMENT{/* \(k_l\) = leftmost index of active window; \(k_r\) = rightmost index of active window.
Everything shifts one step to the left each iteration. */}
\STATE \textbf{Input:} $D$, $\hat{s}^{(i)}=(c_{k_l},\ldots,c_{k_r})$, $k_l$, $k_r$, $k^{*}$
\STATE \textbf{Output:} $\hat{s}^{(i+1)}$, $k_l - 1$, $k_r - 1$, $k^{*}+1$
\STATE \textit{/* Apply \eqref{shift_process_A} on $I_M=[k_l,k_r]$. Shift $k^{*}$ to the right by one index */}
\STATE $\beta \leftarrow c[k_r]$ \;\; \textit{/* usually $0,1,2$; at $k^{*}\!=\!0$ may be $3$. */}
\IF{$\beta = 0$} \STATE \textbf{return} $(\hat{s}^{(i)},\,k_l\!-\!1,\,k_r\!-\!1,\,k^{*}\!+\!1)$ \ENDIF
\IF{$(\beta = 3)\wedge(k^{*}=0)$} \STATE $\beta \leftarrow 2$ \ENDIF
\STATE $c[k_r] \leftarrow c[k_r] - \beta$ \;\; \textit{/* consume at right edge */}
\STATE \textbf{LET} $L \leftarrow k_r - k_l + 1$; \textbf{BUILD} $v[0..L-1]$; $v[0]\leftarrow 1$
\FOR{$j \leftarrow 1$ \TO $L-1$} \STATE $v[j] \leftarrow c[k_l + j - 1]$ \ENDFOR
\IF{$\beta = 1$}
  \FOR{$j \leftarrow L-1$ \DOWNTO \; $1$}
    \IF{$v[j] = 2$} \STATE $v[j] \leftarrow 1$; \STATE \textbf{break} \ENDIF
  \ENDFOR
  \STATE $v[1] \leftarrow v[1] + 1$
\ELSIF{$\beta = 2$}
  \STATE $v[1] \leftarrow v[1] + 1$
\ENDIF
\FOR{$j \leftarrow 0$ \TO $L-1$} 
  \STATE $\hat{s}^{(i+1)}[\,k_l - 1 + j\,] \leftarrow v[j]$ 
\ENDFOR
\STATE \textbf{return} $(\hat{s}^{(i+1)},\,k_l\!-\!1,\,k_r\!-\!1,\,k^{*}\!+\!1)$

\end{algorithmic}
\end{algorithm}

\newpage

\section{Generate All Unique Paths}

\begin{algorithm}[H]
\small
\caption{Unique Path Generation (Reseeded, recursion-free, with starting-tuple invariants)}
\label{alg:unique_path}
\begin{algorithmic}[1]
\scriptsize
\FUNCTION{Gen-Unique-Comb}{$D$, $\hat{s}^{(0)}$, $k_l$, $k_r$, $k^{*}$}
  \COMMENT{/* $k_l$ = leftmost index of active window; $k_r$ = rightmost index. Each stage shifts the window one step left. */}

  \STATE \textbf{Require:} 
  \begin{itemize}
    \item $\hat{s}^{(0)}$ built via \textit{Init-Array}(\texttt{depth}, \texttt{terminal\_node}, \texttt{path}) in \Cref{alg:init_array}. 
    \item Stage shift $\mathcal{S}_M$ and parity map $\mathcal{S}^{(M)}_{\beta}$ in \Cref{alg:init_neg_array}.
  \end{itemize}

  \COMMENT{/* Starting-tuple structure and invariants (cf.\ \Cref{E:starting_tuple}): 
  \[
    \hat{s}^{(0)} 
    = \bigl(
      \overbrace{\underbrace{\omega_1}_{0 \le j < k^{*}}}^{\text{fixed block}}
      \circ
      \overbrace{\underbrace{\omega_2}_{j = k^{*}}
      \circ \underbrace{\omega_3}_{k^{*} < j < k_r}
      \circ \underbrace{\omega_4}_{j = k_r}}^{\text{mass redistributor}}
    \bigr).
  \]
  \textbf{Invariant (non-depletable slots):} The fixed block $\omega_1$ is never altered by the positive-side counting within a stage; it retains its mass throughout. 
  Moreover, the fencepost slot introduced by the shift (the new left edge after reseed) is initialized (e.g., $v[0]\!\leftarrow\!1$ in the shift routine) and never fully depleted within that stage. 
  At stage $M+1$, these invariants hold again on the \emph{translated} window $I_{M+1}=I_M-1$, so the same “fixed vs.\ redistributor” decomposition recurs stage-by-stage. */}

  \COMMENT{/* Outer stopping time (natural): can shift left at most $T=\min\{k_r-k^{*},\,-k_l\}$ times (cf.\ \eqref{eq:outer-stop}) */}
  \STATE $T \leftarrow \min\{\,k_r - k^{*},\,-k_l\,\}$

  \STATE $\texttt{total} \leftarrow 0$

  \FOR{$M \leftarrow 0$ \TO $T$}
    \STATE \COMMENT{/* Dynamic terminal indices */}
    \STATE $k^{*}_{\mathrm{geo}} \leftarrow k^{*} + M$ 
    \STATE \COMMENT{/* placement of $k^{*}$ inside current window $I_M=[k_l,k_r]$ */}
    \STATE $\tilde{k} \leftarrow k^{*} + 2M$ 
    \STATE \COMMENT{/* mass horizon available at stage $M$ */}
    \STATE $m_{\max} \leftarrow D - \tilde{k}$

    \STATE \COMMENT{/* Stage-$M$ counting pass with reseated terminal index (cf.\ \eqref{eq:stage-contrib-clean}) */}
    \STATE $\texttt{stage\_contrib} \leftarrow 0$
    \STATE $\hat{s}^{(M,0)} \leftarrow \hat{s}^{(M)}$
    \FOR{$m \leftarrow 0$ \TO $m_{\max}$}
      \STATE \COMMENT{/* closed form with $k^{*}\mapsto k^{*}_{\mathrm{geo}}$ and horizon $\tilde{k}$ */}
      \STATE $\texttt{stage\_contrib} \mathrel{+}= 
      \displaystyle
      \binom{m+\ell-\big\lceil\frac{m+1}{2}\big\rceil-1}{\ell-\big\lceil\frac{m+1}{2}\big\rceil-1}
      +
      (\beta-1)\binom{m+\ell-\big\lceil\frac{m+1}{2}\big\rceil-1}{\ell-\big\lceil\frac{m+1}{2}\big\rceil}$
      \IF{$m < m_{\max}$}
        \STATE $\hat{s}^{(M,m+1)} \leftarrow \mathcal{S}^{(M)}_{\beta}\!\big(\hat{s}^{(M,m)}\big)$  
        \STATE \COMMENT{/* deterministic in-stage update; \Cref{alg:init_neg_array} */}
        \STATE \COMMENT{/* Respect invariants: slots in the translated fixed block (preimage of $\omega_1$) remain untouched; redistribution uses the translated $\omega_2\!\circ\!\omega_3\!\circ\!\omega_4$. */}
      \ENDIF
    \ENDFOR

    \STATE $\texttt{total} \mathrel{+}= \texttt{stage\_contrib}$
    \STATE \COMMENT{/* Reseed next stage from terminal in-stage state; then shift the window left */}
    \IF{$M < T$}
      \STATE $\hat{s}^{(M+1)} \leftarrow \hat{s}^{(M,m_{\max})}$ 
      \STATE \COMMENT{/* i.e.\ $\mathscr{S}^{(M)}(\hat{s}^{(M)})$ */}
      \STATE $(k_l, k_r) \leftarrow (k_l - 1,\; k_r - 1)$ 
      \STATE \COMMENT{/* $I_{M+1}=I_M-1$; the fixed/redistributor partition reappears translated */}
      \STATE \COMMENT{/* Geometric index advances next loop by $M\mapsto M+1$; base $k^{*}$ remains unchanged. */}
    \ENDIF
  \ENDFOR

  \STATE \textbf{return} $\texttt{total}$ \COMMENT{/* equals }\(\sum_{M=0}^{T}\mathscr{C}^{w}_{D,k^{*}_{\mathrm{geo}}(M)}(\hat{s}^{(M)})\)\text{ with }$k^{*}_{\mathrm{geo}}(M)=k^{*}+M$, $\tilde{k}(M)=k^{*}+2M$.
\ENDFUNCTION
\end{algorithmic}
\end{algorithm}

\section*{Acknowledgments}
We would like to acknowledge the assistance of Manav Vora, Ryan Roach, and Pingbang Hu of the University of Illinois, Urbana-Champaign who gave me some useful notes on writing my manuscript and notes on various proofs. I would also like to thank William Cosley from the University of Northern Iowa who helped me write Appendix A that helped me first visualize the path enumeration of the traditional approach. Finally, I'd like to acknowledge Dāniels Ponamarjovs from the College of Alberta, Latvia who helped me throughout the initial process when it came to proper C++ coding practices.

\bibliographystyle{siamplain}
\bibliography{references}

\section{Excursions}
\label{appendix:excursions}
Another useful tool for constructing paths in $\EPathsD_{k^*}$ is to exploit
\emph{excursions from $0$} in the position sequence $\pi(\ppath)$ (see \Cref{E:piDef}).

\paragraph{Motivating example}
Assume $D=9$ and $k^*=2$.  Consider the position sequence
\begin{equation}\label{E:basepath}
   (0,1,2,1,0,1,0,0,1,2).
\end{equation}
There are three maximal positive runs (excursions) bounded by zeros:
\[
   (0\,|\,\underbrace{1,2,1}_{\text{exc.\ 1}}\,|\,0\,|\,\underbrace{1}_{\text{exc.\ 2}}\,|\,0,0\,|\,\underbrace{1,2}_{\text{exc.\ 3}}),
\]
where the vertical bars mark zeros. Since the last run ends at $d=D$ with value
$k^*=2\neq 0$, the rightmost block is \emph{locked} (see below) and is not
flippable if we wish to preserve the terminal position. Flipping any subset of
the \emph{unlocked} excursions (here, excursions 1 and 2) negates the entries
inside those runs and yields valid paths that still end at $k^*$. For instance,
flipping excursion~1, excursion~2, both, or neither produces the four sequences
\begin{equation}\label{E:flippedexcursions}
\begin{aligned}
&(0,-1,-2,-1,0,1,0,0,1,2),\\
&(0,1,2,1,0,-1,0,0,1,2),\\
&(0,1,2,1,0,1,0,0,1,2),\\
&(0,-1,-2,-1,0,-1,0,0,1,2).
\end{aligned}
\end{equation}

\paragraph{Formal setup.}
Let $\bk\Def (k_0,k_1,\dots,k_D)\in \Z^{D+1}$ be a position sequence with
$k_0=0$ and $k_D=k^*$. We call a pair of indices $(l,r)$ with
$0\le l<r\le D+1$ an \emph{excursion interval} if
\[
   k_l=0,\qquad k_r=0\ \text{when }r\le D,\qquad
   k_d>0\ \ \text{for all }\,l<d<r,
\]
and $(l,r)$ is maximal with these properties. (When the final run reaches $D$
with $k_D>0$, we \emph{close} it by the sentinel $r=D+1$.) The open index set
$(l,r)\cap\{0,1,\dots,D\}$ is the support of the excursion.

List all excursion intervals from left to right as
\[
   0=l_1<r_1\le l_2<r_2\le \cdots \le l_M<r_M\le D+1.
\]
Define the \emph{excursion indicators}
$\bOne^{(m)}\in\{0,1\}^{D+1}$ by
\[
   \bOne^{(m)}_d \;=\;
   \begin{cases}
      1, & l_m<d<r_m,\\
      0, & \text{otherwise}.
   \end{cases}
\]
Set the \emph{lock flag}
\[
   \epsilon(\bk)\Def
   \begin{cases}
      1,& r_M=D+1\quad(\text{rightmost block hits $D$ with }k_D\neq 0),\\
      0,& r_M\le D\quad(\text{rightmost block ends at a zero}).
   \end{cases}
\]
Thus the indices of \emph{flippable} excursions are
\[
   \mathcal{I}^*(\bk)\Def \{1,2,\dots,M-\epsilon(\bk)\}.
\]

\paragraph{Flip operator}
For any subset $A\subseteq \mathcal{I}^*(\bk)$ define
\[
   \Flip_A(\bk)
   \;\Def\;
   \bk \;-\; 2\sum_{m\in A} \bigl(\bk\odot \bOne^{(m)}\bigr),
\]
where $\odot$ denotes the Hadamard (entrywise) product. In words: on each
excursion $m\in A$ we negate the entries of $\bk$ (equivalently, we reflect the
excursion about $0$), and we leave all other indices unchanged. The full
\emph{flip family} is
\begin{equation}\label{eq:FlipFamily}
   \Flip(\bk)\;\Def\;\bigl\{\Flip_A(\bk): A\subseteq \mathcal{I}^*(\bk)\bigr\}.
\end{equation}
Hence
\begin{equation}\label{eq:FlipCount}
   \bigl|\Flip(\bk)\bigr| \;=\; 2^{\,M-\epsilon(\bk)}.
\end{equation}
If one wishes to exclude the trivial ``no-flip'' element corresponding to
$A=\varnothing$, the count becomes $2^{\,M-\epsilon(\bk)}-1$.

\begin{lemma}[Validity and endpoint preservation]\label{lem:flip-valid}
Let $\bk\in\Z^{D+1}$ be a position sequence with $k_0=0$ and $k_D=k^*$, and let
$A\subseteq \mathcal{I}^*(\bk)$. Then $\widetilde{\bk}\Def \Flip_A(\bk)$ is
again a valid position sequence of a path in $\PathsD$; i.e.,
$\widetilde{k}_{d}-\widetilde{k}_{d-1}\in\{-1,0,+1\}$ for all $d$, with
$\widetilde{k}_0=0$ and $\widetilde{k}_D=k^*$.
\end{lemma}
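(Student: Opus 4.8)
The plan is to reduce everything to a pointwise description of the flip. First I would observe that since the excursion supports $(l_m,r_m)$ are pairwise disjoint, the indicator sum $\sum_{m\in A}\bOne^{(m)}_d$ takes values in $\{0,1\}$, so $\Flip_A(\bk)$ acts coordinatewise as $\widetilde{k}_d=\sigma_d\,k_d$ with the sign $\sigma_d\Def 1-2\sum_{m\in A}\bOne^{(m)}_d\in\{-1,+1\}$; here $\sigma_d=-1$ precisely when $d$ lies in the interior of a flipped excursion. This makes the operator transparent: it negates the entries on the chosen excursion runs and leaves everything else fixed, in particular the bounding zeros (since $-0=0$).

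The endpoint claims then follow from where the interiors can sit. Every excursion interval has interior $l_m<d<r_m$ with $l_m\ge 0$, so $d=0$ is never interior and $\widetilde{k}_0=k_0=0$. For the right endpoint I would invoke the lock flag: a flipped excursion has index $m\le M-\epsilon(\bk)$, so if $k_D=k^*\neq 0$ the rightmost run is locked ($r_M=D+1$, $\epsilon(\bk)=1$) and excluded from $A$, while if $k^*=0$ the rightmost run closes at the zero $k_D=0$ so that $D=r_M$ is again not interior. Either way $d=D$ escapes every flipped interior, giving $\widetilde{k}_D=k_D=k^*$.

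The substance is the step condition, which I would split on whether the sign is constant across the step. When $\sigma_d=\sigma_{d-1}$, the increment is $\sigma_d(k_d-k_{d-1})$, and since negation preserves $\{-1,0,+1\}$ this is admissible. The only remaining case is a sign change, which by disjointness of supports can occur only at an excursion boundary, i.e. when one of $d-1,d$ is a bounding zero and the other an adjacent interior point. The key structural fact is that, because the original walk has unit steps, the interior value adjacent to a bounding zero equals exactly $1$: leaving a zero into a positive run forces $k_{l_m+1}=1$, and returning to a zero forces $k_{r_m-1}=1$. Substituting $k=0$ on the zero side and $k=1$ on the interior side, the flipped increment evaluates to $-1$ (entering a flipped excursion) or $+1$ (leaving one), both admissible.

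I expect the one place needing care to be this boundary bookkeeping: confirming that sign changes occur only across bounding zeros, never directly between two distinct flipped interiors (guaranteed by the separating zeros $r_m\le l_{m+1}$), and that the adjacent interior value is forced to be $1$ by the unit-step constraint. Once these two observations are in place, the three cases exhaust all steps and the lemma follows. I would organize the write-up by disposing of the constant-sign steps in one line and devoting a short displayed computation to each of the two boundary sub-cases.
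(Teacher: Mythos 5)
Your proposal is correct and follows essentially the same route as the paper's proof: check the endpoints via the lock flag, negate increments in flipped interiors, and handle the boundary steps separately. Your version is in fact slightly more careful at two points — you explicitly compute the boundary increments using the forced value $1$ adjacent to a bounding zero (the paper instead observes that the bounding zeros are fixed under negation, so every increment on the block is simply negated), and you correctly allow for $0$-steps inside an excursion, whereas the paper's parenthetical claim that interior increments lie in $\{\pm1\}$ is an inaccuracy that happens not to affect the argument.
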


\begin{proof}
Inside an excursion $(l_m,r_m)$, the increments of $\bk$ are in $\{\pm1\}$
(because the values are strictly positive and bounded by zeros at the
endpoints). Negating the entries on that block negates those increments, which
remain in $\{\pm1\}$. At the boundaries $d=l_m$ and $d=r_m$ we have zeros in
both $\bk$ and $\widetilde{\bk}$ (for $r_m\le D$) or, when $r_m=D+1$, the block
is locked and not flipped by construction. Hence all increments remain in
$\{-1,0,1\}$. Since each flipped excursion begins and ends at $0$, the net
displacement contributed by that excursion remains $0$, so the terminal value
$k_D=k^*$ is preserved (locking prevents altering the final nonzero run).
\end{proof}

\begin{definition}[Nonnegative representatives]
\[
   \EPathsD^+_{k^*}\;\Def\;\bigl\{\ppath\in \EPathsD_{k^*}:\ \pi(\ppath)\in\Z_+^{D+1}\bigr\}.
\]
\end{definition}

\begin{proposition}[Flip representation]\label{prop:flip-representation}
If $k^*\ge 0$, then
\begin{equation}\label{E:fliprep}
   \EPathsD_{k^*}
   \;=\;
   \bigcup_{\ppath\in \EPathsD^+_{k^*}} \ \pi^{-1}\!\bigl(\Flip\bigl(\pi(\ppath)\bigr)\bigr).
\end{equation}
If $k^*<0$, then
\begin{equation}\label{eq:flip-negative}
   \EPathsD_{k^*}
   \;=\;
   \bigcup_{\ppath\in \EPathsD^+_{-k^*}}
   \ \pi^{-1}\!\Bigl(-\,\Flip\bigl(\pi(\ppath)\bigr)\Bigr).
\end{equation}
\end{proposition}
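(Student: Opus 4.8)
The plan is to prove the set identity \eqref{E:fliprep} for $k^{*}\ge 0$ by a double inclusion, and then to deduce the case $k^{*}<0$ in \eqref{eq:flip-negative} directly from the reflection symmetry of \Cref{thm:reflection-symmetry}. Throughout I would work with position sequences $\bk=\pi(\ppath)$ and use that $\pi$ is a bijection between paths and position sequences (\Cref{setup}); it therefore suffices to establish the corresponding equality of sets of position sequences and then transport it through $\pi^{-1}$.

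For the inclusion $\EPathsD_{k^{*}}\supseteq$ (right-hand side), I would simply invoke \Cref{lem:flip-valid}: for every nonnegative representative $\ppath\in\EPathsD^{+}_{k^{*}}$ and every $A\subseteq\mathcal{I}^{*}(\pi(\ppath))$, the flipped sequence $\Flip_{A}(\pi(\ppath))$ is a valid position sequence with $\widetilde{k}_{0}=0$ and $\widetilde{k}_{D}=k^{*}$, hence $\pi^{-1}(\Flip_{A}(\pi(\ppath)))\in\EPathsD_{k^{*}}$. Taking the union over all such $A$ and all such $\ppath$ gives this inclusion with no further work.

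The inclusion $\EPathsD_{k^{*}}\subseteq$ (right-hand side) carries the real content, and I would prove it by constructing an explicit \emph{fold-up} map that sends an arbitrary $\ppath\in\EPathsD_{k^{*}}$ back to a nonnegative representative. Given $\bk=\pi(\ppath)$, decompose the index range into the maximal runs of constant sign delimited by the visits to $0$; since consecutive entries differ by at most one, each such run is entirely positive or entirely negative and is bounded by zeros (the final run being closed by the sentinel $r=D+1$, exactly as in the locking convention). Define $\bk^{+}$ by reflecting every negative run about $0$, i.e.\ negating its entries, while leaving the positive runs and the intervening zeros untouched. Because negation sends each increment $s\in\{-1,0,1\}$ to $-s\in\{-1,0,1\}$ and each run has net displacement $0$, the sequence $\bk^{+}$ is a valid nonnegative position sequence with the same endpoints, so $\bk^{+}=\pi(\ppath')$ for a unique $\ppath'\in\EPathsD^{+}_{k^{*}}$ (here $k^{*}\ge 0$ ensures the terminal run, if nonzero, is already positive and is left fixed). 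Letting $A$ index the runs that were negative in $\bk$, one checks that $\Flip_{A}(\bk^{+})=\bk$, so $\ppath\in\pi^{-1}(\Flip(\pi(\ppath')))$.

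The main obstacle is the bookkeeping of this excursion correspondence, and in particular the locked terminal block. I must verify that the maximal sign-runs of $\bk$ are in an order-preserving bijection with the flippable excursion intervals $\mathcal{I}^{*}(\bk^{+})$ of the folded-up sequence together with its (possibly locked) terminal block, that the lock flag $\epsilon$ agrees for $\bk$ and $\bk^{+}$ (it is governed solely by whether $k^{*}=0$ or $k^{*}>0$), and hence that the chosen subset $A$ genuinely lies in $\mathcal{I}^{*}(\bk^{+})$ and never touches the terminal run. Once this is pinned down the fold-up map is precisely the inverse of $A\mapsto\Flip_{A}$, completing $\subseteq$. Finally, for $k^{*}<0$ I would apply \Cref{thm:reflection-symmetry}: the involution $R:(k,d)\mapsto(-k,d)$ carries $\EPathsD_{k^{*}}$ bijectively onto $\EPathsD_{-k^{*}}$ and, at the level of position sequences, acts by $\bk\mapsto-\bk$. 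Applying the already-proved identity at $-k^{*}>0$ and negating every sequence turns $\Flip(\pi(\ppath))$ into $-\Flip(\pi(\ppath))$ and $\EPathsD^{+}_{-k^{*}}$ into the indexing set appearing in \eqref{eq:flip-negative}, which yields the claimed representation.
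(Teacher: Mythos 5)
Your proposal is correct and follows essentially the same route as the paper's proof: the easy inclusion via \Cref{lem:flip-valid}, the hard inclusion by folding up (reflecting each maximal negative block across $0$ to obtain a nonnegative representative $\bk^{+}$ with $\bk\in\Flip(\bk^{+})$), and the $k^{*}<0$ case by a global sign flip. Your treatment is somewhat more careful than the paper's about the run-to-excursion correspondence and the locked terminal block, but the underlying argument is identical.
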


\begin{proof}
For $k^*\ge 0$, any $\ppath\in \EPathsD_{k^*}$ has position sequence $\bk$
whose negative entries occur in blocks separated by zeros. Successively
reflecting each negative block across $0$ produces a nonnegative sequence
$\bk^{(+)}\in\Z_+^{D+1}$ with the same endpoints and increments in
$\{-1,0,1\}$. Thus $\bk\in \Flip\bigl(\bk^{(+)}\bigr)$ with
$\bk^{(+)}=\pi(\ppath_+)$ for some $\ppath_+\in \EPathsD^+_{k^*}$, proving
inclusion ``$\subseteq$''. The reverse inclusion follows from
\Cref{lem:flip-valid}. The case $k^*<0$ reduces to $k^*>0$ by global
sign-flip.
\end{proof}

\paragraph{Counting flips for a fixed representative}
Let $\bk=\pi(\ppath)\in \Z_+^{D+1}$ and let $M$ be the number of excursions of
$\bk$ (maximal positive runs). Then $|\Flip(\bk)|=2^{\,M-\epsilon(\bk)}$ by
 \Cref{eq:FlipCount}; when excluding the no-flip element,
$|\Flip(\bk)|=2^{\,M-\epsilon(\bk)}-1$. In the example
 \Cref{E:basepath}, $M=3$ and $\epsilon(\bk)=1$, hence
$|\Flip(\bk)|=2^{2}=4$, exactly the four sequences in
 \Cref{E:flippedexcursions}.

\medskip

\section{Step counts and parity}
\label{appendix:step-counts}
Any $\ppath\in \EPathsD_{k^*}$ can be decomposed into $j_+$ up-steps,
$j_-$ down-steps, and $j_0$ stays. Then
\begin{equation}\label{E:algebra}
   j_+ + j_- + j_0 \;=\; D,
   \qquad
   j_+ - j_- \;=\; k^*.
\end{equation}
Solving gives
\[
   j_+ \;=\; \frac{D+k^*-j_0}{2},
   \qquad
   j_- \;=\; \frac{D-k^*-j_0}{2}.
\]
Thus $j_+$ and $j_-$ are integers iff
\begin{equation}\label{eq:parity-j0}
   j_0 \equiv D+k^* \pmod{2},
\end{equation}
and necessarily
\begin{equation}\label{eq:max-jplus-minus}
   0\le j_0\le D,\qquad
   j_+ \le \Bigl\lfloor \frac{D+k^*}{2}\Bigr\rfloor,
   \qquad
   j_- \le \Bigl\lfloor \frac{D-k^*}{2}\Bigr\rfloor.
\end{equation}
The parity condition \Cref{eq:parity-j0} is the even–odd compatibility between
depth $D$, terminal position $k^*$, and the number of stays. It is the sole
parity restriction induced by the underlying trinary step set, and it is
implicitly enforced by our path-generation rules we observed in \Cref{setup} - \Cref{recursiveless}. In
particular, when seeding the first iteration of the (recursion-free) generator,
one must choose $j_0$ with the parity prescribed by \Cref{eq:parity-j0}; then
$j_\pm$ follow from \Cref{E:algebra}.

\paragraph{Integration with the positive-path generator.}
\Cref{alg:recursivelessgen} details a recursion-free enumeration of the nonnegative representatives $\EPathsD^+_{k^*}$ that maintains the monotone lexicographic ordering and obeys the rules outlined in \Cref{setup}. The full path set $\EPathsD_{k^*}$ is obtained by applying $\Flip$ (Definition \Cref{eq:FlipFamily}) to each generated representative, as justified by \Cref{prop:flip-representation}. This two-stage procedure \emph{exhausts} all paths ending at $k^*$ without duplication.

\begin{remark}[On ordering]
The flip step preserves the terminal index and only toggles signs within
excursions bounded by zeros, so it commutes with any ordering that respects the underlying rules and structure introduced in \Cref{setup} - \ref{recursiveless}. In implementations that “ping-pong’’ across
the path space, one may emit each nonnegative representative as soon as it is generated and then emit its flip family in any deterministic subset order
(e.g., lexicographic on $\mathcal{I}^*(\bk)$), preserving a global total order.
\end{remark}

\section{Proof of the Equivalence Relation}
\label{proof_er}

\begin{proposition}[Forward direction: every path has a histogram representation]
For every path $\ppath \in \PathsD$ there exists a unique cardinality tuple 
$\hat{C}(\ppath)$.
\end{proposition}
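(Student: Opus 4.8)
The plan is to observe that $\hat{C}$ is obtained by composing three deterministic operations on $\ppath$—position projection, coordinate binning, and assembly into a tuple—so that both existence and uniqueness reduce to checking that the construction is well-posed. Uniqueness is then automatic: a function applied to a fixed input returns exactly one output. The only genuinely geometric ingredient is verifying that every visited position falls inside the index window $[k_-,k_+]$ of \Cref{E:Kbounds}, so that binning by $k\in[k_-,k_+]$ loses no information.

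First I would set $k^*\Def k_D$, the terminal position of $\ppath=(\node_0,\dots,\node_D)$, so that $\ppath\in\EPathsD_{k^*}$ and the bounds $k_-,k_+$ of \Cref{E:Kbounds} are determined by $(D,k^*)$. Applying $\pi$ from \Cref{E:piDef} yields a single, well-defined position sequence $\pi(\ppath)=(k_0,k_1,\dots,k_D)$ with $k_0=0$. Next I would establish the range lemma: for each depth $d$ the coordinate obeys $k_-\le k_d\le k_+$. This follows from two one-step constraints—reachability from the root gives $|k_d|\le d$, while the ability to return to $k^*$ within the remaining $D-d$ steps gives $|k^*-k_d|\le D-d$. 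Combining $k_d\le d$ with $k_d\le (D-d)+k^*$ and optimizing over $d$ yields $k_d\le\bigl\lfloor\tfrac{D+k^*}{2}\bigr\rfloor=k_+$; the symmetric argument produces the lower bound $k_d\ge k_-$.

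With containment in hand, for each $k\in[k_-,k_+]$ the count $c_k(\ppath)\Def|\{d\in\{0,\dots,D\}:(\pi(\ppath))_d=k\}|$ is the cardinality of a finite, uniquely determined subset of $\{0,\dots,D\}$, hence a well-defined nonnegative integer. Assembling these counts in index order gives the tuple $\hat{C}(\ppath)$ of \Cref{cardinality_tuple_formal_def}. As a consistency check, the fibers $\{d:(\pi(\ppath))_d=k\}$ partition $\{0,1,\dots,D\}$ (precisely because every coordinate lies in the window), so $\sum_{k\in[k_-,k_+]}c_k(\ppath)=D+1$, recovering \Cref{E:totalmass} and confirming that no visit is omitted.

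The main obstacle—such as it is—is the range lemma, i.e.\ proving $k_d\in[k_-,k_+]$ for all $d$; the remaining steps are formal consequences of $\hat{C}$ being a composition of functions. Once the window is shown to contain every position, existence of $\hat{C}(\ppath)$ is just the act of forming the tuple, and uniqueness is the statement that this tuple is determined coordinatewise by $\pi(\ppath)$, which is in turn determined by $\ppath$. I expect no hidden difficulty beyond bookkeeping the parity split in \Cref{E:Kbounds} when matching the optimizing depth $d=\tfrac{D+k^*}{2}$ to the integer floor.
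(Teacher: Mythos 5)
Your proof is correct and rests on the same core observation as the paper's: $\hat{C}(\ppath)$ is a deterministic function of $\ppath$ (a histogram of its position sequence), so existence and uniqueness are automatic once the construction is shown to be well-posed. The one substantive place you diverge is the index window. You take $k_-,k_+$ from \Cref{E:Kbounds} and prove a range lemma $k_d\in[k_-,k_+]$ by intersecting the forward reachability cone from the root ($|k_d|\le d$) with the backward cone from the terminal node ($|k^*-k_d|\le D-d$) and optimizing over $d$; this is a correct argument and the parity bookkeeping works out since $k_d$ is an integer bounded by $\tfrac{D+k^*}{2}$. The paper's proof instead sets $k_-:=\min_d k_d$ and $k_+:=\max_d k_d$ (the path's own extrema) and phrases the count via the counting measure $\nu_{\ppath}=\sum_{d}\delta_{k_d}$, which sidesteps the range lemma entirely. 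Your version is slightly more work but is arguably more faithful to \Cref{cardinality_tuple_formal_def}, which indexes the tuple over the window of \Cref{E:Kbounds} rather than over path-dependent extrema; the paper's shortcut buys brevity at the cost of a window that need not literally match the one used in the main text. Both arguments are sound, and your total-mass check recovering \Cref{E:totalmass} is a nice confirmation that no visit is lost.
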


\begin{proof}
For a path $\ppath$ with positions $(k_0,\dots,k_D)$, define the counting measure 
\[
\nu_{\ppath} := \sum_{d=0}^{D}\delta_{k_d}, 
\qquad 
c_k(\ppath):=\nu_{\ppath}(\{k\}).
\]
Let
\[
k_- := \min_{0\le d \le D} k_d, 
\qquad 
k_+ := \max_{0\le d \le D} k_d,
\]
so the path visits only positions in $[k_-,k_+]$.  
Then for any test function $v:\mathbb{Z}\to\mathbb{R}$ we have
\[
\sum_{d=0}^{D} v_{k_d}
\;=\;
\int v\,\mathrm{d}\nu_{\ppath}
\;=\;
\sum_{k=k_-}^{k_+} c_k(\ppath)\,v_k.
\]
Thus $\hat{C}(\ppath)=(c_k(\ppath))_{k=k_-}^{k_+}$ is exactly the (unique) histogram/cardinality tuple of $\ppath$. 
By definition of $\mathcal{C}_{D} := \hat{C}(\PathsD)$, every $\hat c \in \mathcal{C}_{D}$ arises from at least one path.
\end{proof}

\begin{proposition}[Reordering operator preserves classes]
Fix $\hat c \in \mathcal{C}_D$ and any $\ppath$ with $\hat{C}(\ppath)=\hat c$.  
Define a \emph{legal reordering} operator $R$ on $\ppath$ to be a bijection of indices $R:\{0,\dots,D\}\to\{0,\dots,D\}$ such that 
\[
\ppath^{R} := (k_{R^{-1}(0)},\,k_{R^{-1}(1)},\,\dots,\,k_{R^{-1}(D)})
\]
is again a valid path in $\TreeD$ (that is, $k_{R^{-1}(0)}=0$ and $|k_{R^{-1}(t)}-k_{R^{-1}(t-1)}|\in\{-1,0,1\}$ for all $t$ as written in \Cref{setup}).
\end{proposition}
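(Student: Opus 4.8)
The plan is to show that the reordered sequence $\ppath^{R}$ shares the cardinality tuple of $\ppath$, so that by the definition of the equivalence relation in \Cref{equivalence_relation} it lies in the same class $[\ppath]_\sim$. The essential point is that $\hat{C}$ is a multiset statistic: it records only \emph{how many times} each position is visited, and is therefore blind to the order in which those visits occur. So the heart of the matter is permutation-invariance of a counting measure.

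First I would note that, by hypothesis, a legal reordering $R$ produces a \emph{valid} path $\ppath^{R}\in\PathsD$ (this is built into the very definition of ``legal reordering''), so $\ppath^{R}$ admits a well-defined cardinality tuple by the forward proposition established just above. It then remains only to compare $\hat{C}(\ppath^{R})$ with $\hat{C}(\ppath)$ entrywise.

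Next I would compute the counts directly. For each integer position $k$,
\[
c_k(\ppath^{R})
= \bigl|\{\,t\in\{0,\dots,D\}:\ (\ppath^{R})_t = k\,\}\bigr|
= \bigl|\{\,t:\ k_{R^{-1}(t)} = k\,\}\bigr|.
\]
Because $R$ is a bijection of $\{0,\dots,D\}$, so is $R^{-1}$, and re-indexing the count via the substitution $d = R^{-1}(t)$ yields
\[
\bigl|\{\,t:\ k_{R^{-1}(t)} = k\,\}\bigr|
= \bigl|\{\,d:\ k_{d} = k\,\}\bigr|
= c_k(\ppath).
\]
Hence $\ppath$ and $\ppath^{R}$ visit the \emph{same} multiset of positions; in particular their extremal values $k_-=\min_d k_d$ and $k_+=\max_d k_d$ coincide, so the two tuples are indexed over the identical window $[k_-,k_+]$ and agree coordinate-by-coordinate. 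Therefore $\hat{C}(\ppath^{R}) = \hat{C}(\ppath) = \hat c$, and by \Cref{equivalence_relation} we conclude $\ppath^{R}\sim\ppath$, i.e.\ $\ppath^{R}\in[\ppath]_\sim$ as in \Cref{equivalence_class}.

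There is no serious obstacle here; the only point meriting care is bookkeeping, namely confirming that the index window $[k_-,k_+]$ is itself determined by the (shared) multiset, so that $\hat{C}(\ppath^{R})=\hat{C}(\ppath)$ is an equality of equally-sized tuples rather than merely of the underlying measures before padding. This closes the direction asserting that legal reorderings stay within a class; the complementary and genuinely harder direction—that \emph{any} two paths with identical cardinality tuples are joined by some legal reordering $R$ (i.e.\ that the fibers of $\hat{C}$ are exactly the reordering orbits)—is where the real combinatorial work lies, since one must exhibit a sequence of admissible $\{-1,0,+1\}$-preserving transpositions realizing the reindexing.
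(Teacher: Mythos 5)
Your argument is correct and is essentially the paper's own: the stated ``proposition'' is really the definition of a legal reordering, and the substantive claim $\hat C(\ppath^{R})=\hat C(\ppath)$ is proved in the paper's subsequent histogram-invariance lemma by exactly the observation you spell out, namely that reindexing the multiset $\{k_d\}$ by a bijection cannot change any multiplicity. Your added bookkeeping about the window $[k_-,k_+]$ being determined by the shared multiset, and your remark that the converse (fibers of $\hat C$ equal reordering orbits) is the genuinely harder direction, both match how the paper organizes the remaining propositions.
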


\begin{lemma}[Histogram invariance]
If $R$ is a legal reordering for $\ppath$, then we have $\hat{C}(\ppath^{R})=\hat{C}(\ppath)$.
\end{lemma}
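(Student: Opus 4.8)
The plan is to reduce the claim to the elementary fact that the cardinality tuple depends only on the \emph{multiset} of visited positions, and that a reordering merely permutes the entries of the position sequence without altering that multiset. First I would recall from the Forward-direction proposition that $\hat C(\ppath)$ is nothing but the counting measure $\nu_{\ppath}=\sum_{d=0}^{D}\delta_{k_d}$ read off level by level, via $c_k(\ppath)=\nu_{\ppath}(\{k\})$. It therefore suffices to establish the single identity $\nu_{\ppath^{R}}=\nu_{\ppath}$, after which evaluating at each singleton recovers the claim.

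Next I would unwind the definition of $\ppath^{R}$. By construction $(\ppath^{R})_d=k_{R^{-1}(d)}$, so
\[
  \nu_{\ppath^{R}}=\sum_{d=0}^{D}\delta_{k_{R^{-1}(d)}}.
\]
Because $R$ is a bijection of $\{0,1,\dots,D\}$, so is $R^{-1}$; reindexing the sum by $d'=R^{-1}(d)$ (equivalently $d=R(d')$) runs $d'$ exactly once over $\{0,\dots,D\}$ and yields $\sum_{d'=0}^{D}\delta_{k_{d'}}=\nu_{\ppath}$. Hence $\nu_{\ppath^{R}}=\nu_{\ppath}$, and reading both measures at each singleton $\{k\}$ gives $c_k(\ppath^{R})=c_k(\ppath)$ for every $k\in\mathbb{Z}$.

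To conclude that the tuples are \emph{literally} equal, and not merely equal after a re-alignment of indices, I would note that the index range $[k_-,k_+]$ in \Cref{cardinality_tuple_formal_def} is determined by $\min_d k_d$ and $\max_d k_d$, and these are unchanged since $\ppath$ and $\ppath^{R}$ visit exactly the same set of positions. Both tuples are thus indexed over the identical range and agree entrywise, so $\hat C(\ppath^{R})=\hat C(\ppath)$.

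I do not expect a genuine obstacle here. The only points requiring care are purely bookkeeping: correctly tracking $R$ versus $R^{-1}$ in the reindexing, and observing that the legality hypothesis on $R$ is invoked \emph{only} to guarantee that $\ppath^{R}\in\PathsD$, so that $\hat C(\ppath^{R})$ is a bona fide element of $\mathcal{C}_D$ and the statement is well-posed. The histogram invariance itself uses nothing beyond $R$ being a bijection of the index set, which is why the result is really a statement about permutation-invariance of the counting measure.
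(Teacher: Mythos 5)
Your proposal is correct and follows essentially the same route as the paper, which simply observes that reindexing the multiset $\{k_d\}$ by a bijection cannot change the multiplicity at any level $k$. You merely expand this one-line argument with the counting-measure formalism already used in the Forward-direction proposition and add the (harmless, slightly pedantic) check that the index range $[k_-,k_+]$ is unchanged.
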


\begin{proof}
Reindexing the multiset $\{k_d : 0\le d\le D\}$ by a bijection does not change multiplicities at any level $k$, hence the histogram counts are preserved.
\end{proof}

\begin{proposition}[Classes are exactly reordering orbits]
For any $\ppath \in \PathsD$,
\[
\{\ppath^{R} : R \text{ legal reordering for }\ppath\}
\;=\;\{\ppath' \in \PathsD : \hat{C}(\ppath')=\hat{C}(\ppath)\}.
\]
\end{proposition}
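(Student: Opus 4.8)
The plan is to prove the set equality by establishing the two inclusions separately; the forward inclusion is essentially free from the preceding lemma, while the reverse inclusion carries what little combinatorial content there is.

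For the inclusion $\{\ppath^{R}\}\subseteq\{\ppath':\hat{C}(\ppath')=\hat{C}(\ppath)\}$, I would argue directly from the definitions. Let $R$ be any legal reordering for $\ppath$. By the definition of legal reordering, $\ppath^{R}$ is a valid path in $\TreeD$, so $\ppath^{R}\in\PathsD$; and by the Histogram invariance lemma just established, $\hat{C}(\ppath^{R})=\hat{C}(\ppath)$. Hence $\ppath^{R}$ lies in the right-hand set, and this direction requires no further work.

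For the reverse inclusion, fix $\ppath'\in\PathsD$ with $\hat{C}(\ppath')=\hat{C}(\ppath)$, and write the position sequences as $(k_0,\dots,k_D)$ and $(k'_0,\dots,k'_D)$. The equality of cardinality tuples says precisely that these two sequences induce the same multiset of positions, i.e.\ for every level $k$ the value $k$ occurs with the same multiplicity $c_k$ in both. My plan is then to build an explicit index bijection $R:\{0,\dots,D\}\to\{0,\dots,D\}$ realizing $\ppath'=\ppath^{R}$: for each level $k$, enumerate the $c_k$ indices at which $\ppath$ takes value $k$ and the $c_k$ indices at which $\ppath'$ takes value $k$, and pair them off in any fixed order; assembling these local pairings over all $k$ yields a single permutation $R^{-1}$ of $\{0,\dots,D\}$ with $k'_t=k_{R^{-1}(t)}$ for all $t$, so that $\ppath^{R}=\ppath'$.

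It then remains to certify that this $R$ is a \emph{legal} reordering, and here the reverse inclusion becomes almost tautological: legality demands only that $\ppath^{R}$ be a valid path, but $\ppath^{R}=\ppath'$ is valid by hypothesis (in particular $k'_0=0$ forces $k_{R^{-1}(0)}=0$, and all increments lie in $\{-1,0,1\}$). Hence $R$ is legal and $\ppath'=\ppath^{R}$ belongs to the left-hand set. The only genuine obstacle I anticipate is the bookkeeping in this reverse direction, namely verifying that matching indices level-by-level really does assemble into a well-defined global bijection of $\{0,\dots,D\}$; this uses that the two sequences have equal length $D+1$ and equal multiplicities at every level, so the per-level index sets partition $\{0,\dots,D\}$ identically on both sides. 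Once that partition argument is in hand, the ``legality'' of $R$ costs nothing, precisely because validity of the target path $\ppath'$ is assumed from the outset.
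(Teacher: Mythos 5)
Your proposal is correct and follows essentially the same route as the paper's own proof: the forward inclusion is immediate from the histogram-invariance lemma, and the reverse inclusion constructs an index bijection by matching equal occurrences of each level and then observes that legality is automatic because the target path $\ppath'$ is valid by hypothesis. The only difference is that you spell out the level-by-level assembly of the bijection more explicitly than the paper does, which is a harmless (and arguably welcome) elaboration.
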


\begin{proof}
($\subseteq$) If $\ppath'=\ppath^R$ with $R$ a legal reordering, then by the lemma $\hat C(\ppath')=\hat C(\ppath)$.  
($\supseteq$) Conversely, if $\ppath'$ has $\hat C(\ppath')=\hat C(\ppath)$, then the position sequences $(k_d)$ and $(k'_d)$ have the same multiplicities. Thus there exists a bijection $R$ between time indices matching equal occurrences of each level. By assumption $\ppath'$ is a valid path, so this $R$ is a legal reordering. Hence $\ppath'\in\{\ppath^R\}$.
\end{proof}

\begin{corollary}[Partition of the tree]
The sets
\[
[\hat c] := \{\ppath \in \PathsD : \hat{C}(\ppath)=\hat c\}, \qquad \hat c\in\mathcal{C}_D,
\]
are pairwise disjoint and their union equals $\PathsD$.  
Thus each $\hat c$ generates exactly the class of paths with that histogram, and the union of all such classes reconstructs the entire tree without over- or undercounting.
\end{corollary}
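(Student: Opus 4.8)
The plan is to recognize this Corollary as the standard fact that the fibers of a well-defined map partition its domain, specialized to the cardinality map $\hat{C}:\PathsD\to\mathcal{C}_D$. Almost all of the substantive work is already in place: the Forward-direction Proposition of this appendix guarantees that $\hat{C}$ is single-valued and total (each $\ppath$ determines one and only one tuple $\hat{C}(\ppath)$), and $\mathcal{C}_D\Def\hat{C}(\PathsD)$ is by construction its image. What remains is the routine verification of three properties of the classes $[\hat c]=\hat{C}^{-1}(\hat c)$: pairwise disjointness, covering of $\PathsD$, and nonemptiness of each class. I will carry these out in turn and then connect them to the reconstruction claim.

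First I would establish pairwise disjointness. Suppose $\ppath\in[\hat c]\cap[\hat c']$ for some $\hat c,\hat c'\in\mathcal{C}_D$. By the defining condition of the classes we have $\hat c=\hat{C}(\ppath)=\hat c'$, so two classes with a common element carry identical labels and hence coincide, while distinct labels force disjoint classes. This is precisely the usual equivalence-class dichotomy underlying the relation $\sim$ of \Cref{equivalence_relation}, whose class is recorded in \Cref{equivalence_class}; the single-valuedness of $\hat{C}$ is exactly what makes that relation well posed here.

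Next I would handle covering and nonemptiness together. For covering, take an arbitrary $\ppath\in\PathsD$; by the Forward-direction Proposition its tuple $\hat{C}(\ppath)$ is a genuine element of the image $\mathcal{C}_D$, so $\ppath\in[\hat{C}(\ppath)]$, giving $\PathsD=\bigcup_{\hat c\in\mathcal{C}_D}[\hat c]$. For nonemptiness, each $\hat c\in\mathcal{C}_D$ lies in the image of $\hat{C}$ by definition, so at least one path realizes it; the Proposition identifying classes with legal-reordering orbits then upgrades this to the statement that $[\hat c]$ is \emph{exactly} the reordering orbit of any such representative, which is the content of ``each $\hat c$ generates exactly the class of paths with that histogram.'' Disjointness, covering, and nonemptiness together are the definition of a partition.

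Finally, to close the loop with the reconstruction claim, I would invoke the informal identity $\TreeD=\EuScript{C}(\PathsD)$ from the introduction: the tree is the union of all its paths. Since the classes partition $\PathsD$, selecting one representative per class and re-expanding each via its reordering orbit recovers every path exactly once, so the full structure is rebuilt without over- or undercounting. I do not anticipate a genuine obstacle here, since the difficult steps (well-definedness and the orbit $=$ fiber identification) were discharged by the preceding propositions. The only real care needed is bookkeeping: keeping the \emph{partition of the path set} $\PathsD$ conceptually distinct from any partition of $\VerticesD$ or $\EdgesD$, and making precise that ``reconstruction'' means the disjoint union of the path-classes equals $\PathsD$, not a direct statement about vertices or edges.
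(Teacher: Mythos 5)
Your proposal is correct and matches the argument the paper intends: the corollary is stated as an immediate consequence of the preceding propositions (well-definedness of $\hat{C}$, $\mathcal{C}_D$ as its image, and the identification of classes with reordering orbits), and your fiber-partition argument — disjointness from single-valuedness, covering from totality, nonemptiness from surjectivity onto the image — is exactly that reasoning made explicit. Your closing remark distinguishing the partition of $\PathsD$ from any partition of $\VerticesD$ or $\EdgesD$ is also consistent with the paper, which defers vertex/edge coverage to a separate theorem.
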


\begin{theorem}[Minimality of histogram classes]\label{thm:minimality}
Let $\sim$ be equality of histograms: $\ppath\sim\ppath'$ iff $\hat C(\ppath)=\hat C(\ppath')$.
A partition $\Pi$ of $\PathsD$ is called \emph{reordering-invariant} if whenever $\ppath\in B\in\Pi$
and $\ppath'$ is obtained from $\ppath$ by a legal reordering operator (as in \Cref{setup}),
then $\ppath'\in B$.
Then the histogram partition $\{[\hat c]\}_{\hat c\in\mathcal{C}_D}$ is the coarsest reordering-invariant partition:
every such $\Pi$ refines $\{[\hat c]\}$.
\end{theorem}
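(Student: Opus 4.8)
The plan is to prove the asserted extremal property by leveraging the three structural results already assembled in this appendix: the forward proposition that every path carries a unique histogram $\hat C(\ppath)$, the \emph{Histogram invariance} lemma (legal reorderings preserve $\hat C$), and the proposition \emph{Classes are exactly reordering orbits}, which identifies each histogram class $[\hat c]$ with a single reordering orbit. Writing $\mathrm{Orb}(\ppath)\Def\{\ppath^{R}: R \text{ legal for }\ppath\}$, that last proposition gives the identity $\mathrm{Orb}(\ppath)=[\hat C(\ppath)]$. I would first unpack ``every reordering-invariant $\Pi$ refines $\{[\hat c]\}$'' to the block-level containment statement: for each block $B\in\Pi$ there is a single tuple $\hat c\in\mathcal{C}_D$ with $B\subseteq[\hat c]$. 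Establishing this containment block-by-block, and then invoking the \emph{Partition of the tree} corollary (the $[\hat c]$ are pairwise disjoint and exhaust $\PathsD$), assembles into the global refinement.

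Fix $B\in\Pi$ and a representative $\ppath\in B$, and set $\hat c\Def\hat C(\ppath)$. The first half is immediate: reordering-invariance forces $\ppath^{R}\in B$ for every legal $R$, so $\mathrm{Orb}(\ppath)\subseteq B$, and by the orbit identity $[\hat c]\subseteq B$. The content of the theorem is the \emph{reverse} containment $B\subseteq[\hat c]$, equivalently that $\hat C$ is constant on $B$: no second path $\ppath'\in B$ may have $\hat C(\ppath')\neq\hat c$. I would approach this by contraposition — if such a $\ppath'$ existed, the same invariance argument applied to $\ppath'$ would force $[\hat c']\subseteq B$ with $\hat c'\Def\hat C(\ppath')\neq\hat c$, so $B$ would properly contain the disjoint union $[\hat c]\cup[\hat c']$ — and then show this configuration is incompatible with the standing hypotheses on $\Pi$.

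The main obstacle is precisely this reverse containment, and it is the crux of the whole statement. Reordering-invariance, by itself, only guarantees that each block is a \emph{union} of orbits (hence of histogram classes), so some additional constraint must be what prevents a block from straddling two classes; this is exactly the sense in which $\{[\hat c]\}$ is minimal among the admissible partitions. The careful step to get right is therefore the precise formulation of ``coarsest'': I would make explicit that the partitions under consideration are those reordering-invariant $\Pi$ that still \emph{separate realizable histograms} (never place paths $\ppath,\ppath'$ with $\hat C(\ppath)\neq\hat C(\ppath')$ in a common block), and show that this separation hypothesis is exactly what upgrades $[\hat c]\subseteq B$ to the equality $B=[\hat c]$. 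I expect pinning down and justifying this separation condition to be the delicate part, since the hypothesis ``reordering-invariant'' alone admits strictly coarser partitions (e.g.\ the one-block partition), so the proof must isolate the feature that singles out $\{[\hat c]\}$ before the block-wise containments can be glued, via the \emph{Partition of the tree} corollary, into the global conclusion.
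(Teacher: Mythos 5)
Your structural diagnosis is correct, and it exposes a real defect: under the standard meaning of ``refines,'' the theorem as stated is false, exactly for the reason you give --- the one-block partition $\{\PathsD\}$ is reordering-invariant yet (for $D\ge 1$, when several histogram classes exist) is contained in no single class, so no argument can deliver $B\subseteq[\hat c]$ from invariance alone. What you call the ``crux'' is therefore not a provable step. Moreover, your proposed repair --- restricting attention to partitions that ``separate realizable histograms'' --- is circular: that separation hypothesis is verbatim the conclusion that each block meets at most one class, so with it the theorem becomes a tautology in which reordering-invariance does no work at all.

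The paper's own proof resolves the tension differently, and with inverted terminology: it proves precisely the fact you establish in your ``immediate first half,'' namely that a block of an invariant partition contains the full orbit --- hence, by the orbit-class identity, the full class --- of each of its members, so every block is a union of whole histogram classes. That is the statement that $\{[\hat c]\}_{\hat c\in\mathcal{C}_{D}}$ \emph{refines} every reordering-invariant $\Pi$, i.e.\ the histogram partition is the \emph{finest} reordering-invariant partition, not the coarsest; the paper's closing inference (``no strictly coarser reordering-invariant partition exists'') should read ``no strictly finer.'' So the intended, correct reading of the theorem is the dual of the one you attacked, and under that reading your containment $[\hat c]\subseteq B$, applied block by block, already constitutes the entire proof; the remainder of your proposal (the contraposition and the separation hypothesis) should be deleted rather than completed.
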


\begin{proof}
Legal reorderings preserve histograms, so each orbit under them is contained in some $[\hat c]$.
Hence any reordering-invariant partition can only join together whole histogram classes.
Therefore no strictly coarser reordering-invariant partition exists.
\end{proof}

\begin{theorem}[One representative per class covers all vertices and edges]\label{thm:coverage}
For each vertex $(k,d)$ with $|k|\le d\le D$, let $H(k,d)\in\PathsD$ be the \emph{first-hit} path
that reaches level $k$ for the first time at depth $d$ via the lexicographically minimal feasible
prefix, and completes to depth $D$ lexicographically minimally (all moves obeying \Cref{setup}).
Write $\hat c^{\,k,d}:=\hat C(H(k,d))$. Define a selection $R:\mathcal{C}_D\to\PathsD$ by:
for each class $[\hat c]$, choose the lexicographically smallest $(k,d)$ with $\hat c=\hat c^{\,k,d}$
and set $R(\hat c):=H(k,d)$. Then
\[
\bigcup_{\hat c\in\mathcal{C}_D} V\bigl(R(\hat c)\bigr)=V(\TreeD)
\qquad\text{and}\qquad
\bigcup_{\hat c\in\mathcal{C}_D} E\bigl(R(\hat c)\bigr)=E(\TreeD).
\]
\end{theorem}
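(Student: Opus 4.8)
The plan is to establish the two set identities by their nontrivial inclusions. The inclusions $\bigcup_{\hat c} V(R(\hat c)) \subseteq V(\TreeD)$ and $\bigcup_{\hat c} E(R(\hat c)) \subseteq E(\TreeD)$ are immediate, since every $R(\hat c)$ is a genuine path in $\TreeD$ and so contributes only admissible vertices and edges; all the content lies in the reverse (coverage) inclusions. First I would dispose of the negative half-plane: by \Cref{thm:reflection-symmetry} the reflection $\sigma:(k,d)\mapsto(-k,d)$ is a graph automorphism fixing the root and permuting equivalence classes via $\rho$, so it suffices to cover every vertex and edge with $k\ge 0$; the $k<0$ cases follow by applying $\sigma$ to the representatives already used. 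This reduces everything to the nonnegative side, where the first-hit construction together with the seed-and-march machinery of \Cref{recursiveless} applies directly.

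For vertex coverage I would argue in two steps. Step one is the trivial observation that $H(k,d)$ passes through $(k,d)$ by construction, so the full first-hit family $\{H(k,d):(k,d)\in\VerticesD\}$ covers $\VerticesD$; for levels that are unavoidably revisited, such as the vertices $(0,d)$, I would read ``first hit'' as ``canonical $\lex$-minimal visit,'' so that $H(0,d)$ remains well defined and still contains $(0,d)$. Step two — the substance — is to show that collapsing to one representative per histogram class drops no vertex: for a fixed class key $\hat c$ I would prove that the selected representative $R(\hat c)=H(k_0,d_0)$, chosen at the $\lex$-smallest admissible vertex, simultaneously realizes every $(k,d)$ with $\hat c^{k,d}=\hat c$. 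The natural route is a monotonicity lemma asserting that, among first-hit paths sharing a common occupation profile, the $\lex$-minimal completion occupies at each depth the infimum of the positions compatible with that profile, so that its depth-indexed vertex set dominates that of every sibling in the class.

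For edge coverage I would proceed analogously. Given an edge $\bigl((k,d-1),(k+s,d)\bigr)\in\EdgesD$ with $s\in\{-1,0,1\}$, I would exhibit a first-hit path traversing it: ascending edges ($s=+1$) appear along the $\lex$-minimal prefix of a suitable $H(k+1,d)$, while flat and descending edges ($s\in\{0,-1\}$) appear along the deterministic $\lex$-minimal completion tail of $H(k,d-1)$ or $H(k-1,d-1)$. Having shown the first-hit family covers $\EdgesD$, I would again invoke the same monotonicity lemma to transfer coverage from each $H(k,d)$ to the class representative $R(\hat c^{k,d})$, so that no traversed edge is lost in the deduplication.

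The hard part will be exactly this deduplication step: proving that one representative per equivalence class still meets every required vertex and edge. The difficulty is that $\hat C$ need not be injective on the first-hit family — distinct first-hit paths (one reaching a level early and descending, another reaching it late) can carry the \emph{same} occupation profile yet occupy different $(k,d)$ slots, so a carelessly chosen representative may miss a vertex that a sibling in its class covered. Making the monotonicity lemma precise, and, should it fail for the $\lex$-smallest choice, strengthening the selection rule to pick a per-class representative whose depth-indexed position sequence dominates the profile pointwise, is therefore the crux. I would discharge it by induction on depth $d$, using the extremal seed \Cref{E:highestpath} and the feasibility constraints of \Cref{rem:mass_constraints} to control how a fixed profile is distributed across depths by the $\lex$-minimal prefix and completion.
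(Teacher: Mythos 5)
You have correctly located the crux — whether collapsing the first‑hit family to one representative per histogram class can lose vertices or edges — and this is precisely the point the paper's own proof glosses over: the paper simply writes $R(\hat c^{\,k,d})=H(k,d)$ for \emph{every} $(k,d)$ (ignoring that the lex‑smallest $(k',d')$ realizing the same histogram may differ from $(k,d)$), and for edges it argues that the single lex‑minimal outgoing step of each $H(k,d-1)$ sweeps out all of $\EdgesD$, which already fails by counting, since each interior vertex has three outgoing edges but contributes only one this way. However, your proposal does not close the gap: everything is delegated to an unproved ``monotonicity lemma,'' and you explicitly leave open whether it holds. As written, this is a proof outline with the decisive step missing, not a proof.

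The deeper problem is that the missing lemma cannot be supplied, because the deduplication step genuinely fails for edges. Take $D=2$ and the histogram $\hat c$ with $c_{-1}=1$, $c_{0}=2$. Its class consists of exactly the two paths with position sequences $(0,-1,0)$ and $(0,0,-1)$. The edge $\bigl((-1,1),(0,2)\bigr)$ lies on $(0,-1,0)$ and on no other path of the tree, while the edge $\bigl((0,1),(-1,2)\bigr)$ lies on $(0,0,-1)$ and on no other path; since both paths belong to the same class, any single representative omits one of these two edges, and no representative of any other class can contain it. Hence $\bigcup_{\hat c} E(R(\hat c))\subsetneq \EdgesD$ for \emph{every} choice of one representative per class, so your fallback plan of ``strengthening the selection rule'' cannot rescue the edge statement either. (The vertex statement in this example survives only under a charitable reading of the degenerate objects you also flag, namely $H(0,d)$ for $d>0$ and classes such as $c_0=3$ that are not of the form $\hat c^{\,k,d}$ and on which $R$ is undefined.) So your suspicion about the lex‑smallest choice is well founded, but the right conclusion is that the edge half of the theorem is false as stated, not that a sharper representative selection or an induction on depth will establish it.
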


\begin{proof}
\emph{Vertices} For any $(k,d)$, $H(k,d)$ visits $(k,d)$ by construction; the class $[\hat c^{\,k,d}]$
selects $R(\hat c^{\,k,d})=H(k,d)$, so $(k,d)\in V(R(\hat c^{\,k,d}))$. As $(k,d)$ was arbitrary,
all vertices are covered.

\emph{Edges} Fix $e=((k,d-1),(k+s,d))$ with $s\in\{-1,0,1\}$. The vertex $(k,d-1)$ is covered above,
so $R(\hat c^{\,k,d-1})=H(k,d-1)$ visits it. In $H(k,d-1)$, the step from depth $d-1$ to $d$
is the lexicographically minimal feasible move out of $(k,d-1)$, realizing one of its incident edges.
As $(k,d-1)$ varies over all vertices, each edge of $\TreeD$ occurs in some $H(k,d-1)$ and hence in some $R(\hat c)$.
This proves the second equality.
\end{proof}

\end{document}